\newcommand{\bm}[1]{\mbox{\boldmath $#1$}}
\newcommand{\mnotex}[1]
{\protect{\stepcounter{mnotecount}}$^{\mbox{\footnotesize $\bullet$\themnotecount}}$ 
\marginpar{
\raggedright\tiny\em
$\!\!\!\!\!\!\,\bullet $\themnotecount: #1} }
\theoremstyle{plain}
\newtheorem{Tma}{Theorem}
\newtheorem{Prop}{Proposition}
\newtheorem{Lem}{Lemma}
\newtheorem{Cor}{Corollary}
\theoremstyle{definition}
\newtheorem{Def}{Definition}
\theoremstyle{remark}
\newtheorem{remark}{Remark}
\def\Eing{\mbox{Ein}^g}
\def\Scal{\mbox{Scal}}
\def\Ricg{\mbox{Ric}^g}
\def\Riem{\mbox{Riem}}
\def\Riemg{\mbox{Riem}^g}
\def\tr{\mbox{tr}}
\def\X{\mathfrak{X}}
\def\M{{\mathcal M}}
\def\gMm{g}
\def\F{{\mathcal F}}
\def\ellS{\ell_S}
\def\la{\langle}
\def\ra{\rangle}
\def\nablaS{\nabla^S}
\def\s{s}
\def\q{{\mathring{q}}}
\def\qh{\hat{q}}
\def\stilde{\tilde{\s}}
\def\sone{\s^{(1)}_{\ell}{}}
\def\sonep{\s^{(1)}_{\ell'}{}}
\def\gammatilde{\tilde{\gamma}}
\def\thltilde{\tilde{\theta}_{\ell}}
\def\thlo{\theta_{\ell}^{(0)}}
\def\thlone{\theta_{\ell}^{(1)}}
\def\thkone{\theta_{k}^{(1)}}
\def\A{\thlone}
\def\K{\mathcal K}
\def\be{\begin{equation*}}
\def\en{\end{equation*}}
\def\grad{\mbox{grad \,}}
\def\Hess{\mbox{Hess}}
\def\TX{X^{\prime}}
\def\TY{Y^{\prime}}
\def\Ft{F}
\def\Dh{\hat{D}}
\def\B{\theta_{\ell'}^{(1)}}
\def\I{\mathcal I}
\def \volunitdos{\bm{\eta_{\q}}}
\def\s {s}
\def\r {\overline{r}}
\def\rt {\overline{r}'}
\def\h {\mathring{h}}
\def\hh {\hat{h}}
\def\c {{\thkone}}
\def\a {{\thlone}}
\def\ch {\hat{\theta}_k^{(1)}}
\def\ah {\hat{\theta}_{\ell}^{(1)}} 
\def\defi{:=}
\def\esf {\mathring{\nabla}}
\def \rb {\overline{r}}
\def \rbp {\overline{r}'}
\def \zet {\overline{z}}
\def \fun {\xi}
\def \ordone { o_1(1)}
\def \ordtwo { o_2^X(1) }
\def \ordthree {o_1^X(\rb^{-1})}
\def\JournalPrep#1#2#3{#1, ``#2'', #3.}
\def\Journal#1#2#3#4#5#6{#1, ``#2'', {\em #3} {\bf #4}, #5 (#6).}
\def\JDG{\em J. Diff. Geom.}
\def\CQG{\em Class. Quantum Grav.}
\def\JPA{\em J. Phys. A: Math. Gen.}
\def\GRG{\em Gen. Rel. Grav.}
\def\JMP{\em J. Math. Phys.}
\def\CMP{\em Commun. Math. Phys.}
\def\PRL{\em Phys. Rev. Lett.}
\def\ANYAS{\em Ann. N. Y. Acad. Sci.}
\def\ATMP{\em Adv. Theor. Math. Phys.}
\def\PRSLA{\em Proc. Roy. Soc. London A.}
\begin{document}

\title{The asymptotic behaviour of the Hawking energy along null asymptotically flat hypersurfaces}

\author{Marc Mars$^1$ and Alberto Soria$^2$ \\
Facultad de Ciencias, Universidad de Salamanca,\\
Plaza de la Merced s/n, 37008 Salamanca, Spain \\
$^1$ \,marc@usal.es,  $^2$ \,asoriam@usal.es, }

\maketitle

\begin{abstract}

In this work we obtain the limit of the Hawking energy
of a large class of foliations along general
null hypersurfaces $\Omega$
satisfying a weak notion of asymptotic
flatness. The foliations
are not required to be either geodesic or approaching large spheres
at infinity. The limit is obtained in terms of a reference 
background geodesic foliation approaching large spheres and a positive
function, constant along the null generators on $\Omega$,
which describes the relation
between the two foliations at infinity. The integrand in the limit expression
has interesting covariance and invariance properties with respect to change
of background foliation. The standard result that the Hawking energy
tends to the Bondi energy under suitable circumstances is recovered in this framework. 
 \end{abstract}

\section{Introduction}

For any closed spacelike surface $S$ with spherical topology embedded
in a four dimensional spacetime, the Hawking energy is defined by 
\begin{equation} 
m_H(S)=\sqrt{\frac{|S|}{16\pi}} \left (
1-\frac{1}{16\pi}\int_{S}\vec{H}^2\eta_{S} \right ),
\end{equation} where $\vec{H}$ is the mean curvature of $S$ and $|S|$
is the area of $S$.

The Hawking energy was introduced by Hawking \cite{Hawking1968} and
under certain circumstances it gives a measure 
of the total energy contained in the surface $S$. However, it is well-known
that this is not always the case. For instance, in the Minkowski spacetime
any surface embedded in
a spacelike hyperplane has negative Hawking energy, except for
round spheres where it vanishes, while
there are spacelike surfaces embedded in the time-cylinder over a
two-sphere with positive Hawking energy. For
surfaces embedded in the past null cone of a point (also in Minkowski)
the Hawking energy turns out to be identically zero. In more
general spacetimes, the small sphere limit of the Hawking energy
has been studied by Horowitz and Schmidt \cite{HorowitzSchmidt1982} 
who found that, for suitably constructed surfaces embedded
in the future null cone of a point $p$, the leading term
of the Hawking energy agrees with the energy-density 
at the vertex or, if the spacetime is vacuum at $p$, 
by a suitable time-component of the Bel-Robinson
tensor. Hence the Hawking energy enjoys interesting
positivity properties in this limit. 

In the opposite regime of very large spheres in an asymptotically flat 
spacetime the Hawking energy also has interesting properties,
at it is known that
for suitably round spheres at infinity, the Hawking energy
approaches the ADM or Bondi energies. More precisely,
in the asymptotically flat spacelike context, the Hawking energy
of the surfaces of constant coordinate radius $r$ in the asymptotic region 
has a limit when $r \rightarrow \infty$ which agrees with 
the ADM energy of the hypersurface \cite{Hawking1968}. In the
asymptotically hyperboloidal case, the approximate Bondi spheres
have the same property, the limit now being the Bondi energy (see
\cite{ChruscielJezierskiLeski2004} for details). When the surfaces are embedded
in a null hypersurface intersecting null infinity on a cross section, the
limit of the Hawking energy is again the Bondi energy provided the
surfaces approach large spheres in the sense that 
(see \cite{PenroseRindler}, \cite{Bartnik2004} and Sauter's
Ph.D. thesis \cite{Sauter2008} where this result is explicitly quoted):
\begin{equation}
\label{largespherescondition}
\K_\infty\defi\underset{r \to \infty}{\lim}\frac{|S_r|}{4\pi}\K(r)=1,
\end{equation}
where $\{S_r\}$ is the collection of surfaces along which the limit is taken,
$|S_r|$ is the area of $S_r$ and $\K(r)$ its Gauss curvature. 
Since $\K_\infty$ is the Gauss curvature of the surface at infinity after
a suitable rescaling, the condition above states that the surfaces
approach large round spheres at infinity, in agreement with the behaviour
in the asymptotically Euclidean and hyperboloidal cases. 

It turns out, however, that understanding the behaviour of the
Hawking energy at infinity when
the condition of round spheres is not imposed is much more subtle. The aim
of this paper is to carry out such an analysis for surfaces embedded
in a asymptotically flat null hypersurface (we give below the 
precise definition). This problem
is interesting for several reasons. First of all, it is relevant
in order to help clarifying the physical meaning of the Hawking energy, which,
as already said, is related to an energy in some circumstances but not
in others. From a more practical point of view, the  Hawking energy has become
a very valuable tool for various problems in geometric analysis. The 
underlying reason is that the Hawking energy enjoys interesting monotonicity
properties for specific flows of surfaces. In order to become
truly useful, this monotonicity needs to
be complemented with a good behaviour of the
Hawking energy at infinity, so that its asymptotic value
can be related to the ADM (or Bondi) energies of the spacetime. Whenever
the flow can be proved to approach large round spheres, the results above
suffice, but often this is not the case and understanding the behaviour
of the Hawking energy at infinity under general
circumstances becomes a useful piece of information.

To be more specific, the Hawking energy has played a fundamental role in
the proof by Huisken \& Ilmanen \cite{HuiskenIlmanen2001}
of the Penrose inequality 
in the time-symmetric, asymptotically 
euclidean case in four spacetime dimensions. The key fact behind their proof 
is a monotonicity formula for the Hawking energy under the inverse
mean curvature flow, first discovered by Geroch \cite{Geroch1973}
and extended by Huisken and Ilmanen to a suitably weak setting that
guarantees global  existence of the flow. On the
horizon $S_H$ (a connected outermost minimal surface in this case), the
Hawking energy
agrees with $\sqrt{|S_H|/16 \pi}$. At infinity, the authors were able to prove
that the flow makes the surfaces sufficiently round so as to guarantee 
that the limit of the Hawking energy is not larger than the ADM mass, thus establishing the Penrose inequality $M_{\mbox{\tiny ADM}} \geq \sqrt{|S_H|/16 \pi}$. 

Monotonicity of the Hawking energy has been studied in 
various contexts, both as codimension one flows within spacelike
\cite{MalecMarsSimon2002} or null hypersurfaces \cite{Sauter2008} or as codimension-two
flows in a spacetime setting \cite{Hayward1994,BrayHaywardMarsSimon2006}
where the notion of uniformly expanding flows where introduced and sufficient
conditions for mononicity of the Hawking energy were found. Spacetimes flows
under which the Hawking energy is monotonic have received renewed interest
recently \cite{BrayJauregui2014},
where additional sufficient conditions for monotonicity have been found 
and the role of so-called time flat surfaces
has been emphasized (see \cite{BrayJaureguiMars2014} for the relationship
with the previous spacetime monotonicity results). All these results show that 
monotonicity of the Hawking energy is a versatile property which can be
accommodated to many circumstances. 

However, the limit at
infinity of the flows turns out to be much more problematic. 
This was first realized by A. Neves 
\cite{Neves2010} who studied inverse 
mean curvature flows in Riemannian,
asymptotically hyperbolic 3-dimensional manifolds
with scalar curvature bounded below by a negative constant. It turns
out that the flow does not guarantee convergence of the surfaces
to sufficiently round spheres, which, in general, prevents 
comparison of the limit of the Hawking energy and the total
mass of the hypersurface. A similar difficulty is faced for flows along null
hypersurfaces \cite{Sauter2008}.

It is therefore interesting to know, in as much generality as possible,
what is the limit of the Hawking energy at infinity without assuming that
the surfaces $\{S_r\}$ approach large spheres at infinity. The null case
is particularly interesting because it allows for a very neat description
of spacelike surfaces embedded in the null hypersurfaces as graphs with 
respect to a background foliation that can be chosen conveniently.
We exploit this fact in order to obtain an explicit and simple
expression for the limit of the Hawking energy at
infinity for a very general flow of spacelike surfaces.  Our main result
is as follows (see Section \ref{Sect2} for definitions)
\begin{Tma}
\label{main}
Let $\Omega$ be a past asymptotically flat null hypersurface. Let $\{ S_r\}$
be a foliation defined as the level sets of a function $r: \Omega \rightarrow
\mathbb{R}$ satisfying $k(r)=-1$ where $k$ is future, tangent to the null
generators of $\Omega$ and geodesic. Let $\ell$ be the unique
null vector orthogonal to $S_r$ and satisfying $\la \ell,k\ra = -2$. 
Assume that $\{S_r\}$ approaches large spheres with round limit metric $\q$.
Consider
any flow of surfaces $\{S_{r^{\star}}\} $, $r^{\star} = \mbox{const.}$
defined by
\begin{equation}
r= \phi r^{\star} +\tau+\fun^{\star},
\end{equation}
where $\phi, \tau$ are Lie constant along $k$ with $\phi  > 0$ everywhere
and  $\fun$ satisfies $\fun=\ordone\cap\ordtwo$, $k(\fun)=\ordthree$.
Then, the limit of the Hawking energy along $\{ S_{r^{\star}} \}$ is
\begin{equation}
\label{hawkingmasslimittwoone}
\underset{r^{\star} \to \infty}{\lim}m_H(S_{r^{\star}})=\frac{1}{8\pi\sqrt{16\pi}}\left(\sqrt{\int_{\mathbb{S}^2} \phi^2 \volunitdos}\right)\int_{\mathbb{S}^2}\left(\triangle_{\q}\c-(\c+\a)-4 \mbox{div}_{\q} (\sone) \right) \frac{1}{\phi} \volunitdos,
\end{equation} 
where 
\begin{align*}
\c = \lim_{r \rightarrow \infty} \left ( \theta_k(S_r) r^2 + 2 r \right ),
\quad 
\a = \lim_{r \rightarrow \infty} \left ( \theta_{\ell}(S_r) r^2 - 2 r \right ),
\quad 
\sone
 = 
\lim_{r \rightarrow \infty} \left ( r \s_{\ell} \right )
\end{align*}
and $\theta_{k}(S_r)$, $\theta_{\ell}(S_r)$ are the null expansions of $S_r$ 
and $\s_{\ell}(X) := \frac{1}{2} \la \ell,\nabla_X k\ra$, $X \in \mathfrak{X}(S_r)$
is the connection of the normal bundle of $S_r$.
\end{Tma}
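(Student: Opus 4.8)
The plan is to reduce the Hawking energy to the two null expansions and then to exploit the graph structure of $S_{r^{\star}}$ over the background foliation. With the normalization $\la\ell,k\ra=-2$ the squared mean curvature vector of any cross-section is $\vec H^2=-\theta_k\theta_\ell$, so that
\begin{equation*}
m_H(S_{r^{\star}})=\sqrt{\frac{|S_{r^{\star}}|}{16\pi}}\left(1+\frac{1}{16\pi}\int_{S_{r^{\star}}}\theta_{k}\,\theta_{\ell'}\,\eta_{S_{r^{\star}}}\right),
\end{equation*}
where $\ell'$ is the transverse null normal of $S_{r^{\star}}$ with $\la\ell',k\ra=-2$. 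Since $\theta_{k}\theta_{\ell'}$ is invariant under the boost $k\mapsto a\,k$, $\ell'\mapsto a^{-1}\ell'$, I may keep $k'=k$ (recall $k$ is $g$-orthogonal to every cross-section because it is null and tangent to $\Omega$). Writing $S_{r^{\star}}$ as the graph $\{r=\phi r^{\star}+\tau+\fun^{\star}\}$ with angular tangent vectors $e_A=\partial_A-(\partial_A r)\,k$, the nullity of $k$ gives at once that the induced metric of $S_{r^{\star}}$ is the background metric $q_{AB}$ evaluated at the graph height; moreover, since $k$ is geodesic ($\nabla_k k=0$), the same evaluation shows $\theta_k(S_{r^{\star}})$ is \emph{foliation independent}: both $\eta_{S_{r^{\star}}}$ and $\theta_k(S_{r^{\star}})$ are the corresponding background quantities read off at $r=\phi r^{\star}+\tau+\fun^{\star}$. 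The only genuinely new object is $\theta_{\ell'}$, for which I first solve the three algebraic conditions (null, normalized, orthogonal to each $e_A$) to obtain $\ell'=\ell-2\,\esf r+|\esf r|^2_{q}\,k$, gradient and norm being those of the induced metric.

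With this in hand I would import from Section \ref{Sect2} the asymptotic expansions along $\{S_r\}$, namely $q_{AB}=r^2\q_{AB}+O(r)$, $\theta_k=-\tfrac2r+\tfrac{\c}{r^2}+o(r^{-2})$, $\theta_\ell=\tfrac2r+\tfrac{\a}{r^2}+o(r^{-2})$ and $\s_\ell=\tfrac1r\sone+o(r^{-1})$, and compute $\theta_{\ell'}=q^{AB}\la\nabla_{e_A}\ell',e_B\ra$ by expanding $\ell'$ and $e_A$. The trace of the Hessian of the $-2\esf r$ piece yields $-2\triangle_{q} r$, the mixed $\ell$--$k$ derivatives produce the connection one-form $\s_\ell$ contracted with $\esf r$, and the $|\esf r|^2 k$ term reproduces $\theta_k$ weighted by $|\esf r|^2$; schematically
\begin{equation*}
\theta_{\ell'}=\theta_\ell-2\triangle_{q} r+c_1\,\s_\ell(\esf r)+|\esf r|^2_{q}\,\theta_k+(\text{lower order}),
\end{equation*}
with $c_1$ a numerical constant fixed by the computation and every factor evaluated at the graph height.

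I would then substitute $r=\phi r^{\star}+\tau+\fun^{\star}$, use that $\phi,\tau$ are Lie-constant along $k$ together with the decay $\fun=\ordone\cap\ordtwo$, $k(\fun)=\ordthree$, and expand every factor to one order beyond its leading behaviour in $r^{\star}$. The dominant balance is $|S_{r^{\star}}|\sim(r^{\star})^2\int_{\mathbb{S}^2}\phi^2\volunitdos$ and $\tfrac1{16\pi}\int\theta_k\theta_{\ell'}\eta_{S_{r^{\star}}}=-1+O(1/r^{\star})$, so the leading divergence cancels the factor $1$ and the finite limit is governed entirely by the $O(1/r^{\star})$ coefficient. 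Converting $\triangle_{q}$ and $\s_\ell$ to the round-metric operators $\triangle_{\q}$, $\mbox{div}_\q$ via the conformal factor $r^2\approx\phi^2(r^{\star})^2$, and integrating by parts to move the Laplacian onto $\c$ (producing $\triangle_\q\c$) and to turn the connection term into $-4\,\mbox{div}_\q(\sone)$, collects exactly the integrand $\triangle_{\q}\c-(\c+\a)-4\,\mbox{div}_\q(\sone)$, with the weight $1/\phi$ arising from the mismatch between the area measure $\phi^2\volunitdos$ and the pointwise scaling of $\theta_k\theta_{\ell'}$. Multiplying by the square-root area factor then yields \eqref{hawkingmasslimittwoone}.

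The hard part will be the bookkeeping of subleading orders: because the relevant contribution sits at $O(1/r^{\star})$ while almost every ingredient starts at $O(1)$ or $O(r^{\star})$, I must retain the first correction in the metric, in $\theta_k,\theta_\ell,\s_\ell$, and in the Taylor expansion of $\theta_k$ and $\theta_\ell$ about the shifted height $\phi r^{\star}+\tau+\fun^{\star}$ rather than about $\phi r^{\star}$, while simultaneously proving that the contributions of $\tau$ and $\fun^{\star}$ either integrate to zero or cancel. Showing that the stated decay $\fun=\ordone\cap\ordtwo$, $k(\fun)=\ordthree$ is exactly what forces these remainders to vanish in the limit, and verifying that the two total-derivative terms $\triangle_\q\c$ and $\mbox{div}_\q(\sone)$ emerge with the correct coefficients after integration by parts, is where the real work lies; as a consistency check, for constant $\phi$ (round spheres) both total-derivative terms integrate to zero and one recovers the Bondi-energy limit.
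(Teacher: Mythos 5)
You follow, in essence, the same route as the paper: describe $S_{r^{\star}}$ as a graph over the background foliation, transform $\theta_{\ell'}$ with the graph formula, insert the asymptotic expansions, and integrate by parts (the paper factors this into three steps — a shift $\tau$, a conformal rescaling $\phi$, and a non-geodesic perturbation $\fun$, Theorems \ref{TmaHawkinglimit}, \ref{Tmatau} and \ref{Tmaq} — whereas you attempt it in one shot). However, your sketch contains a concrete error at its central formula: the schematic expansion of $\theta_{\ell'}$ omits the term $-4K^{k}(\grad r,\grad r)$ appearing in the exact transformation law (\ref{Relthl}), implicitly relegating it to ``lower order''. It is of the \emph{same} order as the terms you keep. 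At graph height $r\approx\phi r^{\star}$ one has $\gamma^{AB}\approx \q^{AB}\phi^{-2}(r^{\star})^{-2}$ and $K^{k}_{AB}\approx-\phi r^{\star}\q_{AB}$, so
\begin{equation*}
|\esf r|^{2}_{\gamma}\,\theta_{k}\approx-\frac{2|\esf\phi|^{2}_{\q}}{\phi^{3}\,r^{\star}},
\qquad
-4K^{k}(\grad r,\grad r)\approx+\frac{4|\esf\phi|^{2}_{\q}}{\phi^{3}\,r^{\star}}.
\end{equation*}
Only with \emph{both} contributions does the $1/r^{\star}$ coefficient of $\theta_{\ell'}$ become $\frac{2}{\phi}\bigl(1+\frac{|\esf\phi|^{2}_{\q}}{\phi^{2}}-\frac{\triangle_{\q}\phi}{\phi}\bigr)=2\phi\,\K_{\qh}$, i.e.\ (twice) the Gauss curvature of the limit metric $\qh=\phi^{2}\q$, so that Gauss--Bonnet forces $1+\frac{1}{16\pi}\int\theta_{k}\theta_{\ell'}\bm{\eta_{S_{r^{\star}}}}$ to vanish at order one. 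If the $K^{k}$ term is dropped, an integration by parts shows instead that this quantity tends to $\frac{1}{2\pi}\int_{\mathbb{S}^{2}}\frac{|\esf\phi|^{2}_{\q}}{\phi^{2}}\volunitdos>0$ for non-constant $\phi$; multiplied by $\sqrt{|S_{r^{\star}}|}\sim r^{\star}$, your procedure would then predict a divergent Hawking energy rather than the finite limit (\ref{hawkingmasslimittwoone}). Compare the explicit computation of $K^{k}(\grad F_{r'},\grad F_{r'})$ inside the proof of Proposition \ref{geodesichange}.

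Beyond this, essentially all of what constitutes the proof is deferred. Showing that the remaining $O(1/r^{\star})$ terms — which involve $\h$, the subleading corrections to $\triangle_{\gamma}$ in (\ref{laplaciandev}), and the angular derivatives of $\phi$ — assemble into $\bigl(\triangle_{\q}\c-(\c+\a)-4\,\mbox{div}_{\q}\sone\bigr)\phi^{-1}$ up to exact divergences; that every $\tau$-dependent term cancels identically in the integrand (not merely after integration); and that the hypotheses $\fun=\ordone\cap\ordtwo$, $k(\fun)=\ordthree$ suffice to make all $\fun$-contributions $o(1/r^{\star})$ — this is precisely the content of Theorems \ref{TmaHawkinglimit}, \ref{Tmatau} and \ref{Tmaq}, and your proposal only flags it as ``where the real work lies''. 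As it stands the proposal is a plan whose key displayed formula is wrong at exactly the order that determines the answer, so it cannot be accepted as a proof.
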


For definiteness, we work with past null hypersurfaces in this paper. The
changes required to deal with future null hypersurfaces
are indicated in Remark \ref{futurenull} below.

In addition to this theorem we also find an interesting covariance property
of the integrand in (\ref{hawkingmasslimittwoone}) under changes of background 
foliation. This is
part of the content of Theorems \ref{Tmalargespheres}, \ref{Tmatau}, and \ref{Tmaq} below, and which may be of independent interest.

We conclude by noting that
the Hawking energy is also useful as a tool to control the Bondi mass
of a spacetime. For null hypersurfaces this idea has been exploited by
Alexakis and Shao \cite{AlexakisShao2014}, who prove
bounds for the Bondi energy-momentum for vacuum, null hypersurfaces
close to the shear-free outgoing null cones in Schwarzschild spacetime.
According to these authors obtaining the limit of the Hawking energy
is particularly difficult, fundamentally  due to the lack of a 
neat formula for the change of the mass aspect function 
under change of geodesic foliation. 
Recall that the mass aspect function \cite{C,CK}
is  a specific geometric quantity $\mu_S$ defined on surfaces with the property
that 
\begin{align*}
m_H(S)=\sqrt{\frac{|S|}{16\pi}} \int \mu_S \bm{\eta}_S.
\end{align*}
Theorems  \ref{Tmalargespheres}, \ref{Tmatau} and \ref{Tmaq}
do provide a simple transformation formula at infinity
for an integrand of the Hawking energy for foliations approaching large
spheres, so a more general application of this result may well have
interesting applications also in the context of Alexakis and Shao's work.

As a more general application of our results,
flows along null hypersurfaces are likely to play
a role in many attemps to prove the Penrose inequality at null infinity,
cf. \cite{LudvigsenVickers1983}, \cite{Bergqvist1997},
\cite{Mars2009}, \cite{MarsSoria2012}, \cite{BrendleWang}, \cite{MarsSoria2013}.
The general methods developed in this paper to deal
with general foliations in null hypersurfaces are thus likely to be useful
in that context as well.

The paper is organized as follows.
In Section \ref{Setup} we recall the geometry of graphs over a background
foliation in a null hypersurface. For the sake of generality,
we work here in arbitrary dimension and with general foliations transverse
to the null generator, even 
though for the rest of the paper only dimension four and geodesic
background foliations are used.
In Section \ref{Sect2} we introduce our definition of asymptotically flat null
hypersurface and consider some simple consequences. We
also introduce the notion of energy flux decay condition which imposes
additional decay on some components of the Einstein tensor and which
is used later on in Section \ref{Sect6}.
In Section \ref{Sect3}
the definition of approaching large spheres is introduced. 
Section \ref{Sect4} is devoted to find the limit of the Hawking energy
for geodesic foliations. Given a background foliation approaching large
spheres we consider first what happens for other geodesic foliation
starting at the same initial surface (Theorem \ref{TmaHawkinglimit}) 
and then the case when the foliation has the same null generator but
starts on a different surface (Theorem \ref{Tmatau}).
 We find, in particular, an interesting covariance
property of the integrand arising in the limit expression of the Hawking
mass under changes of background foliation (Theorems \ref{Tmalargespheres}
and \ref{Tmatau}). The case of non-geodesic foliations is treated in 
Section \ref{Sect5}. We devote Section  \ref{Sect6}  to recover
the well-known result that the Hawking energy tends to the Bondi energy
when the foliation approaches large spheres. This involves recalling
known properties of the relationship between the conformal group of the sphere
and the Lorentz group in Minkowski, with a subtlety that arises
when the hypersurfaces extend to past null infinity. We also compare
our results with the analysis of the limit of the Hawking energy in the so-called
null quasi-spherical gauge by Bartnik \cite{Bartnik2004}.

\section{Setup}
\label{Setup}
Let $(\M,\gMm)$ be a time-oriented spacetime of dimension $n+1$,
$n \geq 3$.
We consider a smooth, connected, null hypersurface $\Omega$  embedded
in $(\M,\gMm)$. Let $k$ be a smooth, nowhere zero, future directed
null vector field tangent to $\Omega$ (i.e. a null generator).
Since the integral curves of $k$ are geodesics, there
exists $Q_{k} \in \F(\Omega)$ such that $\nabla_k k = Q_k k$, where
$\nabla$ is the Levi-Civita covariant derivative of $(\M,\gMm)$. We make
the assumption that there is an embedded spacelike connected hypersurface
$S_0$ in $\Omega$ (with embedding $\Phi_0$)
such that any integral curve of $-k$ intersects $S_0$ precisely once.
This implies the existence of a smooth map $\pi: \Omega \longrightarrow
S_0$ (we identify $S_0$ with its image, the meaning being clear from the
context) which sends $p\in \Omega$ to the intersection of the integral
curve $\gamma^k_p$  of $-k$ passing through $p$ with $S_0$. The map $\pi$ is
clearly a submersion. We choose the parameter $\lambda$ of
the curve $\gamma^k_p$ so that $\gamma^k_p(0) = p$.

Given $k$,  $S_0$ and a constant $r_0$, a scalar function $r \in \F(\Omega)$ is defined
by $k(r)=-1$ and $r(p) = r_0$ for all $p \in S_0$. Let $(r_{-}(p),r_{+}(p))$ be the range of the function $r$ restricted to
the curve $\gamma^k_p$. We also assume that 
the open interval $(R_{-} := \sup_{S_{0}} r_{-}, R_{+} := \inf_{S_{0}} r_{+})$
is non-empty.  The function $r$ having 
nowhere zero
gradient, the level sets $S_{r_1} = \{ r = r_1 \}$ 
are either empty or smooth, embedded (not necessarily connected) 
hypersurfaces.
The collection $\{ S_r \}$ is a foliation of $\Omega$.
For $r_1 \in (R_{-},R_{+})$ the hypersurfaces $S_{r_1}$ are in fact connected
and diffeomorphic to $S_0$. 

At any $p \in \Omega$ let $\ell |_{p} \in T_p \M$ be the
unique null vector field satisfying $\la k,\ell \ra |_p = -2$ and
$\la \ell, X \ra |_p =0$ for any $X \in T_p S_{r(p)}$. $S_r$ is endowed
with an induced metric $\gamma_{S_r}$, with two null second fundamental forms
$K^{k}(X,Y) := \la \nabla_X k,Y \ra$, $K^{\ell}(X,Y) := \la \nabla_X \ell, Y \ra$
$X,Y \in \X(S_{r})$ and with a normal bundle connection one-form
$\s_{\ell}(X): = - \frac{1}{2} \la \nabla_X \ell, k \ra$. 

In order to obtain the limit of the Hawking energy as described
in the introduction, we need to relate the geometry of
different spacelike surfaces embedded in $\Omega$. Consider
a spacelike embedded hypersurface $S$ in $\Omega$ with
embedding $\Phi: S \longrightarrow \Omega$ and
let $p \in S$. This hypersurface is uniquely defined
by a diffemorphism $\Psi: S \longrightarrow \Psi(S) \subset S_0$
and a function $F \in \F(S)$ as follows. For all $p \in S$ define
$F(p) = r(\Phi(p))$ and $\Psi(p) = (\pi \circ \Phi)(p)$.
Conversely, a function $F \in \F(S)$ with image in $(r_{-}(p),r_{+}(p))$ and a
diffeomorphism $\Psi$ as above
defines an embedding 
\begin{align*}
\Phi: S & \longrightarrow \Omega \\
p & \longrightarrow \gamma^k_{\Psi(p)}(\lambda = F(p)).
\end{align*}
We want to relate the intrinsic and extrinsic geometry
 of $S$ at $p$ with the geometry of 
the surface $S_{r=F(p)}$. Since this is all local we can assume
$\Psi(S) = S_0$, which makes the presentation simpler. We
extend $F$ to a function on $\Omega$ defined by $F(q) =
F((\Psi^{-1} \circ \pi)(q))$. We keep the same
symbol for the extension. 
It is clear
that $k (F)=0$.  For  the extrinsic geometry of $S$ (we again
identify $S$ with its image) we define at
$p \in S$, the null normal $\ellS|_p$ by the conditions
$\la \ellS,k\ra |_p = -2$ and $\la \ellS,X \ra =0$ for all 
$X\in T_p S$. The null second fundamental forms
$K^{\ellS}$, $K^{k}$ and the normal connection one-form $\s_{\ellS}$ are defined
similarly as before. The following Proposition
is known (see e.g \cite{Sauter2008}) 
when the background foliation $\{S _r \}$ is geodesic (i.e. $Q^{k} =0$).
Although this is the situation we will require later, we include
for the sake of generality the non-geodesic case as well. 
Our proof also follows a somewhat different approach.

\begin{Prop}
\label{RelGraph}
Let $p \in S$, then the map
\begin{align*}
T_F : T_p S_{r=F(p)}  & \longrightarrow T_p S\\
      X & \longrightarrow \TX := X - X(F)  k
\end{align*}
is a well-defined isomorphism. The induced metric $\gamma_{S}$,
null second fundamental forms $K^{\ellS}$, $K^{k}$ and
normal bundle connection $\s_{\ellS}$ of $S$ are given by
\begin{align}
\gamma_{S}|_p (\TX,\TY)  = &
\gamma  (X,Y), \label{metric}  \\
K^{k}(\TX,\TY) |_p  = &  K^{k}(X,Y) \label{Kk1} \\
\s_{\ellS} (\TX) |_p  = &  
\s_{\ell}(X)  - K^k(X, \grad F) + X(F) Q_{k}
\label{sprime}
\\
K^{\ell_S}(\TX,\TY) |_p = & 
K^{\ell}(X,Y) + |DF|^2 K^{k} (X,Y) 
+ 2 X(F) \s_{\ell} (Y)  +2  Y(F) \s_{\ell} (X) \nonumber \\
& - 2  X(F) K^k(X, \grad F)  -2  Y(F) K^k(Y, \grad F)  - 2 \Hess \, F
+ 2  Q_{k} X(F) Y(F) 
\label{Kell}
\end{align}
where $\gamma, K^k$, $K^{\ell}$, $\s_{\ell}$, $\grad$, $\Hess$ 
and $|DF|^2 = \la \grad \, F , \grad \, F \ra$
refer
to the surface $S_{r=F(p)}$  and are evaluated at $p$.
\end{Prop}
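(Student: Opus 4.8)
The strategy is to fix the tangent space of $S$ and its null normal $\ellS$ by algebra, and only then to compute the geometric quantities by differentiating and repeatedly exploiting the constancy of the relevant inner products. First I would introduce adapted coordinates $(\lambda,x^A)$ on $\Omega$ with $k=-\partial_\lambda$ and background leaves $\{\lambda=\mathrm{const}\}$, so that $T_pS_{r=F(p)}$ is spanned by the $\partial_{x^A}$; since $k(F)=0$, the extended $F$ depends only on $x^A$ and $S$ is the graph $\{\lambda=F(x)\}$, whose tangent vectors are $\partial_{x^A}-(\partial_{x^A}F)k=\TX$. Hence $T_F$ is onto $T_pS$, injective because $k\notin T_pS_{r=F(p)}$, and an isomorphism by dimension count. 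I would then determine $\ellS$ by writing $\ellS=\ell+\beta k+Z$ with $Z\in T_pS_{r=F(p)}$ (legitimate since $T_p\M=\mathrm{span}\{k,\ell\}\oplus T_pS_{r=F(p)}$): the condition $\la\ellS,k\ra=-2$ fixes the coefficient of $\ell$ to be $1$, orthogonality $\la\ellS,\TX\ra=0$ forces $Z=-2\grad F$, and nullity then gives $\beta=|DF|^2$. This produces the key identity $\ellS=\ell-2\grad F+|DF|^2 k$, which drives everything that follows.

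The identities (\ref{metric}), (\ref{Kk1}), (\ref{sprime}) are then short. For (\ref{metric}) I expand $\la\TX,\TY\ra$ and drop every term containing a factor $\la k,\cdot\ra$ whose second slot lies in $T_p\Omega$, since $\Omega$ null gives $T_p\Omega=k^\perp$; only $\la X,Y\ra=\gamma(X,Y)$ survives. For (\ref{Kk1}) I use $\nabla_{\TX}k=\nabla_Xk-X(F)Q_{k}k$, pair with $\TY$, and kill the $k$-terms together with $\la\nabla_Xk,k\ra=\tfrac12 X\la k,k\ra=0$. For (\ref{sprime}) I first note that constancy of $\la\ellS,k\ra$ gives $\s_{\ellS}(\TX)=\tfrac12\la\ellS,\nabla_{\TX}k\ra$, then substitute the formulas for $\ellS$ and $\nabla_{\TX}k$; the surviving terms are exactly $\s_\ell(X)-K^k(X,\grad F)+X(F)Q_{k}$.

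The real work is (\ref{Kell}). I would extend $X,Y$ to leaf-tangent fields that are Lie-constant along $k$ (so $\nabla_kX=\nabla_Xk$) and regard $\ell$, $\grad F$, $|DF|^2$ as fields on $\Omega$, so that $\la\ell,X\ra$, $\la\grad F,k\ra$ and $\la k,X\ra$ vanish identically and may be differentiated. Writing $K^{\ellS}(\TX,\TY)=\la\nabla_{\TX}\ellS,\TY\ra$ and splitting $\ellS$ into its three pieces, the computation reduces to $\la\nabla_{\TX}\ell,\TY\ra$, $-2\la\nabla_{\TX}\grad F,\TY\ra$ and $|DF|^2\la\nabla_{\TX}k,\TY\ra$. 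Each piece is handled by the same device: one transfers a derivative onto the other factor through $a\la b,c\ra=\la\nabla_ab,c\ra+\la b,\nabla_ac\ra$ with $\la b,c\ra$ a known constant or function, converting ambient derivatives into $K^\ell$, $\s_\ell$, $K^k$, $Q_{k}$ and the intrinsic leaf Hessian via $\la\nabla_X\grad F,Y\ra=\Hess F(X,Y)$. Summing the three pieces then yields (\ref{Kell}).

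The main obstacle is the bookkeeping in this last step. The delicate contributions are the $\nabla_k$ terms generated by the $-X(F)k$ part of $\TX$: the quantities $\la\nabla_k\ell,Y\ra$ and $\la\nabla_k\grad F,Y\ra$ must be rewritten using $\nabla_kY=\nabla_Yk$ into $-2\s_\ell(Y)$ and $-K^k(Y,\grad F)$ respectively, while $\la\nabla_k\ell,k\ra=2Q_{k}$ must also be tracked. One must further check that $\la\nabla_X\grad F,Y\ra$ really is the leaf Hessian, which holds because the normal part of $\nabla_X\grad F$ pairs to zero against the tangent vector $Y$. Demanding that the final expression be symmetric in $X\leftrightarrow Y$ provides a convenient consistency check on all these terms.
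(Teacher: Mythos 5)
Your proposal is correct, and for the heart of the proposition --- formula (\ref{Kell}) --- it takes a genuinely different route from the paper. The preliminary steps coincide: the same isomorphism argument, the same algebraic determination of $\ellS = \ell + |DF|^2 k - 2\,\grad F$ (the paper's (\ref{ellS}), which it verifies rather than derives), and essentially the same computations for (\ref{metric}), (\ref{Kk1}) and (\ref{sprime}). For (\ref{Kell}), however, the paper does not expand $\la \nabla_{\TX}\ellS, \TY\ra$ piece by piece; instead it computes the deformation tensor $a_{\ellS} = \pounds_{\ellS} g$ on the leaf $S_{r=F(p)}$ in two independent ways --- once by expanding $\la \nabla_X \ellS, Y\ra + \la \nabla_Y \ellS, X\ra$ in terms of $K^{\ellS}$, $\s_{\ellS}$ and $Q_k$ (its (\ref{Lem1})), and once via Lemma \ref{deformation}, which expresses $a_\xi$ of an arbitrary vector field through $\la \xi,\ell\ra$, $\la\xi,k\ra$ and the tangential part $\xi^{\parallel}$ (its (\ref{Lem2})) --- and then solves for $K^{\ellS}$. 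The lemma packages at one stroke exactly the bookkeeping you do by hand (the $K^{\ell}$, $|DF|^2 K^k$ and Hessian contributions all come from (\ref{Lem2})), and it is stated as a reusable identity; your route is more elementary and self-contained, at the cost of tracking every $\nabla_k$ term, which you correctly identify as the delicate ones ($\la\nabla_k \ell, Y\ra = -2\s_\ell(Y)$, $\la\nabla_k \grad F, Y\ra = -K^k(Y,\grad F)$, $\la\nabla_k\ell,k\ra = 2Q_k$). One point worth flagging: summing your three pieces produces the mixed terms $-2X(F)K^k(Y,\grad F) - 2Y(F)K^k(X,\grad F)$, not the terms $-2X(F)K^k(X,\grad F) - 2Y(F)K^k(Y,\grad F)$ printed in (\ref{Kell}). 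Your version is the correct one: the printed terms are not even bilinear in $(X,Y)$, they are inconsistent with substituting (\ref{sprime}) into the first line of (\ref{Lem1}), and only the mixed terms have trace equal to the $-4K^k(\grad F,\grad F)$ appearing in (\ref{Relthl}) of Corollary \ref{RelExpansions}. So the statement (and the second equality of (\ref{Lem1})) carries a typo with the arguments swapped, which your computation silently corrects.
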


The proof is based on the following simple identity that
may be useful in other contexts.
\begin{Lem}
\label{deformation}
Let $S$ be an embedded spacelike surface with embedding 
$\Psi: S \longrightarrow \M$. Select a pair of null normal vector fields
$\{ k, \ell\}$ along $S$ satisfying $\la k, \ell \ra = -2$.
For any vector field $\xi$ on a spacetime neighbourhood of
 $S$ write its deformation tensor as $\pounds_{\xi} g:=a_{\xi} $.
Then
\begin{align*}
a_{\xi}(X,Y) & = - \la \xi, \ell \ra K^{k} (X,Y) 
- \la \xi, k \ra K^{\ell} (X,Y) + 
\left ( \nablaS_{X} \bm{\xi^{\parallel}} \right )(Y) +
\left ( \nablaS_{Y} \bm{\xi^{\parallel}} \right )(X), 
\quad \quad X,Y \in \X(S)
\end{align*}
where $\nablaS$ is the Levi-Civita covariant derivative of $S$,
$\bm{\xi}^{\parallel} := \Psi^{\star}(\bm{\xi})$ and
$\bm{\xi} := g(\xi,\cdot)$.
\end{Lem}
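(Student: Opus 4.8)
The plan is to compute the deformation tensor $a_\xi(X,Y) = (\pounds_\xi g)(X,Y)$ directly from its definition and then resolve the ambient derivatives of $\xi$ into the null frame $\{k,\ell\}$ together with the tangential part $\xi^\parallel$. Starting from
\begin{equation*}
(\pounds_\xi g)(X,Y) = \langle \nabla_X \xi, Y\rangle + \langle \nabla_Y \xi, X\rangle,
\end{equation*}
valid for $X,Y\in\X(S)$ since $g$ is the Levi--Civita metric, the first step is to decompose $\xi$ along $S$ into its normal and tangential pieces. Because $\{k,\ell\}$ span the normal bundle with $\langle k,\ell\rangle=-2$, $\langle k,k\rangle=\langle\ell,\ell\rangle=0$, one writes $\xi = \xi^\parallel - \tfrac12\langle\xi,\ell\rangle k - \tfrac12\langle\xi,k\rangle \ell$ on $S$. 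The idea is to insert this into the two inner products above, but with care: $\pounds_\xi g$ depends on $\xi$ in a spacetime neighbourhood, not only on its restriction to $S$, so I would instead differentiate the defining relations and use the metricity of $\nabla$.

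Concretely, the cleaner route is to evaluate $\langle\nabla_X\xi,Y\rangle$ by writing $\langle\nabla_X\xi,Y\rangle = X\langle\xi,Y\rangle - \langle\xi,\nabla_X Y\rangle$ for $X,Y$ tangent to $S$, and then decompose $\nabla_X Y$ into its tangential part $\nabla^S_X Y$ and its normal part. The normal part of $\nabla_X Y$ is precisely what the second fundamental forms encode: projecting against $k$ and $\ell$ gives the components $K^k(X,Y)$ and $K^\ell(X,Y)$ via the frame expansion $\nabla_X Y = \nabla^S_X Y - \tfrac12 K^k(X,Y)\,\ell - \tfrac12 K^\ell(X,Y)\,k$, using $K^k(X,Y)=\langle\nabla_X k,Y\rangle = -\langle k,\nabla_X Y\rangle$ and similarly for $K^\ell$. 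Pairing $\xi$ with this decomposition produces the terms $-\langle\xi,\ell\rangle K^k(X,Y)$ and $-\langle\xi,k\rangle K^\ell(X,Y)$ after accounting for the factor from $\langle k,\ell\rangle=-2$, while the piece $X\langle\xi,Y\rangle - \langle\xi,\nabla^S_X Y\rangle$ assembles into $(\nabla^S_X \bm{\xi}^\parallel)(Y)$ since only the tangential component of $\xi$ survives the pairing with the tangent vector $\nabla^S_X Y$ and with $Y$.

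The main point to get right is the bookkeeping of the normal decomposition of $\nabla_X Y$ and the sign and factor arising from the normalization $\langle k,\ell\rangle=-2$, since this is where the coefficients $-\langle\xi,\ell\rangle$ and $-\langle\xi,k\rangle$ acquire their precise form. After symmetrizing in $X\leftrightarrow Y$, the normal contributions are already symmetric (because $K^k,K^\ell$ are symmetric tensors), so they simply double-count into the stated single terms, whereas the tangential contributions symmetrize into $(\nabla^S_X \bm{\xi}^\parallel)(Y) + (\nabla^S_Y \bm{\xi}^\parallel)(X)$. I expect no serious obstacle here; the only subtlety is confirming that the tangential derivative genuinely reduces to the intrinsic connection $\nabla^S$ acting on the one-form $\bm{\xi}^\parallel=\Psi^\star(\bm{\xi})$ rather than on the full ambient $\bm{\xi}$, which follows because the pairing against tangent vectors annihilates the normal components of $\xi$ and the Gauss relation identifies the tangential projection of the ambient covariant derivative with $\nabla^S$.
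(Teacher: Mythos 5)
Your overall route is genuinely different from the paper's and is viable: the paper decomposes $\xi$ itself into normal and tangential parts, $\xi = -\tfrac12 \la \xi,\ell\ra k - \tfrac12 \la \xi,k \ra \ell + \xi^{\parallel}$, and differentiates that decomposition (the terms where the derivative hits the scalar coefficients drop out upon pairing with tangent vectors), whereas you never differentiate $\xi$'s splitting or the null frame at all: you move the derivative off $\xi$ via metricity, $\la \nabla_X \xi, Y \ra = X\la \xi, Y\ra - \la \xi, \nabla_X Y\ra$, and then use the Gauss decomposition of $\nabla_X Y$. This has the added virtue of making manifest that $a_{\xi}(X,Y)$, for $X,Y$ tangent to $S$, depends only on $\xi$ along $S$ (the content of the Remark following the Lemma).

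However, your central formula, the frame expansion of $\nabla_X Y$, carries a sign error which, taken literally, makes the computation fail. You write
\begin{equation*}
\nabla_X Y = \nabla^S_X Y - \tfrac12 K^k(X,Y)\,\ell - \tfrac12 K^{\ell}(X,Y)\,k,
\end{equation*}
but this contradicts the relation you state in the same sentence, $K^k(X,Y) = -\la k, \nabla_X Y\ra$: pairing your expansion with $k$ gives $\la k,\nabla_X Y \ra = -\tfrac12 K^k(X,Y)\la k,\ell \ra = +K^k(X,Y)$, the opposite sign. The correct expansion, forced by $\la k,\nabla_X Y\ra = -K^k(X,Y)$, $\la \ell,\nabla_X Y\ra = -K^{\ell}(X,Y)$ and $\la k,\ell\ra = -2$, is
\begin{equation*}
\nabla_X Y = \nabla^S_X Y + \tfrac12 K^k(X,Y)\,\ell + \tfrac12 K^{\ell}(X,Y)\,k.
\end{equation*}
Carried through as written, the term $-\la \xi,\nabla_X Y\ra$ would produce $+\tfrac12\la \xi,\ell\ra K^k(X,Y) + \tfrac12\la \xi,k\ra K^{\ell}(X,Y)$, and after symmetrization you would ``prove''
$a_{\xi}(X,Y) = +\la \xi,\ell\ra K^k(X,Y) + \la \xi,k\ra K^{\ell}(X,Y) + (\nablaS_X \bm{\xi^{\parallel}})(Y) + (\nablaS_Y \bm{\xi^{\parallel}})(X)$, which is not the Lemma. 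Everything else in your argument is sound: the piece $X\la \xi,Y\ra - \la \xi,\nabla^S_X Y\ra$ does assemble into $(\nablaS_X \bm{\xi^{\parallel}})(Y)$ because only $\xi^{\parallel}$ survives the pairings, and the symmetrization works exactly as you describe. So the defect is a single, fixable sign; but since you also assert the correct final coefficients $-\la \xi,\ell\ra$, $-\la \xi,k\ra$, the proof as written is internally inconsistent and you must repair the expansion before the derivation actually establishes the statement.
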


\begin{remark}
Given $\xi$ merely along $S$, this result can be applied
to any extension of $\xi$ to a neighbourhood of $S$, the result
being independent of the extension.
\end{remark}

\vspace{3mm}

\begin{proof} Decompose $\xi$ in tangential
and normal parts $\xi = \xi^{\perp} + \xi^{\parallel}$ and
$\xi^{\perp}$ in the null basis $\{k,\ell\}$
$\xi = - \frac{1}{2} \la \xi,\ell \ra k 
- \frac{1}{2} \la \xi,k \ra \ell  + \xi^{\parallel}$
so that
\begin{align*}
a_{\xi} (X,Y) & = 
\la \nabla_X \xi, Y \ra  + \la \nabla_Y \xi, X \ra \\
& = 
- \la \xi,\ell \ra K^{k}(X,Y)
- \la \xi,k \ra K^{\ell}(X,Y)
+ \la \nabla_X \xi^{\parallel}, Y \ra  + \la \nabla_Y \xi^{\parallel}, X \ra \\
& = 
- \la \xi,\ell \ra K^{k}(X,Y)
- \la \xi,k \ra K^{\ell}(X,Y)
+ \la \nablaS_X \xi^{\parallel}, Y \ra  +\la \nablaS_Y \xi^{\parallel}, X \ra \\
& = 
- \la \xi,\ell \ra K^{k}(X,Y)
- \la \xi,k \ra K^{\ell}(X,Y)
+ (\nablaS_X \bm{\xi^{\parallel}})(Y) 
 + (\nablaS_Y \bm{\xi^{\parallel}})(X). 
\end{align*}
\end{proof}

\noindent {\it Proof of the Proposition.} $T_F$ is well-defined provided
$X - X(F)  k$ is tangent to $S$. This follows because $S$ is defined by  $r- F =0$ (note that $d(r-F) \neq 0$
everywhere) and 
${\mathcal L}_{ X - X(F)  k} (r - F) =  - X(F) - X(F) k(r) = 0$. 
$T_F$ is obviously
injective, hence an isomorphism. Properties (\ref{metric}) and
(\ref{Kk1}) are immediate (and well-known). For the remaining 
parts we note the decomposition
\begin{equation}
\ellS |_p  = \ell + |DF|^2  k 
- 2 \, \grad  F \, |_p, \quad \quad \quad p \in S,
\label{ellS}
\end{equation}
which holds because the right-hand side is null, satisfies
$\la \ellS, k \ra = -2$ and is ortohogonal to $\TX=T_F(X)$, for all
$X \in T_pS$.
To show (\ref{sprime}) we compute
\begin{align*}
\s_{\ellS}(\TX) & = - \frac{1}{2} \la \nabla_{\TX} \ellS,k \ra = 
\frac{1}{2} \la \nabla_{\TX}  k, \ellS \ra  = \frac{1}{2} \la \nabla_X k - X(F) \nabla_k k, \ellS \ra  \\
& = \s_{\ell}(X)  - K^k(X, \grad F) + X(F) Q_{k}. 
\end{align*}
For the null extrinsic curvature $K^{\ellS}$ we use Lemma \ref{deformation}.
First observe that the right-hand side of (\ref{ellS}) makes sense for all
$p \in \Omega$, so
it defines an extension of $\ellS$ which remains null and satisfying
$\la \ellS,k \ra = -2$. Extend also $Y \in T_p S_{r=F(p)}$ to a neighbourhood 
under the condition that remains tangent to the foliation $\{S_r\}$. This
induces an extension of $\TY$ which remains orthogonal to
$\ellS$. Note 
\be
[k ,\TY] = [k,Y] - k(Y(F)) k = [k,Y] - [k,Y] (F) = ([k,Y])^{\prime}
\en
which shows that $[k,\TY]$ is tangent to $S$ at $p$ (and we used that
$[k,Y]|_p$ is tangent to $S_{r=F(p)}$). We apply Lemma \ref{deformation}
on the surface  $S_{r= F(p)}$ and to the vector field $\ellS$. 
Concerning the deformation tensor
\begin{align*}
\la \nabla_{X} \ellS, Y \ra & =
\la \nabla_{\TX +  X(F) k} \ellS, \TY + Y(F) k \ra \\
& =
\la \nabla_{\TX} \ellS,\TY \ra + X(F) \la \nabla_k \ellS, \TY \ra
+ Y(F) \la \nabla_{\TX} \ellS, k \ra 
+ X(F) Y(F) \la \nabla_{k} \ellS, k \ra  \\
& = 
K^{\ellS}(\TX,\TY) - X(F) \la \nabla_k \TY,\ellS \ra
- 2 Y(F) \s_{\ellS} (\TX) - X(F) Y(F) \la \nabla_k k, \ellS \ra \\
& = 
K^{\ellS}(\TX,\TY) - X(F) \la [k,\TY] + \nabla_{\TY} k,\ellS \ra
- 2 Y(F) \s_{\ellS} (\TX) + 2 Q_{k} X(F) Y(F) \\
& =K^{\ellS}(\TX,\TY) - 2 X(F) \s_{\ellS} (\TY) 
- 2 Y(F) \s_{\ellS} (\TX) + 2 Q_{k} X(F) Y(F).
\end{align*}
where in the last equality we used that $[k,\TY] |_p $ is tangent
to $S$. Hence
\begin{align}
a_{\ellS} (X,Y) =  &
2 K^{\ellS}(\TX,\TY) - 4 X(F) \s_{\ellS} (\TY) 
- 4 Y(F) \s_{\ellS} (\TX) + 4 Q_{k} X(F) Y(F)  \nonumber \\
= &  
2 K^{\ellS}(\TX,\TY) - 4 X(F) \s_{\ell} (Y)
- 4 Y(F) \s_{\ell} (X) + 4 X(F) K^k(X, \grad F) \nonumber \\
& + 4 Y(F) K^k(Y, \grad F)  - 4 Q_{k} X(F) Y(F) \label{Lem1}
\end{align}
after using (\ref{sprime}) in the second equality. Now
Lemma \ref{deformation} gives
\begin{equation}
a_{\ellS}(X,Y) = 
2 |DF|^2 K^{k} (X,Y)  + 2 K^{\ell}(X,Y) - 4 \, \Hess \, F.
\label{Lem2}
\end{equation}
Solving for $K^{\ellS}(\TX,\TY)$ in (\ref{Lem1}) and (\ref{Lem2}) yields the result.

\vspace{3mm}

The following corollary is a trivial consequence of how
the null second fundamental forms and the normal bundle connection
transforms under a boost in $\{\ell_S, k \}$
\begin{Cor}
\label{boostcor}
Let $S$ as before and for all $p \in S$ let $k' |_p = \alpha(p) k|_p$
and $\ell_S'|_p = \frac{1}{\alpha(p)} \ell_S$, where
$\alpha: S \mapsto \mathbb{R}$ is a smooth positive function. Then
\begin{align}
\s_{\ell_S^{\prime}} (\TX)   = &  
\label{oneformalpha}
\s_{\ell}(X)  - K^k(X, \grad F) + X(F) Q_{k} - \frac{1}{\alpha} X(\alpha) 
 \\
K^{k^{\prime}} (\TX,\TY)   = & \alpha K^k(X,Y)  \nonumber \\
K^{\ell_S^{\prime}}(\TX,\TY) |_p = & 
\frac{1}{\alpha} \left ( K^{\ell}(X,Y) + |DF|^2 K^{k} (X,Y)  
+ 2 X(F) \s_{\ell} (Y) +2  Y(F) \s_{\ell} (X)  
- 2 \Hess \, F \right . \nonumber \\
& \left . 
\hspace{5mm} - 2  X(F) K^k(X, \grad F) 
-2  Y(F) K^k(Y, \grad F)  
+ 2  Q_{k} X(F) Y(F)  \frac{}{}  \right ). \nonumber
 \end{align}
\end{Cor}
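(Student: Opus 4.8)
The plan is to read the corollary exactly as its heading suggests: derive the three transformation rules by a direct computation from the definitions of $K^{k}$, $K^{\ellS}$ and $\s_{\ellS}$ under the boost $k^{\prime}=\alpha k$, $\ell_S^{\prime}=\alpha^{-1}\ellS$, and then simply substitute the graph formulas already established in Proposition~\ref{RelGraph}. The only geometric inputs needed are that both $k$ and $\ellS$ are normal to $S$, so that $\la k,\TY\ra=\la \ellS,\TY\ra=0$ for every $\TY$ tangent to $S$, together with the normalization $\la \ellS,k\ra=-2$.

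First I would treat $K^{k^{\prime}}$. Applying the Leibniz rule for the ambient $\nabla$ gives $K^{k^{\prime}}(\TX,\TY)=\la \nabla_{\TX}(\alpha k),\TY\ra=\TX(\alpha)\la k,\TY\ra+\alpha\la \nabla_{\TX}k,\TY\ra$, and the first term vanishes by normality of $k$. Hence $K^{k^{\prime}}(\TX,\TY)=\alpha K^{k}(\TX,\TY)$, which by (\ref{Kk1}) equals $\alpha K^{k}(X,Y)$. The case of $K^{\ell_S^{\prime}}$ is identical in spirit: with $\ell_S^{\prime}=\alpha^{-1}\ellS$ the spurious term $\TX(\alpha^{-1})\la \ellS,\TY\ra$ again drops out by normality of $\ellS$, so $K^{\ell_S^{\prime}}(\TX,\TY)=\frac{1}{\alpha}K^{\ellS}(\TX,\TY)$, and inserting (\ref{Kell}) reproduces the displayed bracket divided by $\alpha$.

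The only place where the boost genuinely couples to the normalization is the connection one-form. Here $\s_{\ell_S^{\prime}}(\TX)=-\frac{1}{2}\la \nabla_{\TX}(\alpha^{-1}\ellS),\alpha k\ra$; expanding by Leibniz produces the term $-\frac{1}{2}\alpha\,\TX(\alpha^{-1})\la \ellS,k\ra$ together with $-\frac{1}{2}\la \nabla_{\TX}\ellS,k\ra=\s_{\ellS}(\TX)$. Using $\TX(\alpha^{-1})=-\alpha^{-2}\TX(\alpha)$ and $\la \ellS,k\ra=-2$, the first term collapses to $-\frac{1}{\alpha}\TX(\alpha)$, so that $\s_{\ell_S^{\prime}}(\TX)=\s_{\ellS}(\TX)-\frac{1}{\alpha}\TX(\alpha)$; substituting (\ref{sprime}) then yields (\ref{oneformalpha}).

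I do not expect any serious obstacle, which is precisely why the statement is labelled a trivial consequence; the content is just the standard boost behaviour of null second fundamental forms composed with Proposition~\ref{RelGraph}. Two small bookkeeping points nonetheless deserve care. First, $\alpha$ is a priori a function only on $S$, so the derivative appearing above is genuinely $\TX(\alpha)$; this matches the $X(\alpha)$ written in (\ref{oneformalpha}) once $\alpha$ is extended off $S$ Lie constant along $k$, since then $k(\alpha)=0$ forces $\TX(\alpha)=X(\alpha)$. Second, one must track the sign entering through $\la \ellS,k\ra=-2$, as it is exactly this factor that makes the inhomogeneous term scale as $1/\alpha$ rather than $\alpha$ and fixes its sign in (\ref{oneformalpha}).
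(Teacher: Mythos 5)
Your proof is correct and takes essentially the same route as the paper: the paper gives no separate argument, asserting the corollary as a trivial consequence of the boost transformation behaviour of $K^{k}$, $K^{\ell_S}$ and $s_{\ell_S}$ under $k'=\alpha k$, $\ell_S'=\alpha^{-1}\ell_S$ composed with Proposition \ref{RelGraph}, which is precisely the Leibniz-plus-orthogonality computation you carry out. Your two bookkeeping remarks (extending $\alpha$ by $k(\alpha)=0$ so that $X'(\alpha)=X(\alpha)$, and the sign coming from $\langle \ell_S,k\rangle=-2$) correctly supply the details the paper leaves implicit.
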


The trace of $K^k$ and $K^{\ell}$ on $S_r$ with the induced metric
define the null expansions of $S_r$ and are denoted respectively as
$\theta_k$ and $\theta_{\ell}$. The relationship between the null expansions
$\theta_k$, $\theta_{\ellS}$ of a graph $S$ with the corresponding
ones at the level set $S_{r=F(p)}$ follow from 
Proposition \ref{RelGraph}.

\begin{Cor}
\label{RelExpansions}
Let $S$, $k'$ and $\ell_S'$ as in Corollary \ref{boostcor}. The null
expansions $\theta_{k'}$ and $\theta_{\ell'}$ at $p \in S$ and the
null expansions $\theta_k$, $\theta_{\ell}$ of $S_{r=F(p)}$ at $p$
are related by
\begin{align}
\theta_{k'} & = \alpha \theta_k \\
\theta_{\ell_S'} & = \frac{1}{\alpha} \left ( \theta_{\ell}
+ |DF|^2 \theta_k + 4 \s_{\ell} (\grad F) - 4 K^{k} ( \grad F,
\grad F) - 2 \Delta F + 2 Q_k |DF|^2 \right )
\label{Relthl}
\end{align}
where $\Delta F$ is the Laplacian of $S_r$ with the induced metric.
\end{Cor}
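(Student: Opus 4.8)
The plan is to read both identities off as metric traces, exploiting that by \eqref{metric} the isomorphism $T_F$ is in fact a linear isometry from $(T_p S_{r=F(p)}, \gamma)$ onto $(T_p S, \gamma_S)$. Thus, choosing any $\gamma$-orthonormal basis $\{e_A\}$ of $T_p S_{r=F(p)}$, the images $e_A' := T_F(e_A) = e_A - e_A(F)\, k$ form a $\gamma_S$-orthonormal basis of $T_p S$. Since $\theta_{k'}$ and $\theta_{\ell_S'}$ are by definition the traces of $K^{k'}$ and $K^{\ell_S'}$ with respect to the induced metric on $S$, I would write $\theta_{k'} = \sum_A K^{k'}(e_A', e_A')$ and $\theta_{\ell_S'} = \sum_A K^{\ell_S'}(e_A', e_A')$, and then substitute the transformation formulas of Corollary \ref{boostcor} with $X = Y = e_A$ and sum.

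For the first relation this is immediate: inserting $K^{k'}(\TX, \TY) = \alpha K^k(X, Y)$ gives $\theta_{k'} = \alpha \sum_A K^k(e_A, e_A) = \alpha \theta_k$. For the second, I would substitute $X = Y = e_A$ into the displayed expression for $K^{\ell_S'}$ in Corollary \ref{boostcor} and collapse each resulting sum into an intrinsic object on $S_{r=F(p)}$, using $e_A(F) = \gamma(\grad F, e_A)$ and $\grad F = \sum_A e_A(F)\, e_A$. Concretely, $\sum_A K^\ell(e_A, e_A) = \theta_\ell$, $\sum_A K^k(e_A, e_A) = \theta_k$, $\sum_A e_A(F)\,\s_\ell(e_A) = \s_\ell(\grad F)$, $\sum_A e_A(F)\, K^k(e_A, \grad F) = K^k(\grad F, \grad F)$, $\sum_A \Hess F(e_A, e_A) = \Delta F$, and $\sum_A e_A(F)^2 = |DF|^2$. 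Collecting the coefficients from Corollary \ref{boostcor} then reproduces exactly \eqref{Relthl}.

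There is no real obstacle here: the corollary is nothing more than the metric trace of the pointwise formulas already established in Corollary \ref{boostcor}. The only point demanding attention is conceptual rather than computational — one must keep in mind that the two expansions are traces with respect to two \emph{different} induced metrics ($\gamma$ on $S_{r=F(p)}$ and $\gamma_S$ on $S$), so that the isometry property \eqref{metric} of $T_F$ is precisely what licenses evaluating everything in a single orthonormal frame and expressing all terms as quantities on $S_r$.
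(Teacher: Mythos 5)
Your proposal is correct and coincides with the paper's (implicit) argument: the corollary is stated there as an immediate consequence of Proposition \ref{RelGraph} and Corollary \ref{boostcor}, obtained precisely by tracing those pointwise formulas, with the isometry property \eqref{metric} of $T_F$ justifying the use of a single $\gamma$-orthonormal frame pushed forward to a $\gamma_S$-orthonormal frame. Your bookkeeping of each traced term (including the $4\s_\ell(\grad F)$, $-4K^k(\grad F,\grad F)$ and $2Q_k|DF|^2$ coefficients) reproduces \eqref{Relthl} exactly, so nothing is missing.
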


Another useful identity that will play a role later is the evolution equation
for the connection of the normal bundle. For geodesic flows this identity is
known, see e.g. \cite{ChruscielPaetz}. Although again this case is all we
shall need in this paper, we state and prove the result in full generality (i.e.
for arbitrary $k$). 
\begin{Prop}
\label{evols}
With the same notation as above,
let $X \in \X(\Omega)$ be a vector field satisfying $[k,X] =0$ and
tangent to $S_0$. Then 
\begin{align}
k(s_{\ell}(X)) = - X(Q_k) - s_{\ell}(X) \theta_k +
(\mbox{div}_{S_r} K^k) (X) - D_X \theta_k
- \Eing(k,X)
\label{IdenEinKX}
\end{align}
where $\Eing$ is the Einstein tensor of $(\M,g)$.
\end{Prop}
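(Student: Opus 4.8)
The plan is to start directly from the definition $\s_\ell(X)=-\tfrac12\la\nabla_X\ell,k\ra=\tfrac12\la\ell,\nabla_X k\ra$, the second equality holding because $\la\ell,k\ra=-2$ is constant, and to differentiate along $k$:
\[
k(\s_\ell(X))=\tfrac12\la\nabla_k\ell,\nabla_X k\ra+\tfrac12\la\ell,\nabla_k\nabla_X k\ra .
\]
For the second term I would commute the covariant derivatives. Since $[k,X]=0$ and $\nabla_k k=Q_k k$, the Riemann tensor $R$ of $(\M,\gMm)$ gives $\nabla_k\nabla_X k=\nabla_X\nabla_k k+R(k,X)k=X(Q_k)k+Q_k\nabla_X k+R(k,X)k$, so that, using $\la\ell,k\ra=-2$ and $\la\ell,\nabla_X k\ra=2\s_\ell(X)$,
\[
\tfrac12\la\ell,\nabla_k\nabla_X k\ra=-X(Q_k)+Q_k\,\s_\ell(X)+\tfrac12\la\ell,R(k,X)k\ra .
\]

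For the remaining term I must compute $\nabla_k\ell$. Working in a null frame $\{k,\ell,e_A\}$ whose tangent part is Lie-dragged along the generator ($[k,e_A]=0$, with $e_A$ tangent to $S_0$, hence to every $S_r$), I would read off the components of $\nabla_k\ell$ from $\la\nabla_k\ell,\ell\ra=0$, $\la\nabla_k\ell,k\ra=-\la\ell,\nabla_k k\ra=2Q_k$, and $\la\nabla_k\ell,e_A\ra=-\la\ell,\nabla_{e_A}k\ra=-2\s_\ell(e_A)$, obtaining $\nabla_k\ell=-Q_k\ell-2\,\s_\ell^\sharp$, where $\s_\ell^\sharp$ is the $\gamma$-dual of $\s_\ell$ on $S_r$. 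A parallel decomposition of $\nabla_X k$ then yields $\tfrac12\la\nabla_k\ell,\nabla_X k\ra=-Q_k\,\s_\ell(X)-K^k(X,\s_\ell^\sharp)$. Adding the two contributions, the $Q_k\,\s_\ell(X)$ terms cancel and I arrive at the intermediate identity
\[
k(\s_\ell(X))=-X(Q_k)-K^k(X,\s_\ell^\sharp)+\tfrac12\la\ell,R(k,X)k\ra .
\]

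It then remains to rewrite the curvature term. Using the null form of the inverse metric $g^{-1}=-\tfrac12(k\otimes\ell+\ell\otimes k)+\gamma^{AB}e_A\otimes e_B$ together with the symmetries of $R$, I would show $\tfrac12\la\ell,R(k,X)k\ra=-\Ricg(k,X)+\gamma^{AB}\la R(e_A,X)k,e_B\ra$, where the tangential trace is extracted after the first Bianchi identity, the purely tangential piece $\gamma^{AB}\la R(k,X)e_A,e_B\ra$ dropping by antisymmetry against $\gamma^{AB}$. The tangential trace is evaluated by a contracted Codazzi computation: expanding $R(e_A,X)k=\nabla_{e_A}\nabla_X k-\nabla_X\nabla_{e_A}k-\nabla_{[e_A,X]}k$ and decomposing $\nabla_\cdot k$ and the vector second fundamental form $\mbox{II}(\cdot,\cdot)=\tfrac12K^\ell\,k+\tfrac12K^k\,\ell$ in the null frame, I expect
\[
\gamma^{AB}\la R(e_A,X)k,e_B\ra=(\mbox{div}_{S_r}K^k)(X)-D_X\theta_k-\s_\ell(X)\,\theta_k+K^k(X,\s_\ell^\sharp).
\]

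The main obstacle is precisely this contracted Codazzi identity: the careful bookkeeping of the terms generated when the \emph{normal} vector $k$ is differentiated, which is where the connection one-form $\s_\ell$ enters through the normal part of $\nabla_{e_A}e_B$ and produces the quadratic contributions $\s_\ell(X)\,\theta_k$ and $K^k(X,\s_\ell^\sharp)$. Once this is established, substitution into the intermediate identity cancels the two $K^k(X,\s_\ell^\sharp)$ terms; and since $X$ is tangent to $S_r$ one has $\la k,X\ra=0$, whence $\Ricg(k,X)=\Eing(k,X)$. Collecting everything yields
\[
k(\s_\ell(X))=-X(Q_k)-\s_\ell(X)\,\theta_k+(\mbox{div}_{S_r}K^k)(X)-D_X\theta_k-\Eing(k,X),
\]
as claimed.
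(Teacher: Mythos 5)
Your proposal is correct and follows essentially the same route as the paper: differentiating $\s_{\ell}(X)=\tfrac12\la \ell,\nabla_X k\ra$ along $k$, commuting derivatives via $[k,X]=0$ and $\nabla_k k = Q_k k$, using the decomposition $\nabla_k \ell = -Q_k \ell - 2\s_{\ell}^{\flat}$ to reach the same intermediate identity, and then trading the curvature term for the contracted Codazzi identity plus $\Eing(k,X)=\Ricg(k,X)$. The only cosmetic difference is that the paper quotes the Codazzi equation from O'Neill and traces it, whereas you propose deriving the contracted version directly from the definition of the curvature tensor.
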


\begin{proof}
Since $\s_{\ell}(X) = \frac{1}{2} \la \nabla_X k, \ell \ra$, we compute
\begin{align}
k(\s_{\ell}(X))
&=\frac{1}{2}\langle\nabla_k\nabla_{X}k,\ell\rangle
+\frac{1}{2}\langle\nabla_{X}k,\nabla_k\ell\rangle \nonumber \\
& = \frac{1}{2} \la \nabla_X \nabla_k k + \Riemg(k,X)k, \ell \ra
+ \frac{1}{2}\langle\nabla_{X}k,\nabla_k\ell\rangle \nonumber \\
& = - X(Q_k) + Q_k \s_{\ell}(X) + \frac{1}{2} \Riemg(\ell,k,k,X) 
+ \frac{1}{2}\langle\nabla_{X}k,\nabla_k\ell\rangle \label{difsa}
\end{align}
where in the second equality we used the definition of the curvature
operator $\Riemg$ and $[k,X]=0$ and in the third one we used
$\nabla_k k = Q_k k$. For the last term we use the immediate decomposition
\begin{align*}
\nabla_k \ell = - 2 \s^{\flat}_{\ell} - Q_k \ell
\end{align*}
where $\s^{\flat}_{\ell}$ is the vector metrically related to $\s_{\ell}$. Hence
$
\frac{1}{2}\langle\nabla_{X}k,\nabla_k\ell\rangle =
- K^k(X,\s^{\flat}_{\ell}) - Q_k \s_{\ell}(X)$
and (\ref{difsa}) becomes
\begin{align}
k(\s_{\ell}(X)) =
- X(Q_k) + \frac{1}{2} \Riemg(\ell,k,k,X) 
- K^k(X,\s^{\flat}_{\ell}). \label{inter1}
\end{align}
To ellaborate this further we use the Codazzi identity applied to $S_r$ along 
$k$ \cite{ONeill}
\begin{align*}
\Riemg(Y,X,Z,k) = (D_Y K^k)(X,Z) - (D_X K^k)(Y,Z) + \s_{\ell}(Y) K^k(X,Z)
- \s_{\ell}(X) K^k(Y,Z) 
\end{align*}
where $X,Y,Z \in \X(S_r)$ and $D$ is the covariant derivative of $S_r$.
Taking trace on $S_r$ in the first and third  indices and using that
\begin{align*}
\tr_{S_r} \left ( \Riemg(\cdot,X,\cdot,k) \right )
 = \Ricg(X,k) + \frac{1}{2} \Riemg(k,X,\ell,k)
= \Eing(X,k) + \frac{1}{2} \Riemg(k,X,\ell,k),
\end{align*}
which follows from the fact that $-\frac{1}{2} ( k \otimes \ell 
+ \ell \otimes k)$ is the metric of $(T_p S_r)^{\perp}$, we obtain
the contracted Codazzi identity along $k$ 
\begin{align}
\Eing(X,k) + \frac{1}{2} \Riemg(\ell,k,k,X) 
& = 
 (\mbox{div}_{S_r} K^k) (X) -  D_X \theta_k + K^{k}(\s^{\flat}_{\ell},X)
- \s_{\ell}(X) \theta_k \label{inter2} 
\end{align}
Eliminating $\Riemg(\ell,k,k,X)$ in (\ref{inter1})-(\ref{inter2})
yields the result.
\end{proof}

\section{Null Asymptotic flatness of $\Omega$
and asymptotic behaviour}
\label{Sect2}

The previous section involved general properties of $\Omega$ of local nature
and valid in any spacetime dimension. 
We now impose global conditions and restrict to dimension four.
First of all we
assume that $\Omega$ admits a global cross section $S_0$
(i.e. a smooth embedded spacelike
surface intersected precisely once by every inextendible curve along the 
null generators) of spherical topology $S_0$.  We also assume
that for one (and hence any) choice of geodesic null generator $k$ 
(i.e. satisfying $\nabla_k k =0$) the corresponding integral curve
starting at $p \in S_0$
has maximal domain $(-\infty,\lambda_{+}(p))$, i.e. the null
generators are past complete. 
After possibly removing portions of $\Omega$ lying to the future of $S_0$
we can assume
that $\Omega$ is foliated by the  level sets $\{ S_r \}$
of the function $r \in \F(\Omega)$ defined
by $k(r) = -1$, $r|_{S_0} =r_0$ and that all these level sets are diffeomorphic
to $S_0$ (so that in particular $\Omega = S_0 \times (R,\infty)$).
The function $r$ is called {\bf level set function of $k$}.
If we change the selection of null geodesic generator $k$,
the set of points to be removed is 
different, but since we are only interested in the past of $S_0$ this is irrelevant,
and we keep the same name $\Omega$.
A null hypersurface $\Omega$ satisfying these properties is called
{\bf extending to past null infinity}.

In order to define asymptotic flatness we need to impose decay of
various objects at infinity. First note that
covariant tensor fields $T$ on $\Omega$ completely
orthogonal to $k$ (i.e. satisfying
$T(k,\cdots) = T(\dots,k)=0$) are in one-to-one correspondance
with  smooth collections of covariant tensor fields $T(r)$ on each level
set $S_r$. We call such tensors {\bf transversal}. An immediate
example is the first fundamental form $\gamma$. The collection
of second fundamental forms $K^k_{S_r}$ defines a transversal
tensor denoted simply by $K^k$ (this is compatible with the
notation already used in the previous section). Similarly we have
the transversal tensors $K^{\ell}$ and $\s_{\ell}$. A transveral
tensor field $T$ is {\bf positive definite} iff $T(r)$ is positive
definite for all $r$. A transversal tensor field $T$ is called {\bf
Lie constant} iff $\pounds_{k} T =0$.

A vector field $X \in \X(S_0)$ can be extended uniquely to 
$\X(\Omega)$ by $[k,X]=0$. A local basis $\{ X_A \}$ on $S_0$
extended this way defines a
basis of each level set $S_r$. By $\{X_A\}$ we always mean any
such basis. A transversal tensor field $T$ on $\Omega$ 
is $T = O(1)$ iff $T_{A_1 \cdots A_q}:=
T(X_{A_1},\cdots,X_{A_q})$ is uniformly bounded. 
We write $T = O_n (r^{-q})$, $q \in \mathbb{R}$,
$n \in \mathbb{N}$ iff
\begin{align*}
r^q T = O(1), \quad r^{q+1} \pounds_k T = O(1), \quad \cdots \quad, 
r^{q+n} \underbrace{\pounds_{k} \cdots \pounds_k}_{n} T = O(1).
\end{align*}
We also write $T=o(r^{-q})$ iff $\lim_{r\rightarrow \infty}
r^q T(r)_{A_1\cdots A_q} =0$
and $T = o_n(r^{-q})$ iff $r^{i+q}
(\pounds_k)^i T = o(1)$ for all $i=0,1,\cdots,n$.
Given a transversal tensor (field) $T$ the tensor $\pounds_{X_A} T$
is also transversal. We write $ T = o^{X}_{n}(r^{-q})$ iff
\begin{align*}
r^q \underbrace{\pounds_{X_{A_1}} \cdots \pounds_{X_{A_i}}}_i T = o(1) \quad
\quad \forall i=0,1,\cdots, n
\end{align*}
It is clear that all these definitions are independent of the
choice of $\{ X_A \}$.

\begin{Def}
\label{AF}
Let $(\mathcal{M},g)$ be a four-dimensional spacetime.
A null hypersurface $\Omega$ is 
{\bf past asymptotically flat} if it 
extends to past null infinity and
there exists a choice of cross section
$S_0$ and null geodesic
generator $k$ with corresponding level set function $r$
with the following properties:
\begin{itemize}
\item[(i)]  There exist two symmetric 2-covariant transversal
and Lie constant  tensor fields $\qh$ and $h$ 
such that
$\gammatilde := \gamma - (r-r_0)^2 \qh - (r-r_0) h$
is $\gammatilde = o_1(r) \cap o^{X}_{2}(r)$ 
\item[(ii)] There exists a transversal, Lie constant
one-form $\sone$ such that
$\stilde_{\ell} := \s_{\ell} - \frac{\sone}{r-r_0} $
is $\stilde_{\ell} = o_1(r^{-1})$.
\item [(iii)] There exist Lie constant functions  $\thlo$ and
$\thlone$ such that
$\thltilde := \theta_{\ell} - \frac{\thlo}{r-r_0} - \frac{\A}{(r-r_0)^2}$ 
is $\thltilde = o(r^{-2})$.
\item[(iv)]
The scalar $\Riemg(X_A,X_B,X_C,X_D)$
along $\Omega$ is such that 
$\lim_{r\rightarrow \infty} \frac{1}{r^2}\Riemg(X_A,X_B,X_C,X_D)$ exists while
its double trace satisfies
$2\Eing(k,\ell)- \, \Scal^g - \frac{1}{2} \Riemg(\ell,k,\ell,k) = o (r^{-2})$. 
\end{itemize}
\end{Def}

This definition of asymptotic flatness is weaker than most existing
definition in the literature. Sometimes it will be convenient
to supplement it with a stronger notion where additional decay for 
some components of the Einstein tensor and for the remainder tensor $\tilde{\gamma}$ is assumed. 
Specifically, we say that an asymptotically flat null hypersurface $\Omega$ 
satisfies  the {\bf energy flux decay condition}
if 
\begin{align*}
\Eing(\ell,X_A)|_{\Omega} = o(r^{-2}), \quad \quad \pounds_{k} \gammatilde = o^{X}_{1}(1).  
\end{align*}

The name is motivated by the
analogous role of density flux that the Einstein tensor component
$\Eing(\ell,X_A)$ plays in the constraint equations for 
null hypersurfaces (see e.g. \cite{Mars2013}). 

Given a past asymptotically flat null hypersurface $\Omega$ with a choice
of $k$ and level set function $r$, it is convenient to define
$\rb := r - r_0$. The following Proposition determines
the asymptotic expansion of $K^k$ and provides an explicit
expression for $\thlo$.
\begin{Prop}
\label{leadingthetal}
Let $\Omega$ be a past asymptotically flat null hypersurface
with a choice of affinely parametrized null generator $k$ and
corresponding level set function $r$. Let $\gamma(r)^{AB}$
be the inverse of $\gamma(r)$. Then
\begin{align}
\gamma(r)^{AB} & = \frac{1}{\rb^2}\qh^{AB}-\frac{1}{\rb^3}\hh^{AB}+o(r^{-3}),
\label{metriccontravariant} \\
K^k_{AB} & =-\qh_{AB}\rb-\frac{1}{2}h_{AB}+ o(1), 
\label{Kk} \\
\theta_{\ell} & = \frac{2 \K_{\qh}}{\rb} + \frac{\A}{\rb^2} + o (r^{-2})
\label{thlo}
\end{align}
$\qh^{AB}$ is the inverse of $\qh_{AB}$, indices in
hatted tensors are raised and lowered with these
metrics and $\K_{\qh}$ is the Gauss curvature of $\qh_{AB}$. 
\end{Prop}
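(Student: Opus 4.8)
I would prove the three expansions in the stated order, since (\ref{metriccontravariant}) and (\ref{Kk}) both feed into the identification of $\thlo$.

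\emph{Metric inverse.} Writing $\gamma_{AB}=\rb^{2}\qh_{AB}+\rb\,h_{AB}+\gammatilde_{AB}$, Definition \ref{AF}(i) gives $\gammatilde_{AB}=o(r)$, so after factoring, $\gamma_{AB}=\rb^{2}\bigl(\qh_{AB}+\rb^{-1}h_{AB}+\rb^{-2}\gammatilde_{AB}\bigr)$ with $\rb^{-1}h_{AB}+\rb^{-2}\gammatilde_{AB}=O(r^{-1})+o(r^{-1})$. The bracket is therefore $\qh_{AB}+o(1)$ and invertible for large $r$, and I would expand its inverse by the Neumann series $(\qh+A)^{-1}=\qh^{-1}-\qh^{-1}A\qh^{-1}+O(A^{2})$ with $A=\rb^{-1}h+\rb^{-2}\gammatilde$. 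Keeping the two leading orders and multiplying by $\rb^{-2}$ gives (\ref{metriccontravariant}), the $\rb^{-2}\gammatilde$ contribution being $o(r^{-3})$ after the global factor.

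\emph{The tensor $K^{k}$.} The point is that, since $k$ is geodesic and the frame satisfies $[k,X_{A}]=0$, one has $\nabla_{k}X_{A}=\nabla_{X_{A}}k$ and hence $k(\gamma_{AB})=\la\nabla_{X_{A}}k,X_{B}\ra+\la X_{A},\nabla_{X_{B}}k\ra=2K^{k}_{AB}$, i.e. $K^{k}=\tfrac12\pounds_{k}\gamma$ as transversal tensors. Using $k(\rb)=-1$, the Lie-constancy of $\qh$ and $h$, and $\pounds_{k}\gammatilde=o(1)$ (which is exactly the $i=1$ clause of $\gammatilde=o_{1}(r)$), differentiating $\gamma_{AB}=\rb^{2}\qh_{AB}+\rb\,h_{AB}+\gammatilde_{AB}$ along $k$ yields $2K^{k}_{AB}=-2\rb\,\qh_{AB}-h_{AB}+o(1)$, which is (\ref{Kk}). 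Tracing this against (\ref{metriccontravariant}) produces the by-product $\theta_{k}=\gamma^{AB}K^{k}_{AB}=-2\rb^{-1}+o(r^{-1})$, needed below.

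\emph{Identification $\thlo=2\K_{\qh}$.} Definition \ref{AF}(iii) already supplies the form $\theta_{\ell}=\thlo\rb^{-1}+\A\rb^{-2}+o(r^{-2})$; the content is to pin down the coefficient $\thlo$. The natural tool is the trace of the Gauss equation of $S_{r}$ in the null frame $\{k,\ell\}$ with $\la k,\ell\ra=-2$, an identity of the form $2\K_{\gamma}=\gamma^{AC}\gamma^{BD}\Riemg_{ABCD}+c_{1}\,\theta_{k}\theta_{\ell}+c_{2}\,K^{k}\cdot K^{\ell}$, where $\K_{\gamma}$ is the Gauss curvature of $\gamma(r)$, $K^{k}\cdot K^{\ell}=\gamma^{AC}\gamma^{BD}K^{k}_{AB}K^{\ell}_{CD}$, and $c_{1},c_{2}$ are fixed by the normalization. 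I would extract the $\rb^{-2}$ coefficient of each term. First, since $\rb$ is constant on $S_{r}$ and $\gamma(r)=\rb^{2}(\qh+o(1))$ with $C^{2}$ convergence of the bracket to $\qh$ in the $\{X_{A}\}$-frame guaranteed by the $o_{2}^{X}(r)$ control on $\gammatilde$, the scaling law $\K_{\rb^{2}\cdot}=\rb^{-2}\K_{\cdot}$ gives $\K_{\gamma}=\rb^{-2}\K_{\qh}+o(r^{-2})$. Second, a null-frame decomposition of the scalar curvature shows that the tangential double trace equals exactly $\gamma^{AC}\gamma^{BD}\Riemg_{ABCD}=2\Eing(k,\ell)-\Scal^{g}-\tfrac12\Riemg(\ell,k,\ell,k)$, which is precisely the combination that condition (\ref{AF})(iv) forces to be $o(r^{-2})$; hence this term does not contribute at order $\rb^{-2}$. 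Inserting $\theta_{k}=-2\rb^{-1}+o(r^{-1})$ and $\theta_{\ell}=\thlo\rb^{-1}+o(r^{-1})$ then reduces the identity to an equation for $\thlo$ once the term $K^{k}\cdot K^{\ell}$ is controlled.

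\emph{Main obstacle.} The delicate step is $K^{k}\cdot K^{\ell}$, which a priori involves $K^{\ell}$, a tensor not controlled directly by Definition \ref{AF}. Writing $(K^{k})^{A}{}_{B}=-\rb^{-1}\delta^{A}{}_{B}+W^{A}{}_{B}$ with $W$ trace-free and $W=O(r^{-2})$ (from (\ref{Kk}) and (\ref{metriccontravariant})), one gets $K^{k}\cdot K^{\ell}=-\rb^{-1}\theta_{\ell}+W^{A}{}_{B}(K^{\ell})^{B}{}_{A}$, whose second term is $o(r^{-2})$ precisely when the shear (trace-free part) of $K^{\ell}$ is $o(\rb)$, i.e. when the leading part of $K^{\ell}$ is pure trace. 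Establishing this growth bound on $K^{\ell}$, which I expect to obtain either as a preliminary consequence of asymptotic flatness or from the transport equation for $K^{\ell}$ along $k$ together with the curvature decay in (iv), is the main hurdle. Granting it, $K^{\ell}$ enters only through its trace $\theta_{\ell}$ at this order, so both quadratic terms reduce to multiples of $\thlo\rb^{-2}$; comparing with $2\K_{\gamma}=2\K_{\qh}\rb^{-2}+o(r^{-2})$ then isolates $\thlo=2\K_{\qh}$, which is (\ref{thlo}).
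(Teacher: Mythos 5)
Your treatment of (\ref{metriccontravariant}) and (\ref{Kk}) coincides with the paper's proof (Neumann inversion of $\gamma_{AB}=\rb^{2}\qh_{AB}+\rb\, h_{AB}+\gammatilde_{AB}$, and $K^{k}=\tfrac12\pounds_{k}\gamma$ applied to that expansion), and your reduction of (\ref{thlo}) to the double-traced Gauss equation --- with the tangential double trace of $\Riemg$ identified as $2\Eing(k,\ell)-\Scal^g-\tfrac12\Riemg(\ell,k,\ell,k)$ and with $\K_{\gamma(r)}=\rb^{-2}\K_{\qh}+o(r^{-2})$ --- is exactly the paper's route. However, the step you yourself label the \emph{main obstacle} and then merely grant is a genuine gap: without a growth bound on $K^{\ell}$ (concretely, the trace-free part of $K^{\ell}_{AB}$ must be $o(\rb^{2})$ in the frame components, the trace part being already $O(\rb)$ by item (iii)), the cross term in the traced Gauss equation is not $o(r^{-2})$ and the identification $\thlo=2\K_{\qh}$ does not follow. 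Nothing in the clauses of Definition \ref{AF} that you invoke supplies this bound. (A side slip: your $W^{A}{}_{B}$ cannot be both trace-free and equal to $(K^{k})^{A}{}_{B}+\rb^{-1}\delta^{A}{}_{B}$, since $\theta_{k}+2\rb^{-1}\neq0$ in general; this is harmless, but drop the trace-free claim.)

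The paper closes precisely this step with the one ingredient of item (iv) that you never use: the existence of $\lim_{r\rightarrow\infty}r^{-2}\Riemg(X_A,X_B,X_C,X_D)$. Inserting into the \emph{untraced} Gauss equation the expansions already established, $\Riem^{\gamma(r)}_{ABCD}=\rb^{2}\Riem^{\qh}_{ABCD}+O(\rb)$ and $K^{k}_{AB}=-\qh_{AB}\rb+O(1)$, this existence statement is used to conclude that $K^{\ell}_{AB}=\rb\,K^{\ell}_{(0)}{}_{AB}+o(\rb)$ with $K^{\ell}_{(0)}$ Lie constant (equation (\ref{Kell2}) of the paper), which is more than the bound you need; the cross term then contributes $-\rb^{-2}\tr_{\qh}K^{\ell}_{(0)}+o(r^{-2})=-\rb^{-2}\thlo+o(r^{-2})$, and comparison with $\theta_{\ell}\theta_{k}=-2\thlo\rb^{-2}+o(r^{-2})$ and $2\K_{\gamma(r)}$ gives $\thlo=2\K_{\qh}$. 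By contrast, neither of the escape routes you suggest is available under the stated hypotheses: the transport equation for $K^{\ell}$ along $k$ has as source curvature components of the type $\Riemg(k,X_A,\ell,X_B)$, which Definition \ref{AF} does not control at all, and items (i)--(iii) constrain $K^{\ell}$ only through its trace $\theta_{\ell}$, so no ``preliminary consequence of asymptotic flatness'' yields the shear bound. The missing lemma must be extracted from the untraced Gauss equation together with item (iv), as the paper does; with that inserted, the rest of your argument goes through and reproduces the paper's proof.
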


\begin{proof}
Expression (\ref{metriccontravariant}) is an immediate consequence
of item (i) in the definition of asymptotic flatness, namely
\begin{equation}
\label{metriccovariant}
\gamma_{AB}=\qh_{AB}\rb^2+h_{AB}\rb+o_1(\rb) \cap o^{X}_{2}(\rb). 
\end{equation}
For expression (\ref{Kk}) we use the standard identity
\begin{equation*}
k(\gamma_{AB})=k\langle X_A,X_B\rangle=\langle\nabla_k X_A,X_B\rangle+\langle X_A,\nabla_k X_B\rangle=2K^k_{AB}, 
\end{equation*}
the last step following from $[k,X_A]=0$. Inserting (\ref{metriccovariant})
yields (\ref{Kk}) immediately. Concerning the expansion for $\theta_{\ell}$
we invoke the Gauss identity for $(S_{r},\gamma(r))$, namely \cite{ONeill}
\begin{align*}
\Riemg(X_A,X_B,X_C,X_D) = \Riem^{\gamma(r)}_{ABCD} + \la K_{BC}, K_{AD} \ra
- \la K_{BD}, K_{AC} \ra
\end{align*}
where $K$ is the second fundamental form vector. Decomposing
$K = - \frac{1}{2} ( K^k \ell + K^{\ell} k)$ we have
\begin{align}
\Riem^g(X_A,X_B,X_C,X_D) = \Riem^{\gamma(r)}_{ABCD}-\frac{1}{2}(K^k_{BC} K^{\ell}_{AD} + K^k_{AD} K^{\ell}_{BC})
+\frac{1}{2}(K^k_{AC} K^{\ell}_{BD} + K^k_{BD} K^{\ell}_{AC}) 
\label{Gauss} 
\end{align}
The decomposition (\ref{metriccovariant}) implies that
$\Riem^{\gamma(r)}_{ABCD}= \rb^2 \Riem^{\qh}_{ABCD} + O(\rb)$ and given that 
$K^k_{AB} = - \qh_{AB} \rb + O(1)$, it follows from item (iv) in 
Definition \ref{AF} that $K^{\ell}_{AB}$ is of the form 
\begin{align}
K^{\ell}_{AB} = \rb K^{\ell}_{(0)}{}_{AB} + o(\rb)
\label{Kell2}
\end{align}
with $K^{\ell}_{(0)}$ a transverse 
Lie constant symmetric tensor on $\Omega$.
Taking trace in the $AC$
and $BD$ indices in (\ref{Gauss}) and using the fact that
$g^{\flat} = \gamma(r)^{AB} X_A \otimes X_B - \frac{1}{2} ( k \otimes \ell
+ \ell \otimes k)$ yields
\begin{align*}
2\Eing(k,\ell)- \, \Scal^g - \frac{1}{2} \Riemg(\ell,k,\ell,k) = 2 \K_{\gamma(r)} 
+ \theta_{\ell} \theta_k - K^{k}_{AB} K^{\ell}{}^{AB}.
\end{align*}
Now, $\K_{\gamma(r)} = \frac{1}{\rb^2} \K_{\qh} + o (r^{-2})$ and
the decompositions (\ref{metriccontravariant})-(\ref{Kk}), 
(\ref{Kell2})  and the
trace condition in item (iv) of Definition \ref{AF} imply
\begin{align*}
0 = \frac{1}{\rb^2} \left ( 2 \K_{\qh} - \thlo \right ) + o( r^{-2})
\end{align*}
which, together with item (iii) in Definition \ref{AF} proves (\ref{thlo}).
\end{proof}

\begin{remark}
Note that the expansion for $K^k$ only depends on item (i) in the
definition of asymptotic flatness. The expression for $\thlo$
depends on items (i), (iii) and (iv).
\end{remark}

\vspace{3mm}

\begin{remark}
We can raise the index to the tensor $K^k(r)$ with the contravariant
metric $\gamma^{AB}$. Combining the asymptotic expansions
(\ref{metriccontravariant}) and (\ref{Kk}) yields
\begin{align}
K^k(r)^{A}_{\phantom{A}B} = - \frac{1}{\rb} \delta^A_{\phantom{A}B}
+ \frac{1}{2} \hh^A_{\phantom{A}B}  \frac{1}{\rb^2} + o(r^{-2}) 
\label{KkEnd}
\end{align}
and taking trace
\begin{align}
\theta_k = 
- \frac{2}{\rb} + \frac{1}{2} (\tr_{\qh} \hh) \frac{1}{\rb^2}  +  
o(r^{-2}) 
:=  - \frac{2}{\rb} + \c \frac{1}{\rb^2}  + 
o(r^{-2}).
\label{thkexp}
\end{align}
\end{remark}

\vspace{3mm}

It will be convenient to endow each levet set
level set $\{S_r\}$  with a covariant derivative independent of $r$.
The natural choice is $\qh$ which is Lie constant, a metric on each $S_r$
and gives the leading term of the asymptotic expansion
of $\gamma(r)$. Denote by $\Dh$ the covariant derivate of $\qh$.
In the following lemma we find 
asymptotic expansion of the difference tensor
\begin{equation*} 
D_{X} Y-\Dh_{X} Y =Q(X,Y) 
\end{equation*}
and apply it to relate the Laplacians in the metrics $\gamma(r)$ 
and $\qh$  of a function. This will be needed
later when relating two different foliations on $\Omega$.
\begin{Lem}
The difference tensor $Q$ admits the decomposition
\begin{align}
Q^C_{AB} & =\frac{1}{2}(\Dh_A\hh^C_{\phantom{C}B}+\Dh_B\hh^C_{\phantom{C}A}-\Dh^C h_{AB})\frac{1}{\rb}+O(\rb^{-2}).
\label{ExpQ}
\end{align}
Moreover, if $F$ is a Lie constant function on $\Omega$ then
\begin{align}
\label{laplaciandev}
\triangle_\gamma \Ft &
=\triangle_{\qh} \Ft\frac{1}{\r^2}+\left(-\hh^{AB}\Dh_A\Dh_B \Ft-(\Dh_A\hh^{CA})
\Ft_{,C}+(\Dh^C\c)\Ft_{,C} \right)\frac{1}{\r^3}+o(\r^{-3})
\end{align}
\end{Lem}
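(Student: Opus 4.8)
The statement has two parts: the expansion \eqref{ExpQ} for the difference tensor $Q$, and the consequent expansion \eqref{laplaciandev} for the Laplacian. I would treat them sequentially, deriving the second from the first.

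For \eqref{ExpQ}, the plan is to use the standard coordinate-free formula for the Christoffel difference of two torsion-free connections sharing the same tangent bundle. Since both $D$ and $\hat D$ are Levi-Civita connections, for the metric $\gamma$ and $\qh$ respectively, the difference tensor satisfies the well-known identity
\begin{align*}
Q(X,Y,Z) := \la Q(X,Y),Z\ra_{\qh}
= \tfrac{1}{2}\left( (\Dh_X \gamma)(Y,Z) + (\Dh_Y \gamma)(X,Z) - (\Dh_Z \gamma)(X,Y)\right),
\end{align*}
where the covariant derivatives of $\gamma$ are taken with respect to $\hat D$ and I have lowered the last index with $\qh$. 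This follows because $\hat D \qh = 0$, so replacing $\gamma$ by $\gamma - \rb^2 \qh$ inside the $\Dh$-derivatives does not change the right-hand side. Now I would insert the asymptotic expansion \eqref{metriccovariant}, namely $\gamma_{AB} = \qh_{AB}\rb^2 + h_{AB}\rb + o_1(\rb)\cap o^X_2(\rb)$. The $\rb^2\qh$ term is annihilated by $\Dh$, so the leading surviving contribution comes from the $h_{AB}\rb$ term, which is Lie constant and hence $r$-independent as a transversal tensor; this produces precisely the factor $\tfrac{1}{\rb}$ after raising the free index $C$ with $\gamma^{CD} = \rb^{-2}\qh^{CD} + O(\rb^{-3})$ from \eqref{metriccontravariant}. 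Tracking which index is raised with $\gamma$ versus lowered with $\qh$ gives the stated combination $\tfrac{1}{2}(\Dh_A \hh^C_{\ B} + \Dh_B \hh^C_{\ A} - \Dh^C h_{AB})$, with $\hh$ the $\qh$-raised version of $h$, and the $o^X_2(\rb)$ control on the remainder guarantees the error is genuinely $O(\rb^{-2})$ after two derivatives act on $\gammatilde$.

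For \eqref{laplaciandev}, I would start from $\triangle_\gamma F = \gamma^{AB}(\Dh_A \Dh_B F - Q^C_{AB} F_{,C})$, valid because $F$ is a scalar and $D_A D_B F = \Dh_A\Dh_B F - Q^C_{AB}F_{,C}$. Substituting the contravariant expansion \eqref{metriccontravariant} into the Hessian term gives $\gamma^{AB}\Dh_A\Dh_B F = \rb^{-2}\triangle_{\qh}F - \rb^{-3}\hh^{AB}\Dh_A\Dh_B F + o(\rb^{-3})$, accounting for the first term and the first piece of the $\rb^{-3}$ coefficient. Into the $Q$ term I would insert \eqref{ExpQ} contracted with the leading $\rb^{-2}\qh^{AB}$: this yields $-\rb^{-3}\,\qh^{AB}\cdot\tfrac{1}{2}(\Dh_A\hh^C_{\ B} + \Dh_B\hh^C_{\ A} - \Dh^C h_{AB})F_{,C}$. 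The trace of the symmetric combination simplifies using $\qh^{AB}\Dh^C h_{AB} = \Dh^C(\tr_{\qh}h) = \Dh^C(2\c)$ by the definition $\c = \tfrac12 \tr_{\qh}\hh$ from \eqref{thkexp}, and $\qh^{AB}\Dh_A\hh^C_{\ B} = \Dh_A \hh^{CA}$; combining the two symmetric terms and the divergence term produces exactly $(-\Dh_A\hh^{CA} + \Dh^C\c)F_{,C}$ after the factor $\tfrac12$ cancels against the doubled symmetric contribution. Collecting everything at order $\rb^{-3}$ gives the bracketed coefficient in \eqref{laplaciandev}.

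The main obstacle I anticipate is purely bookkeeping rather than conceptual: keeping straight, at each stage, which indices are raised with $\gamma^{AB}$ versus $\qh^{AB}$, and ensuring that the various $o$- and $O$-remainders combine to the claimed $o(\rb^{-3})$. In particular, the $Q$ term is multiplied by $\gamma^{AB} = O(\rb^{-2})$, so I need the subleading $O(\rb^{-2})$ part of $Q$ to contribute only at order $\rb^{-4}$, which it does; and I must confirm that the $o^X_2(\rb)$ hypothesis on $\gammatilde$ is exactly strong enough to let two $\hat D$-derivatives act while preserving an $o(\rb^{-3})$ error in the Laplacian. The trace identity $\qh^{AB}\Dh^C h_{AB} = 2\Dh^C\c$ is where the definition of $\c$ enters and must be used carefully, since $\c$ was defined via the trace of $\hh$ rather than $h$; the distinction between raising with $\qh$ before or after tracing is what makes the coefficient come out clean.
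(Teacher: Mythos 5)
Your proposal is correct in substance and follows essentially the same route as the paper's own proof: the standard (Wald) difference-tensor formula with the asymptotic expansions of $\gamma_{AB}$ and $\gamma^{AB}$ inserted, then $\triangle_\gamma \Ft=\gamma^{AB}(\Dh_A\Dh_B \Ft-Q^C_{AB}\Ft_{,C})$ together with the trace identity $\tr_{\qh}h=2\c$ to produce the coefficient $-\Dh_A\hh^{CA}\Ft_{,C}+(\Dh^C\c)\Ft_{,C}$.

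One correction, though: the auxiliary identity as you display it is false. Since $D$ is the Levi-Civita connection of $\gamma$, the Koszul-type formula lowers the index of $Q$ with $\gamma$, i.e. $\la Q(X,Y),Z\ra_{\gamma}=\tfrac12\left((\Dh_X\gamma)(Y,Z)+(\Dh_Y\gamma)(X,Z)-(\Dh_Z\gamma)(X,Y)\right)$, equivalently $Q^C_{AB}=\tfrac12\gamma^{CD}(\Dh_A\gamma_{DB}+\Dh_B\gamma_{DA}-\Dh_D\gamma_{AB})$; with the $\qh$-lowering you wrote, the two sides differ by a factor of order $\rb^{2}$, and its justification is $D\gamma=0$ plus torsion-freeness, not $\Dh\qh=0$ (the latter is what you correctly invoke to drop the $\rb^2\qh$ term inside the $\Dh$-derivatives). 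The slip turns out to be harmless in your execution, because you then raise the free index with $\gamma^{CD}=\rb^{-2}\qh^{CD}+O(\rb^{-3})$ --- which is precisely what undoes a $\gamma$-lowering --- so the expansion you obtain for $Q^C_{AB}$, and everything downstream in the Laplacian computation, coincides with the paper's; but had you instead raised with $\qh^{CD}$, as your stated identity would dictate, the whole power counting would come out wrong by $\rb^{2}$.
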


\begin{proof}
We use the general formula for the difference tensor of Levi-Civita covariant
derivatives, see e.g. \cite{Wald},
\begin{equation*}
Q^C_{AB}=\frac{1}{2}\gamma^{CD}(\Dh_A\gamma_{DB}+\Dh_B\gamma_{DA}-\Dh_D\gamma_{AB}).
\end{equation*} 
Given that $\Dh_A\gamma_{DB}= (\Dh_A h_{DB}) \rb+O(1)$ and
$\gamma^{CD} = \frac{1}{\rb^2}\qh^{CD}+O(\rb^{-3})$, expression (\ref{ExpQ})
follows. For the Laplacian we use
\begin{align*}
\triangle_\gamma \Ft & = \gamma^{AB} D_A D_B \Ft =
\gamma^{AB} \left ( \Dh_A \Dh_B \Ft  - Q^C_{AB} \Dh_C \Ft \right ) \\
 & =\left(\frac{1}{\r^2}\qh^{AB}-\frac{1}{\r^3}\hh^{AB}+o(\r^{-3})\right)
\left(\Dh_{A}\Dh_{B}\Ft-\frac{1}{2}(\Dh_A\hh^C_{\phantom{C}B}
+\Dh_B\hh^C_{\phantom{C}A}-\Dh^C h_{AB})
\Ft_{,C}\frac{1}{\rb}+O(\rb^{-2})     \right)  \\
& =\frac{1}{\rb^2}\triangle_{\qh}\Ft+\left(-\hh^{AB}\Dh_{A}\Dh_{B}\Ft
-(\Dh_A\hh^{CA}) \Ft_{,C}
+ \frac{1}{2}\Dh^C(\qh^{AB}h_{AB})\Ft_{,C}  \right)\frac{1}{\rb^3}+o(\rb^{-3})
\end{align*}
which is (\ref{laplaciandev}) after recalling that $\tr_{\qh} \hh = 2 \c$.
\end{proof}

We conclude this section by showing that
the leading term $\sone$ of $\s_{\ell}$ is
fully determined in terms of the rest of objects
whenever the energy flux decay condition 
is assumed.
\begin{Prop}
\label{transversalProp}
Let $\Omega$ be a past asymptotically flat null hypersurface and
assume that the energy flux decay condition holds. Then
\begin{align*}
\sone_A =\Dh_A \c -\frac{1}{2}\Dh_B\hh^B_{\phantom{B}A}.
\end{align*}
\end{Prop}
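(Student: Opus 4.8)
The plan is to read off $\sone$ by substituting the asymptotic expansions of Definition \ref{AF} and Proposition \ref{leadingthetal} into the evolution equation for the normal connection (Proposition \ref{evols}) and matching the coefficients of $\rb^{-2}$. Since the generator is geodesic we have $Q_k=0$ and $X(Q_k)=0$, so for any element $X=X_A$ of the Lie-constant basis (which is exactly the class of vector fields allowed in Proposition \ref{evols}) the identity reads
\begin{equation*}
k(\s_\ell(X)) = -\,\s_\ell(X)\,\theta_k + (\mbox{div}_{S_r}K^k)(X) - D_X\theta_k - \Eing(k,X).
\end{equation*}
For the left-hand side I use item (ii), $\s_\ell = \sone/\rb + \stilde_\ell$ with $\stilde_\ell = o_1(r^{-1})$, together with $k(\rb)=-1$ and the fact that $\sone(X)$ is Lie constant; since $\stilde_\ell=o_1(r^{-1})$ gives $(\pounds_k\stilde_\ell)(X)=k(\stilde_\ell(X))=o(r^{-2})$, this yields $k(\s_\ell(X)) = \sone(X)\,\rb^{-2} + o(r^{-2})$.

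Next I evaluate the right-hand side to order $\rb^{-2}$. From $\theta_k = -2\rb^{-1} + \c\,\rb^{-2} + o(r^{-2})$ (eq.~(\ref{thkexp})) and the expansion of $\s_\ell$ I obtain $-\s_\ell(X)\theta_k = 2\,\sone(X)\,\rb^{-2} + o(r^{-2})$, and, since $\rb$ is constant on each $S_r$, $-D_X\theta_k = -(\Dh_A\c)X^A\,\rb^{-2} + o(r^{-2})$. The genuinely laborious term is the divergence. Writing $(\mbox{div}_{S_r}K^k)_C = \gamma^{AB}D_A K^k_{BC}$ and converting the $\gamma(r)$-connection $D$ into $\Dh$ through the difference tensor $Q$ of the previous Lemma, the leading $-\rb\,\qh_{BC}$ part of $K^k$ (eq.~(\ref{Kk})) enters only via $Q$ because $\Dh\qh=0$ and $\Dh\rb=0$ on $S_r$. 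Substituting the expansion (\ref{ExpQ}) of $Q$ one finds the clean cancellation $D_A K^k_{BC} = \tfrac12\Dh_A h_{BC} + o(1)$, so that, contracting with $\gamma^{AB}=\rb^{-2}\qh^{AB}+o(r^{-2})$,
\begin{equation*}
(\mbox{div}_{S_r}K^k)(X) = \frac{1}{2\rb^{2}}\big(\Dh_B\hh^{B}_{\phantom{B}A}\big)X^A + o(r^{-2}).
\end{equation*}

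Here the energy flux decay condition enters decisively. Its part $\pounds_k\gammatilde = o^{X}_{1}(1)$ is what lets the $o(1)$ remainder of $K^k$ be differentiated tangentially without loss, so the divergence is controlled to $o(r^{-2})$; and the condition further guarantees that the Einstein-tensor term $\Eing(k,X)$ is $o(r^{-2})$, whence it drops out of the $\rb^{-2}$ balance. Collecting the coefficients of $\rb^{-2}$ then gives $\sone(X) = 2\sone(X) + \tfrac12(\Dh_B\hh^{B}_{\phantom{B}A})X^A - (\Dh_A\c)X^A$, and since this holds for every basis vector $X=X_A$ we read off the componentwise identity
\begin{equation*}
\sone_A = \Dh_A\c - \frac{1}{2}\Dh_B\hh^{B}_{\phantom{B}A},
\end{equation*}
as claimed. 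I expect the evaluation of $(\mbox{div}_{S_r}K^k)(X)$ to order $o(r^{-2})$ to be the main obstacle: it is the only step mixing the two connections, it hinges on the precise form of $Q$ and on the cancellations leading to $D_A K^k_{BC}=\tfrac12\Dh_A h_{BC}+o(1)$, and it is where the $\pounds_k\gammatilde$ hypothesis is unavoidable.
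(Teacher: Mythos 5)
Your proof is correct and takes essentially the same route as the paper: both substitute the asymptotic expansions into the evolution identity of Proposition \ref{evols} with $Q_k=0$, invoke the energy flux decay condition to discard the Einstein term and to control tangential derivatives of the remainders, and read off the proposition by matching the $\rb^{-2}$ coefficients. The only cosmetic difference is in the divergence term: the paper works with the mixed tensor $K^k{}^{A}_{\phantom{A}B}$, whose leading term $-\rb^{-1}\delta^{A}_{\phantom{A}B}$ is covariantly constant so that the computation is immediate, whereas you work with the covariant tensor $K^k_{AB}$ and recover the same expression $\frac{1}{2\rb^2}\Dh_B \hh^{B}_{\phantom{B}A}+o(\rb^{-2})$ after the cancellation produced by the difference tensor $Q$.
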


\begin{proof}
From item (ii) in Definition \ref{AF} and the decomposition (\ref{thkexp})
we have 
\begin{equation}
\label{derivatives}
\pounds_{k} s_A=\frac{\sone_A}{\r^2}+o(r^{-2}), \quad \quad 
\theta_k \s_{\ell}{}_A
=- \frac{2\sone_A}{\r^2}+o(r^{-2})
\quad \quad  \mbox{and}
\quad \quad
\Dh_A \theta_k=
\frac{\Dh_A \c}{\r^2}+o(r^{-2})
\end{equation}
Concerning the divergence of $K^k{}^A_B$ in the metric $\gamma(r)$
we use that the leading term in (\ref{KkEnd}) is covariantly constant 
and then replace the $D-$covariant derivate
by the $\Dh$-derivative and use 
$Q^A_{AB} = O(\rb^{-1})$ to obtain
\begin{equation}
\label{covariantsigma}
D_B K^k{}^B_{\phantom{B}C}= 
\frac{1}{2 \rb^2} D_B \hh^{B}_{\phantom{B}C} + o(\rb^{-2}) = 
\frac{1}{2 \r^2} \Dh_B\hh^B_{\phantom{B}C} +o(\rb^{-2}).
\end{equation}
Thus, identity (\ref{IdenEinKX}) (with $Q_k=0$)  becomes
\begin{align*}
\frac{1}{\r^2} \left ( \sone_A+\frac{1}{2}\Dh_B\hh^B_{\phantom{B}A}-\Dh_A \c \right ) + o(\rb^{-2}) =0
\end{align*}
and the result follows.
\end{proof}

\section{Background foliation approaching large spheres}
\label{Sect3}

As discussed in the introduction, 
the Hawking energy has the interesting and well-known 
property  of approaching the Bondi energy when the surfaces
approach large spheres in a suitable sense. Our general
limiting expressions for the Hawking energy
will of course have to recover this fact.
To that aim, it is useful to restrict the choice of affinely
parametrized null generator  $k$ and corresponding level
set function $r$ so that the geometry of 
the level sets $S_r$ approaches, after rescalling,
the standard metric of unit radius on the sphere, denoted
by $\q$.

\begin{Def}
Let $\Omega$ be null and past asymptotically flat with a choice
of affine null generator $k$ and level set function $r$.
The foliation $\{S_r\}$ is said to {\bf approach 
large spheres} iff the leading term $\qh$ in the
expansion (\ref{metriccovariant}) of $\gamma$ is
the standard metric of a unit two-sphere.
\end{Def}

Our definition of approaching large spheres is equivalent to
demanding that the rescalled metric
$\frac{1}{(r-r_0)^2} \gamma(r)$, has a limit
$\q$ when $r\rightarrow \infty$. In \cite{Sauter2008} the definition
of approaching large spheres is defined more generally for exhaustions
$\{S_s \}$ of $\Omega$ with all elements diffeomorphic to each
other by demanding that the rescalled metric
$\frac{4\pi}{|S_s|} \gamma_{s}$ has a limit when 
$s \rightarrow \infty$ and defines a metric $\q$ of constant unit curvature.
It is clear that both definitions agree for the geodesic
foliations $\{S_r\}$ that we are using in this paper.

Our aim is to consider very general exhaustions $\{S_{r'}\}$ on
$\Omega$ and obtain the limit of the Hawking energy along them by
refering all objects to an affine background foliation
approaching large spheres. It is important to 
note that, since all Riemannian metrics on a manifold $\simeq
\mathbb{S}^2$ are conformal to the standard metric $\q$, 
there always exists
a (non-unique) choice of affine null generator $k$ in
an asymptotically flat
$\Omega$ with corresponding background foliation $\{S_r\}$ approaching
large spheres (cf. Remark \ref{RemMetric} below).
Tensors raised and lowered with the metric $\q_{AB}$ and its
inverse $\q^{AB}$
will have a circle on top, so that for instance 
(\ref{metriccontravariantsph}) reads
\begin{align}
\gamma(r)^{AB} & = \frac{1}{\rb^2}\q^{AB}-\frac{1}{\rb^3}\h^{AB}+o(r^{-3}).
\label{metriccontravariantsph} 
\end{align}
Note also that since $\q$ has constant unit curvature, the
null expansion $\theta_{\ell}$ has asymptotic behaviour
\begin{align*}
\label{thetaell}
\theta_{\ell} =
\frac{2}{\r}+\frac{\a}{\r^2}+o(\r^{-2}).
\end{align*}
We will consider three types
of foliations $\{S_{r'}\}$ and then combine them to obtain
the general case treated in Theorem \ref{main}.  Given a null basis
$\{k',\ell'\}$ orthogonal to a section $S$ in $\Omega$ and satisfying
$\la k',\ell'\ra = -2$, the mean curvature $H$ of $S$ decomposes as
$H = - \frac{1}{2} ( \theta_{k'} \ell' + \theta_{\ell'} k')$ and
the Hawking energy is
\begin{equation}
\label{hawkingmassexpansion}
m_H(S)=\sqrt{\frac{|S|}{16\pi}} \left ( 1+\frac{1}{16\pi}
\int_{S}\theta_{k'}\theta_{\ell'}\eta_{S} \right )
\end{equation}
so our aim will be to compute the limit of the
areas $|S_{r'}|$ and of 
$\theta_{k'}\theta_{\ell'}\eta_{S_{r'}}$ of the new foliation 
$\{ S_{r'}\}$ in terms of the background foliation geometry.
The following section deals with the case when
$\{S_{r'}\}$ is any geodesic foliation (not necessarily approaching large
spheres).

\section{Limit of the Hawking energy for geodesic foliations}
\label{Sect4}

In this section we assume that $\Omega$ is null
and past asymptotically flat and endowed with a foliation $\{S_r\}$
associated to an affinely parametrized null generator $k$
and approaching large spheres. By definition, 
a geodesic foliation 
in $\Omega$ is a foliation $\{ S_{r'} \}$
by cross sections with definiting function $r' \in \F (\Omega)$
such that the (unique) null generator $k'$ satisfying
$k'(r') = -1$ is affinely parametrized. Thus, there
exists a positive function $\phi \in \F(\Omega)$, Lie constant along 
$k$ such that $k' = \phi k$. Consequently, the level set functions
$r$ and $r'$
are necessarily related by
$r = \tau + \phi (r' - r_0)$ where $\tau \in \F(\Omega)$
is a Lie constant function. 
We first consider the case when
the two foliations $\{S_{r}\}$ and $\{S_{r'}\}$ have the same
starting surface, i.e. that $S_{r'=r_0} = S_{r=r_0}$, which 
fixes $\tau = r_0$. Thus, each of the surfaces $\{S_{r'}\}$ can be described
by a graph function $r= r_0 + \phi (r' - r_0)$. 

Our strategy is to use the general expressions in Section \ref{Setup} 
for the geometry
of a graph $r=F$ in a background foliation, insert
the resulting expressions in (\ref{hawkingmassexpansion}) and take the
limit  when $r' \rightarrow \infty$. We start with the following
Proposition.

\begin{Prop}
\label{geodesichange}
Let $\Omega$ be a past asymptotically flat null hypersurface endowed
with an affinely parametrized background foliation
$\{S_r\}$ with generator $k$ and approaching large spheres. Let
$\phi>0$ be a Lie constant function and define the geodesic foliation
$\{S_{r'}\}$ by the graph functions 
\begin{equation*}
r = F_{r'} := r_0+\phi \rt, \quad \quad \rt := r' - r_0.
\end{equation*}
Let $k'=\phi k$ and $\ell'$ be the null normal orthogonal to $\{S_{r'}\}$
satisfying $\la k',\ell' \ra = -2$. Then the 
induced metric $\gamma(r')$, volume form 
$\bm{\eta_{S_{r'}}}$, null expansions $\theta_{k'}$, $\theta_{\ell'}$ and the connection of the normal bundle ${s_{\ell'}}$ 
can be expressed in terms of the background geometry as   
\begin{align}
\gamma'_{AB}& :=\gamma'(X'_A,X'_B)  = \phi^2 \q_{AB}\rt^2+\phi h_{AB}\rt
+o(\rt)  \label{gammaprime} \\
\bm{\eta_{S_{r'}}}
= & \left ( \phi^2\rt^2+ \phi \c \rt+o(\rt) 
\right ) \volunitdos \label{volprime} \\
\theta_{k'}  = &
- \frac{2}{\rt}+\frac{\c}{\phi}\frac{1}{\rt^2}+o(\rt^{-2}) 
\label{thetakhatexpansion}
 \\
\theta_{\ell'} = &
-\frac{2}{\phi^2}(\triangle_\q\log\phi-1)\frac{1}{\rt} +
\left(
\frac{\a}{\phi^3}-\frac{4\h^{AB}\phi_{,A}\phi_{,B}}{\phi^5} 
+\frac{\c|\esf\phi|^2_\q}{\phi^5}+\frac{4\q^{AB}\phi_{,A}(s_B)_1}{\phi^4}
\right . \nonumber \\
& \left . 
+\frac{2\h^{AB}\esf_A\esf_B\phi}{\phi^4}+\frac{2\esf_A\h^{AC}\phi_{,C}}{\phi^4}
-\frac{2\esf^C\c\phi_{,C}}{\phi^4} \right)\frac{1}{\rt^2} +o(\rt^{-2}) 
 \label{thetalhatexpansion}
\\
{s_{\ell'}}_A= &
=\left(\frac{\sone_A}{\phi}-\frac{\phi_{,L}\h^L_{\phantom{L}A}}{2\phi^2}\right)\frac{1}{\rt}+o(\rt^{-1}) 
 \label{slhatexpansion}
\end{align}
where $X'_A := X_A - X_A (F_{r'}) k$. 
\end{Prop}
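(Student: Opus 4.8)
The plan is to treat the new foliation $\{S_{r'}\}$ as the graph $r=F_{r'}=r_0+\phi\rt$ over the background foliation $\{S_r\}$ and to feed the expansions from Section \ref{Sect2} into the transformation formulas of Proposition \ref{RelGraph} and Corollaries \ref{boostcor}--\ref{RelExpansions}. Here the boost factor is $\alpha=\phi$ (since $k'=\phi k$) and the graph function is $F=F_{r'}$, so that $X_A(F_{r'})=\phi_{,A}\rt$ and $\grad F_{r'}$, $|DF_{r'}|^2$ are computed in the background metric $\gamma(r)$. The key point that simplifies everything is that, on $S_{r'}$, one has $\rb=r-r_0=\phi\rt$, so every background quantity expanded in inverse powers of $\rb$ in Proposition \ref{leadingthetal} and the subsequent remarks (the expansions (\ref{Kk}), (\ref{thkexp}), (\ref{KkEnd}), item (ii) for $\s_\ell$, and the $\q$-specialized metric expansions) must be re-expressed by substituting $\rb=\phi\rt$. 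This is what produces the various powers of $\phi$ in the denominators.

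First I would establish (\ref{gammaprime}) and (\ref{volprime}): (\ref{metric}) gives $\gamma'(X_A',X_B')=\gamma(X_A,X_B)=\qh_{AB}\rb^2+h_{AB}\rb+\cdots$, and substituting $\rb=\phi\rt$ together with $\qh=\q$ yields (\ref{gammaprime}); the volume form follows by taking the square root of $\det\gamma'$, expanding to the required order, and identifying the coefficient using $\tr_{\q}\h=2\c$ (so the subleading coefficient is $\phi\c$ rather than something involving $\tr h$ directly — care is needed here since the expansion of $\sqrt{\det\gamma'}$ mixes the leading and subleading metric coefficients). The expansion (\ref{thetakhatexpansion}) is then immediate from $\theta_{k'}=\alpha\theta_k=\phi\theta_k$ of Corollary \ref{RelExpansions} together with (\ref{thkexp}) and $\rb=\phi\rt$: the factor of $\phi$ from the boost and the $\phi^{-2}$ from $\rb^{-2}=\phi^{-2}\rt^{-2}$ combine to give the stated $\c/\phi$ coefficient, while the leading term's $\phi$ cancels against $\rb^{-1}$.

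For $\s_{\ell'}$ I would use (\ref{oneformalpha}) with $Q_k=0$, reading off $\s_{\ell'}(X_A')=\s_\ell(X_A)-K^k(X_A,\grad F)+X_A(F)Q_k-\phi^{-1}X_A(\phi)$. Since $\phi$ is Lie constant, $X_A(\phi)=\phi_{,A}$ is $\rt$-independent, so the term $-\phi^{-1}\phi_{,A}$ contributes at order $\rt^0$ and drops out of the leading $\rt^{-1}$ coefficient; the surviving $\rt^{-1}$ piece comes from $\s_\ell\sim\sone/\rb=\sone/(\phi\rt)$ and from $K^k(X_A,\grad F)$, where $\grad F_{r'}=\gamma^{BC}\phi_{,C}\rt\,X_B$ has a factor $\gamma^{BC}\sim\rb^{-2}\q^{BC}$ and $K^k\sim-\q\rb$, producing the $\phi_{,L}\h^L{}_A/(2\phi^2)$ correction after the substitution $\rb=\phi\rt$ and careful index bookkeeping with $\q$ and $\h$. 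Collecting these gives (\ref{slhatexpansion}).

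The hard part will be (\ref{thetalhatexpansion}), which requires the full formula (\ref{Relthl}) from Corollary \ref{RelExpansions} with $\alpha=\phi$, $Q_k=0$, namely $\theta_{\ell'}=\phi^{-1}\bigl(\theta_\ell+|DF|^2\theta_k+4\s_\ell(\grad F)-4K^k(\grad F,\grad F)-2\Delta F\bigr)$, and then the patient extraction of both the $\rt^{-1}$ and $\rt^{-2}$ coefficients. Each term must be expanded to two orders: $|DF|^2=\gamma^{AB}\phi_{,A}\phi_{,B}\rt^2$ needs (\ref{metriccontravariantsph}) to subleading order, $\Delta F=\triangle_\gamma(\phi\rt)$ requires the Laplacian expansion (\ref{laplaciandev}) (this is precisely where the $\triangle_\q\log\phi$, the $\h^{AB}\esf_A\esf_B\phi$, and the $\esf^C\c\,\phi_{,C}$ terms originate), and the contractions $\s_\ell(\grad F)$ and $K^k(\grad F,\grad F)$ must be carried out using (\ref{KkEnd}) and item (ii) to the relevant order. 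The leading $\rt^{-1}$ term should consolidate, after using $\theta_\ell\sim2/\rb$ and the $\Delta F$ leading piece, into $-\tfrac{2}{\phi^2}(\triangle_\q\log\phi-1)$; the main bookkeeping challenge is organizing the numerous $\rt^{-2}$ contributions so they assemble into the displayed combination without error in the powers of $\phi$ or the placement of $\q$- versus $\h$-contractions. I would handle this by computing each of the five terms separately to order $\rt^{-2}$, tabulating the coefficients, and summing, using $\tr_{\q}\h=2\c$ and $\rb=\phi\rt$ consistently throughout.
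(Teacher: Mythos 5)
Your overall strategy coincides with the paper's: realize each $S_{r'}$ as the graph $r=r_0+\phi\rt$ over the background foliation, apply Proposition \ref{RelGraph} and Corollaries \ref{boostcor}--\ref{RelExpansions} with $\alpha=\phi$, substitute $\rb=\phi\rt$ in the background expansions, and extract coefficients order by order. Your treatment of $\gamma'$, the volume form (via the determinant expansion and $\tr_{\q}\h=2\c$), $\theta_{k'}$, and the five-term computation of $\theta_{\ell'}$ is exactly the paper's proof.

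There is, however, a genuine error in your argument for (\ref{slhatexpansion}). You claim that the boost term $-\phi^{-1}X_A(\phi)=-\phi_{,A}/\phi$ in (\ref{oneformalpha}) ``contributes at order $\rt^0$ and drops out of the leading $\rt^{-1}$ coefficient.'' An $O(1)$ term cannot simply drop out of an expansion whose claimed leading behaviour is $O(\rt^{-1})$; if it survived, it would dominate and (\ref{slhatexpansion}) would be false. What actually happens is an exact cancellation that your bookkeeping misses: the term $-K^k(X_A,\grad F_{r'})$ itself has an $O(1)$ part. Indeed, with $\grad F_{r'}=\gamma^{BC}\phi_{,B}\rt\, X_C$ and (\ref{KkEnd}),
\begin{equation*}
-K^k(X_A,\grad F_{r'}) = \frac{\phi_{,A}}{\phi} - \frac{\h^{L}_{\phantom{L}A}\phi_{,L}}{2\phi^2}\frac{1}{\rt} + o(\rt^{-1}),
\end{equation*}
and the $O(1)$ piece $+\phi_{,A}/\phi$ cancels the boost term exactly, leaving only the $\h$-correction at order $\rt^{-1}$. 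Relatedly, your attribution of the $\h$-correction to the leading behaviours $K^k\sim-\q\,\rb$ and $\gamma^{BC}\sim\rb^{-2}\q^{BC}$ is backwards: contracting those leading terms produces precisely the $O(1)$ quantity $-\phi_{,A}/\phi$ (the one that must cancel against the boost term), whereas the $\rt^{-1}$ correction in (\ref{slhatexpansion}) comes from the subleading, $\h$-dependent terms in (\ref{KkEnd}). Once this cancellation is made explicit, the rest of your plan goes through as in the paper.
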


\begin{proof}

As already mentioned $k'$ is the generator of the foliation $\{S_{r'}\}$. For any point
$p \in S_{r'}$, the level set passing through $p$ has
$\rb = F_{r'}(p) -r_0 = \phi(p) \rt$. Thus (\ref{metric}) 
and the background expansion (\ref{metriccovariant}) with
$\qh \rightarrow \q$ gives
\begin{equation*}
\gamma'_{AB} |_p = \gamma_{AB} |_{S_{r=r_0 + \phi(p) \rt}} =
\q_{AB} \rb^2 + h_{AB} \rb + \gammatilde |_{\rb = \phi \rt}
\end{equation*}
which is (\ref{gammaprime}). (\ref{volprime}) follows by taking determinants and
using the standard identity
\begin{equation*}
\det (M + s B) = (\det M) ( 1 + s \, \tr \, (M^{-1} B) + O(s^2) ),
\end{equation*}
valid for any invertible matrix $M$. For 
$\theta_{k'}$ we use the fact that 
$\theta_k$ is a property of 
$\Omega$ and not of the surface embedded  
in $\Omega$ passing through that point, cf. Corollary \ref{RelExpansions}.
Thus
\begin{equation}
\theta_{k'}  = \phi \theta_k|_{\rb=\phi \rt} =
-\frac{2}{\rt}+\frac{\c}{\phi}\frac{1}{\rt^2}+o(\rt^{-2}),
\end{equation}
as claimed in (\ref{thetakhatexpansion}).
To compute $\theta_{\ell'}$ we use expression (\ref{Relthl}) 
with $Q_{k}=0$, $\alpha = \phi$ and graph function 
$F = F_{r'} = r_0 + \phi \rt$. As before, the right-hand side
has to be evaluated at $\rb = \phi \rt$. We work out each term separately:
\begin{align*}
\theta_\ell & =\frac{2}{\r}+\frac{\a}{\r^2}+o(\r^{-2})
=\frac{2}{\phi}\frac{1}{\rt}+\frac{\a}{\phi^2}\frac{1}{\rt^2}+o(\rt^{-2}), \\
\theta_k |DF_{r'}|^2 & = \theta_k \gamma^{AB} \nabla_A F_{r'}
\nabla_B F_{r'} \\
& = \left( - \frac{2}{\phi}\frac{1}{\rt}+\frac{\c}{\phi^2}\frac{1}{\rt^2}
+o(\rt^{-2}) \right)
\left( \frac{\q^{AB}}{\phi^2}\frac{1}{\rt^2}
-\frac{\h^{AB}}{\phi^3}\frac{1}{\rt^3}+O(\rt^{-4})
\right)(\phi_{,A}\rt)(\phi_{,B}\rt) \nonumber \\
&=- \frac{2|\esf\phi|^2_\q}{\phi^3}\frac{1}{\rt}
+\left(\frac{2\h^{AB}\phi_{,A}\phi_{,B}}{\phi^4}+\frac{\c|\esf\phi|^2_\q}{\phi^4}
\right)\frac{1}{\rt^2}+o(\rt^{-2}).
\end{align*}
For the Laplacian of $F_{r'}$ we use (\ref{laplaciandev}) which 
gives, using $\nabla F_{r'} = (\nabla \phi) \rt$,
$\Hess_{\q} F_{r'} = (\Hess_{\q} \phi ) \rt$ and
$\triangle_{\q} F_{r'} = (\triangle_{\q} \phi ) \rt$,
\begin{equation*}
\triangle_{\gamma} F_{r'}=\frac{\triangle_\q\phi}{\phi^2}\frac{1}{\rt}
+\left(-\frac{\h^{AB}\esf_A\esf_B \phi}{\phi^3}
-\frac{\esf_A\h^{CA}\phi_{,C}}{\phi^3}
+\frac{(\esf^C\c)\phi_{,C}}{\phi^3} \right)\frac{1}{\rt^2}+o(\rt^{-2}).
\end{equation*}
For the term $\s_{\ell} (\grad F_{r'})$
\begin{align*}
\s_{\ell} (\grad F_{r'}) & = \gamma^{AB} 
\s_{\ell}{}_A \nabla_B F_{r'} 
 =
\left(\frac{\q^{AB}}{\phi^2}\frac{1}{\rt^2}+o(\rt^{-2})\right)
\left(\frac{\sone{}_A}{\phi}\frac{1}{\rt}+o(\rt^{-1})\right) \phi_{,B} \rt \\
& =  \frac{\q^{AB} \sone_{A} \phi_{,B}}{\phi^3}\frac{1}{\rt^2}+o(\rt^{-2})
\end{align*}
and finally $K^k(\grad F_{r'}, \grad{F}_{r'})$ is, inserting
(\ref{metriccontravariant}) and (\ref{Kk}),
\begin{align*}
K^k(\grad F_{r'}, \grad{F}_{r'})   = &
\gamma^{AB} \gamma^{CD} K^k_{BD} \phi_{,A} \phi_{,B}\rt^2 = \frac{1}{\phi^4 \rt^2} 
\left ( \q^{AB}-\frac{1}{\phi \rt}\h^{AB}+o (\rt^{-1} )
\right ) \times \\
& \left ( \q^{CD}-\frac{1}{\phi \rt}\h^{CD}+o (\rt^{-1} )
\right )
\left ( -\q_{BD} \phi \rt -\frac{1}{2}h_{BD}+o(1) \right )
\phi_{,A} \phi_{,B} \\
 = &  \frac{-|\esf\phi|^2_\q}{\phi^3}\frac{1}{\rt}+\frac{3\h^{LA}\phi_{,A}\phi_{,L}}{2\phi^4}\frac{1}{\rt^2}+o(\rt^{-2}).
\end{align*}
Putting things together leads to (\ref{thetalhatexpansion})
after using $\frac{\triangle_{\q} \phi}{\phi} - \frac{|\esf \phi|^2_\q}{\phi^2} =
\triangle_{\q} \log \phi$.

Finally,
if we substitute in expression (\ref{oneformalpha}) the asymptotic 
expansion of item (ii) in Definition \ref{AF}, and the 
expansion of $K^k$ of (\ref{KkEnd}), we have  
\begin{eqnarray*}
{s_{\ell'}}_A=-\frac{\phi_{,A}}{\phi}+\frac{\sone_A}{\phi}\frac{1}{\rt}+\frac{\phi_{,A}}{\phi}-\frac{\phi_{,L}\h^L_{\phantom{L}A}}{2\phi^2}\frac{1}{\rt}+o(\rt^{-1})=\left(\frac{\sone_A}{\phi}-\frac{\phi_{,L}\h^L_{\phantom{L}A}}{2\phi^2}\right)\frac{1}{\rt}+o(\rt^{-1}),
\end{eqnarray*}
where the term in $\frac{1}{\rt}$ is $\sonep_A$.

\end{proof}

\begin{remark}
\label{RemMetric}
The foliation $\{ S_{r'} \}$ has limit metric $\lim_{r' \rightarrow
\infty} \frac{\gamma'}{\rt^2} = \phi^2 \q := \qh$. Using the formula for the 
scalar curvature of a conformal metric it follows
\begin{align}
\label{Gausscurv}
\K_{\qh} = \frac{1 - \triangle_{\q} \log \phi}{\phi^2}
\end{align}
so that the expansion of $\theta_{\ell'}$ is $\theta_{\ell'} =
\frac{2 \K_{\qh}}{\rt} + o(\rt^{-1})$ in agreement with Proposition 
\ref{leadingthetal}. 
Note that the transformation law $\qh \rightarrow \phi^2 \qh'$ 
for the leading term 
$\qh$ in the metric $\gamma(r)$ under change of foliation $k' = \phi k$
holds irrespectively of whether the background foliation approaches
large spheres or not. Since as mentioned above, any
metric on $\mathbb{S}^2$ is conformal to $\q$, it follows
that any asymptotically flat $\Omega$ admits a background foliation
approaching large spheres.

\end{remark}

\vspace{3mm}

We can now evaluate the limit of the
Hawking energy  for the foliation $\{S_{r'}\}$:
\begin{Tma}
\label{TmaHawkinglimit}
Let $\Omega$ be a past asymptotically flat null hypersurface endowed
with an affinely parametrized background foliation
$\{S_r\}$ with generator $k$ and approaching large spheres. Let
$\Psi>0$ be a Lie constant function and define the geodesic foliation
$\{S_{r'}\}$ by the graph functions 
\begin{equation*}
r = F_{r'} := r_0+ \frac{1}{\Psi} \rt, \quad \quad \rt := r' - r_0
\end{equation*}
The limit of the Hawking energy along the foliation $\{S_{r'} \}$
is, in terms of the background geometry,
\begin{equation}
\label{hawkingmasslimit}
\underset{r' \to \infty}{\lim}
m_H(S_{r'})=\frac{1}{8\pi\sqrt{16\pi}}
\left(\sqrt{\int_{\mathbb{S}^2}\frac{1}{\Psi^2}\volunitdos}\right)
\int_{\mathbb{S}^2}\left(\triangle_\q\c-(\c+\a)-
4 \mbox{div}_{\q} (\sone) \right)\Psi\volunitdos.
\end{equation}
\end{Tma}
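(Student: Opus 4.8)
The plan is to substitute the asymptotic expansions of Proposition~\ref{geodesichange}, taken with $\phi = 1/\Psi$ (so that the foliation $\{S_{r'}\}$ of the present statement is exactly the one analysed there), into the Hawking energy formula (\ref{hawkingmassexpansion}) and to isolate the finite limit order by order in $\rt$. First I would expand the area factor: integrating (\ref{volprime}) gives $|S_{r'}| = \rt^2\int_{\mathbb{S}^2}\phi^2\volunitdos + O(\rt)$, hence $\sqrt{|S_{r'}|/16\pi} = \frac{\rt}{\sqrt{16\pi}}\sqrt{\int_{\mathbb{S}^2}\phi^2\volunitdos} + O(1)$. Next I would multiply (\ref{thetakhatexpansion}) by (\ref{thetalhatexpansion}) and then by (\ref{volprime}), keeping all terms through order $\rt^{-1}$, to obtain an expansion of the form $\theta_{k'}\theta_{\ell'}\volunitdos = 4(\triangle_\q\log\phi-1)\volunitdos + \frac{1}{\rt}\, c_1\,\volunitdos + o(\rt^{-1})$, where $c_1$ is an explicit expression built from $\phi$, $\c$, $\a$, $\sone$ and $\h$.

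The mechanism producing a finite limit is the cancellation of the leading order. Since $\int_{\mathbb{S}^2}\triangle_\q\log\phi\,\volunitdos = 0$ and $\int_{\mathbb{S}^2}\volunitdos = 4\pi$, the $O(1)$ term integrates to exactly $-16\pi$, so the parenthesis in (\ref{hawkingmassexpansion}) becomes $1 + \frac{1}{16\pi}(-16\pi + \frac{1}{\rt}\int_{\mathbb{S}^2} c_1\volunitdos + o(\rt^{-1})) = \frac{1}{16\pi\rt}\int_{\mathbb{S}^2} c_1\volunitdos + o(\rt^{-1})$; the constant $1$ is annihilated. Multiplying by the $O(\rt)$ area factor and letting $\rt\to\infty$ leaves the finite value $\frac{1}{16\pi\sqrt{16\pi}}\sqrt{\int_{\mathbb{S}^2}\phi^2\volunitdos}\,\int_{\mathbb{S}^2} c_1\volunitdos$, the subleading area term dropping out against the $o(\rt^{-1})$ remainder.

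What remains, and the step I expect to be the main obstacle, is the purely two-dimensional identity $\int_{\mathbb{S}^2} c_1\volunitdos = 2\int_{\mathbb{S}^2}(\triangle_\q\c - (\c+\a) - 4\,\mbox{div}_\q\sone)\frac{1}{\phi}\volunitdos$, from which (\ref{hawkingmasslimit}) follows after putting $\phi=1/\Psi$ and $2/16\pi = 1/8\pi$. I would establish it by integration by parts on the closed surface $\mathbb{S}^2$. The terms of $c_1$ equal to $-2\a/\phi$ and $-2\c/\phi$ give the $-\a$ and $-\c$ contributions at once. The term $-\frac{8}{\phi^2}\q^{AB}\phi_{,A}\sone_B = 8\q^{AB}\Dh_A(\frac{1}{\phi})\sone_B$ integrates to $-8\int_{\mathbb{S}^2}\frac{1}{\phi}\mbox{div}_\q\sone\,\volunitdos$. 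The three $\h$-dependent terms cancel completely: integrating the Hessian term $-\frac{4}{\phi^2}\h^{AB}\esf_A\esf_B\phi$ by parts generates both $+\frac{4}{\phi^2}(\Dh_A\h^{AB})\phi_{,B}$, which cancels the divergence-of-$\h$ term already present, and $-\frac{8}{\phi^3}\h^{AB}\phi_{,A}\phi_{,B}$, which cancels the remaining quadratic term. Finally the surviving curvature-aspect terms $-\frac{4\c}{\phi^3}|\esf\phi|^2_\q + \frac{4}{\phi^2}\q^{AB}\c_{,A}\phi_{,B} + \frac{2\c}{\phi^2}\triangle_\q\phi$ are reorganized using $\triangle_\q(\frac{1}{\phi}) = -\frac{\triangle_\q\phi}{\phi^2} + \frac{2}{\phi^3}|\esf\phi|^2_\q$ together with the self-adjointness of $\triangle_\q$, collapsing to $2\int_{\mathbb{S}^2}\frac{\triangle_\q\c}{\phi}\volunitdos$. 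Assembling the four pieces yields the asserted identity and hence (\ref{hawkingmasslimit}).
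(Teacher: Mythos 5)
Your proposal is correct and follows essentially the same route as the paper: substitute the expansions of Proposition \ref{geodesichange} with $\phi=1/\Psi$ into (\ref{hawkingmassexpansion}), use $\int_{\mathbb{S}^2}\triangle_{\q}\log\phi\,\volunitdos=0$ (equivalently Gauss--Bonnet for $\qh=\phi^2\q$) to cancel the constant $1$ at leading order, and then reduce the $\rt^{-1}$ coefficient to the stated integrand. The only cosmetic difference is that the paper rewrites the integrand $\I=-2\K_{\qh}\c\,\phi-2\B\phi^2$ pointwise as an explicit total divergence plus $\frac{2}{\phi}\left(\triangle_{\q}\c-(\c+\a)-4\,\mbox{div}_{\q}\sone\right)$, whereas you perform the equivalent integrations by parts term by term; your bookkeeping of the $\h$-term cancellations and of the $\c$--$\phi$ terms collapsing to $2\int_{\mathbb{S}^2}\frac{\triangle_{\q}\c}{\phi}\volunitdos$ is accurate.
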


\begin{proof}
Set $\phi = \Psi^{-1}$ so that we can use
the expressions in  Proposition \ref{geodesichange}.
We need to compute $\theta_{k'} \theta_{\ell'} \bm{\eta_{S_{r'}}}$. 
Denoting 
by $\B$ the coefficient of the term $\frac{1}{\rt^2}$ in $\theta_{\ell'}$
we immediately find, from (\ref{volprime})-(\ref{thetalhatexpansion}),
\begin{align*}
\theta_{k'}\theta_{\ell'}\bm{\eta_{S_{r'}}} 
&= \left ( 4(\triangle_\q \log\phi-1)
+(- 2 \K_{\qh} \c\phi-2\B \phi^2)\frac{1}{\rt}+o(\rt^{-1}) \right ) \volunitdos
\end{align*}
and hence
\begin{align*}
1 + \frac{1}{16\pi} \int_{S_{r'}} \theta_{k'}\theta_{\ell'} 
\bm{\eta_{S_{r'}}} & = 
1 + \frac{1}{16 \pi} \int_{\mathbb{S}^2} 
\left ( 4(\triangle_\q \log\phi-1)
+(- 2 \K_{\qh} \c\phi-2 \B \phi^2)\frac{1}{\rt}+o(\rt^{-1}) \right ) \volunitdos \\
& = \frac{1}{16 \pi \rt} 
\int_{\mathbb{S}^2} 
\underbrace{(- 2 \K_{\qh} \c\phi-2\B \phi^2)}_{\I} \volunitdos
+o(\rt^{-1}).
\end{align*}
Concerning the area term
\begin{equation*}
|S_{r'}|=\int_{S_{r'}}\bm{\eta_{S_{r'}}}=
\left(\int_{\mathbb{S}^2}\phi^2\volunitdos\right)\rt^2+O(\rt)
\Longrightarrow
\sqrt{|S_{r'}|}= \rt \sqrt{\int_{\mathbb{S}^2}\phi^2 \volunitdos} +O(1)
\end{equation*}
so that the limit of the Hawking energy is 
\begin{align}
\underset{r' \to \infty}{\lim}m_H(S_{r'}) & 
=\frac{1}{16\pi\sqrt{16\pi}}\left(\sqrt{\int_{\mathbb{S}^2}\phi^2\volunitdos}
\right)\int_{\mathbb{S}^2} \I \volunitdos.
\label{Hmasslimit}
\end{align}
We now compute $\I$. Using the explicit form of $\K_{\qh}$ and $\B$ it follows
\begin{align*}
\I  = & 
\frac{2\c\triangle_{\q}\phi}{\phi^2}-\frac{4\c|\esf\phi|^2_{\q}}{\phi^3}
-\frac{2}{\phi}(\c+\a)
+\frac{8\h^{AB}\phi_{,A}\phi_{,B}}{\phi^3}
-\frac{8\q^{AB}\phi_{,A} \sone{}_B }{\phi^2} \\
& - \frac{4\h^{AB}\esf_A\esf_B\phi}{\phi^2}
-\frac{4\esf_A\h^{AB}\phi_{,B}}{\phi^2}
+\frac{4\esf^B\c\phi_{,B}}{\phi^2} \\
 = & 
\esf_A \left ( - \frac{4}{\phi^2} \h^{AB} \phi_{,B} 
+ \frac{2 \c \esf^A \phi}{\phi^2}
+ \frac{8 \sone{}^A}{\phi} 
\right ) 
+\frac{2\esf^A\c\phi_{,A}}{\phi^2} 
-\frac{2}{\phi}(\c+\a)
- \frac{8}{\phi} \mbox{div}_{\q} (\sone) \\
 = & 
\esf_A \left ( - \frac{4}{\phi^2} \h^{AB} \phi_{,B} 
+ \frac{2 \c \esf^A \phi}{\phi^2}
+ \frac{8 \sone{}^A}{\phi} 
- \frac{2}{\phi} \esf^{A} \c \right )  
+ \frac{2}{\phi} \triangle_{\q} \c
-\frac{2}{\phi}(\c+\a)
- \frac{8}{\phi} \mbox{div}_{\q} (\sone) 
\end{align*}
and (\ref{Hmasslimit}) becomes (\ref{hawkingmasslimit})
after using the Gauss identity and $\phi = \Psi^{-1}$.
\end{proof}

\begin{remark}
It is interesting that all terms involving derivatives of $\phi$ (or
$\Psi$) combine themselves into a divergence and drop out after
intergration. The behaviour of the limit of the Hawking energy under
change of geodesic foliation is hence much simpler than one might
have expected a priori. 
\end{remark}

\vspace{3mm}

Given a past asymptotically  flat null hypersurface, there are many 
possible choices
of geodesic background foliations approaching large spheres. 
Any two such foliations are related by $\r = \phi \rt$ with 
$\phi$ satisfying 
\begin{equation}
\label{largesphereeq}
\triangle_\q\log\phi+\phi^2=1
\end{equation}
so that  the Gauss curvature (\ref{Gausscurv}) of $\qh$ is also one.
In this case the limit of the Hawking
energy of the foliation $\{S_{r'}\}$ can be computed in two different
ways, namely refering $\{ S_{r'} \}$ to the background folition $\{S_r\}$
and using Theorem \ref{TmaHawkinglimit} or considering $\{S_{r'}\}$ itself
as a background folition (so that the result would be (\ref{hawkingmasslimit})
with $\Psi=1$ and $\c$, $\a$, $\sone$ all referred to the
foliation $\{ S_{r'}\}$). It is clear that both results must agree. This
requires a kind of covariance property of the integral in 
(\ref{hawkingmasslimit}). Remarkably, this covariance occurs already
at the level of the integrand, as we show next.
All geometric objects referred to the geodesic foliation $\{S_{r'}\}$
will carry a prime.

\begin{Tma}
\label{Tmalargespheres}
Let $\Omega$ be a past asymptotically flat null hypersurface endowed
with an affinely parametrized background foliation
$\{S_r\}$ with generator $k$ and approaching large spheres.
Let
$\phi>0$ be a Lie constant function and define the geodesic foliation
$\{S_{r'}\}$ by the graph functions 
\begin{equation}
\label{parameterchangetwo}
r = F_{r'} := r_0+ \phi \rt, \quad \quad \rt := r' - r_0,
\end{equation}
with $\phi$ satisfying (\ref{largesphereeq}). Then
\begin{equation*}
\left(\triangle_\q\c-(\c+\a)
- 4 \mbox{div}_{\q} \, \sone
\right) \frac{1}{\phi} \volunitdos=\left( \triangle_{\q'}\c'-(\c'+\a')
- 4 \mbox{div}_{\q'} \, \sonep \right) 
\bm{\eta_{\q'}}, 
\end{equation*} 
As a consequence we have the necessary invariance of the limit 
of the Hawking energy
\begin{equation*}
\underset{r' \to \infty}{\lim}
m_H(S_{r'})=\frac{-1}{16\pi}\int_{\mathbb{S}^2}\left( \c'+\a' \right)\bm{\eta_{\q'}}
=\frac{1}{16\pi}
\int_{\mathbb{S}^2}\left(\triangle_\q\c-(\c+\a)-
4 \mbox{div}_{\q} \, \sone \right) \frac{1}{\phi} \volunitdos.
\end{equation*}
\end{Tma}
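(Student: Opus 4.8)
The plan is to reduce everything to the explicit transformation laws already contained in Proposition \ref{geodesichange} and then carry out a bookkeeping computation that exhibits the pointwise identity directly. First I would read off the relations between the primed and unprimed data. Comparing the expansions (\ref{thetakhatexpansion}), (\ref{thetalhatexpansion}) and (\ref{slhatexpansion}) with the defining expansions of $\c'$, $\a'$ and $\sonep$ for the foliation $\{S_{r'}\}$ regarded as a background foliation in its own right, one obtains
\begin{equation*}
\c' = \frac{\c}{\phi}, \qquad \a' = \B, \qquad \sonep_A = \frac{\sone_A}{\phi} - \frac{\phi_{,L}\h^L_{\phantom{L}A}}{2\phi^2},
\end{equation*}
where $\B$ is the explicit coefficient of $\rt^{-2}$ in (\ref{thetalhatexpansion}). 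Moreover, by Remark \ref{RemMetric} the limit metric of $\{S_{r'}\}$ is $\phi^2\q$, and the large-spheres equation (\ref{largesphereeq}) guarantees it has unit Gauss curvature, so $\q' = \phi^2\q$; in particular $\bm{\eta_{\q'}} = \phi^2\volunitdos$.

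Next I would record how the differential operators behave under the conformal rescaling $\q' = \phi^2\q$. In two dimensions the Laplacian on functions satisfies $\triangle_{\q'}f = \phi^{-2}\triangle_\q f$, and, crucially, the conformal correction to the divergence of a one-form vanishes, so that $\mbox{div}_{\q'}\omega = \phi^{-2}\,\mbox{div}_\q\omega$ for any one-form $\omega$ (the relevant $\q$-trace of the Christoffel difference between $\q'$ and $\q$ is proportional to $(2-n)$ and hence vanishes when $n=2$). These two facts convert every primed operator into an unprimed one at the cost of powers of $\phi$ together with extra gradient terms produced by the Leibniz rule applied to $\c/\phi$ and to $\sonep$.

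The core of the proof is then the verification that, after multiplying the primed integrand coefficient by $\phi^3$ (which accounts for $\bm{\eta_{\q'}}=\phi^2\volunitdos$ on one side and the factor $1/\phi$ on the other), one recovers exactly $\triangle_\q\c-(\c+\a)-4\,\mbox{div}_\q\sone$. Concretely I would expand $\phi\,\triangle_\q(\c/\phi)$, $-\c\phi^2-\B\phi^3$ and $-4\phi\,\mbox{div}_\q\sonep$ term by term. The decisive place where the hypothesis enters is the cross term $-\c\phi^{-1}\triangle_\q\phi$ coming from the Laplacian: rewriting $\triangle_\q\phi$ through $\triangle_\q\log\phi$ and inserting (\ref{largesphereeq}) in the form $\triangle_\q\log\phi=1-\phi^2$ produces precisely the bare term $-\c$ (together with a $\c\phi^2$ piece that cancels the $-\c\phi^2$ from $-(\c'+\a')\phi^3$). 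The main obstacle is then purely the algebraic bookkeeping of the remaining first-order terms in $\esf\phi$: the coefficient $\B$ already carries seven terms, several of them gradients of $\phi$ contracted with $\h$, with $\esf\c$, or with $\sone$, and the Laplacian and divergence corrections generate matching gradient terms. Organising these by tensorial type, I expect them to cancel in pairs — the $\h^{AB}\phi_{,A}\phi_{,B}$ terms, the $\q^{AB}\phi_{,A}\sone_B$ terms, the $\h^{AB}\esf_A\esf_B\phi$ terms and the $(\Dh_A\h^{AL})\phi_{,L}$ terms each cancelling against their Leibniz counterparts — leaving only the three intrinsic terms. This is exactly the cancellation anticipated in the remark following Theorem \ref{TmaHawkinglimit}, and one finds the identity closes without ever invoking Proposition \ref{transversalProp}.

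Finally, the invariance of the Hawking-energy limit follows at once. Integrating the right-hand density over $\esfdos$, the terms $\triangle_{\q'}\c'$ and $\mbox{div}_{\q'}\sonep$ are a $\q'$-Laplacian and a $\q'$-divergence and hence integrate to zero on the closed surface, leaving $\frac{-1}{16\pi}\int_{\esfdos}(\c'+\a')\bm{\eta_{\q'}}$; equating this with the integral of the equal unprimed density gives the displayed chain of equalities. The normalisation constant $\frac{1}{16\pi}$ is obtained from Theorem \ref{TmaHawkinglimit} (with $\Psi=1/\phi$) together with $\int_{\esfdos}\phi^2\volunitdos=4\pi$, which itself results from integrating (\ref{largesphereeq}) over $\esfdos$ and discarding the total-Laplacian term.
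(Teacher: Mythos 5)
Your proposal is correct and follows essentially the same route as the paper's own proof: read off $\q'$, $\c'$, $\a'$, $\sonep$ from Proposition \ref{geodesichange}, use the two-dimensional conformal covariance of the Laplacian and of the divergence of a one-form, cancel the gradient terms pairwise, and invoke the large sphere equation $\triangle_{\q}\log\phi = 1-\phi^2$ exactly at the cross term $-\c\,\phi^{-1}\triangle_{\q}\phi$, with the volume relation $\bm{\eta_{\q'}}=\phi^2\volunitdos$ closing the pointwise identity and integration (killing the Laplacian and divergence terms) giving the invariance of the limit. Your derivation of the normalisation via $\int_{\mathbb{S}^2}\phi^2\volunitdos = 4\pi$ from integrating (\ref{largesphereeq}) is a slight (correct) variant of the paper's implicit argument, and your pairwise cancellation list matches what actually happens in the computation.
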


\begin{proof}
The general expressions (\ref{gammaprime})-(\ref{thetalhatexpansion}) give the
explicit form 
of the geometric objects of $\{ S_{r'} \}$ in terms
of the background folition $\{ S_r \}$, namely
\begin{align*}
\q'_{AB} & =\phi^2 \q_{AB}, \quad \quad \c'=\frac{\c}{\phi},  
\quad \quad 
\a'  = \frac{\a}{\phi^3}-\frac{4\h^{AB}\phi_{,A}\phi_{,B}}{\phi^5} 
+\frac{\c|\esf\phi|^2_\q}{\phi^5}+ 
 \frac{4\q^{AB}\phi_{,A}(s_B)_1}{\phi^4} \\
& +\frac{2\h^{AB}\esf_A\esf_B\phi}{\phi^4}+\frac{2\esf_A\h^{AC}\phi_{,C}}{\phi^4}
-\frac{2\esf^C\c\phi_{,C}}{\phi^4},
\quad \quad
\sonep_A 
= \frac{\sone_A}{\phi}-\frac{\phi_{,L}\h^L_{\phantom{L}A}}{2\phi^2}.
\end{align*}
The Laplacian in two-dimensions is conformally covariant so that
$\triangle_{\q'}\c'
=\frac{1}{\phi^2}\triangle_\q\c'$, and hence
\begin{align*}
\triangle_{\q'}\c' =
\frac{1}{\phi^2}\triangle_\q\c'=\frac{\triangle_\q\c}{\phi^3}
-\frac{\c\triangle_\q\phi}{\phi^4}-\frac{2\esf^A\c\phi_{,C}}{\phi^4}
+\frac{2\c|\esf\phi|^2_\q}{\phi^5}.
\end{align*}
The divergence of a one-form in two-dimensions is also 
conformally covariant 
$\mbox{div}_{\q'} \sonep = 
\frac{1}{\phi^2} \mbox{div}_{\q} \sonep $ and
\begin{align*}
\mbox{div}_{\q'} \sonep 
=\frac{\esf^A\sone_A}{\phi^3}-\frac{\sone_A\esf^A\phi}{\phi^4}+\frac{1}{\phi^5}\phi_{,A}\phi_{,L}\h^{AL}-\frac{1}{2\phi^4}\h^{AL}\esf_A\esf_L\phi-\frac{1}{2\phi^4}\phi_{L}\esf_A\h^{LA}.
\end{align*}
Putting things together many terms cancell out and we find
\begin{align*}
\triangle_{\q'}\c'-(\c'+\a')-4 \mbox{div}_{\q'} \sonep & 
= \frac{\triangle_\q\c}{\phi^3}-
\frac{\c}{\phi^3} \left ( \frac{\triangle_\q \phi}{\phi} -
\frac{|\esf\phi|^2_\q}{\phi^2}
\right )
-\frac{\c}{\phi}-\frac{\a}{\phi^3}- \frac{4}{\phi^2}  \mbox{div}_{\q} \sone
\end{align*}
Using the large sphere equation
(\ref{largesphereeq})
$\frac{\triangle_{\q} \phi}{\phi} - \frac{|\esf \phi|^2_\q}{\phi^2} =
\triangle_{\q} \log \phi = 1 - \phi^2$ we find
\begin{equation*}
\triangle_{\q'}\c'-(\c'+\a')-
4 \mbox{div}_{\q'} \sonep 
=\left(\triangle_\s\c-(\c+\a)-4 \mbox{div}_{\q} \sone \right ) 
\frac{1}{\phi^3}.
\end{equation*}
Since the volume forms
are related by $\bm{\eta_{\q'}}=\phi^2\volunitdos$, 
the result follows.
\end{proof}

We have considered so far the limit of the Hawking energy for geodesic
foliations with fixed initial surface $S_0$. The second step is to 
consider geodesic foliations with a different initial surface. Let us fix
a geodesic background foliation $\{ S_{r} \}$
aproaching large spheres and, as usual, let $k$
be the associated null generator $k$ satisfying $k(r)=-1$. Any geodesic
foliation with the same null generator is defined by the equation $r' = 
\mbox{const}$, where $r'$ is any solution of $k(r')=-1$. Hence 
$k(r-r') =0$ and the function $\tau := r - r'$ is Lie constant. This function
can be interpreted as the graph function of the initial
surface $S_{r'=r_0}$ in the original foliation $\{S_{r'} \}$. If $\{S_{r'}\}$
starts at a larger initial value $r'_0$, the graph function of
$S_{r'_0}$ is given by an appropiate constant shift of $\tau$, namely
$\tau + r'_0 + r_0$. The following theorem gives the limit of the Hawking energy 
for the foliation $\{S_{r'} \}$ and shows that the integrand is also 
covariant (in fact invariant) for this change of foliation.

\begin{Tma}
\label{Tmatau}
Let $\Omega$ be a past asymptotically flat null hypersurface endowed
with an affinely parametrized background foliation
$\{S_r\}$ with generator $k$ and approaching large spheres.
Let
$\tau$ be a Lie constant function and define the geodesic foliation
$\{S_{r'}\}$ by the graph functions 
\begin{equation}
r = F_{r'} := \tau+r',
\end{equation}
Then we have $\q' = \q$,
\begin{equation*}
\theta_k=\frac{-2}{\rbp}+\frac{\c+2\tau}{\rbp^2}+o(\rbp^{-2}), \quad  \theta_{\ell'}=\frac{2}{\rbp}+\frac{\a-2\tau-2\triangle_\q\tau}{\rbp^2}+o(\rbp^{-2}), \quad  {s_{\ell'}}_A=\frac{\sone_A+\tau_{,A}}{\rbp}+o(\rbp^{-1}), 
\end{equation*}
and
\begin{equation}
\label{bracketcovariance}
\left(\triangle_\q\c-(\c+\a)-4 \esf^A\sone_A\right)\volunitdos=
(\triangle_{\q'}\c'-(\c'+\a')-4\nabla^{\q' A}(s'_A)_1)\bm{\eta_{\q'}}.
\end{equation} 
Moreover, the
limits of the Hawking energies along the two foliations
coincide and
read
\begin{equation}
\underset{r \to \infty}{\lim}
m_H(S_{r})=
\underset{r' \to \infty}{\lim}
m_H(S_{r'})=\frac{-1}{16\pi}
\int_{\mathbb{S}^2}(\c+\a)\volunitdos=\frac{-1}{16\pi}
\int_{\mathbb{S}^2}(\c'+\a')\bm{\eta_{\q'}}.
\end{equation}

\end{Tma}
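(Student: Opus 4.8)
The plan is to treat Theorem \ref{Tmatau} as the special case of a geodesic change of foliation with trivial boost factor: since $\{S_{r'}\}$ shares the null generator $k$, one has $\phi=1$ (equivalently $\alpha=1$ in Corollary \ref{boostcor}), and the only new feature is the shift $\tau$ carried by the graph function $F_{r'}=\tau+r'$. I would therefore not invoke Proposition \ref{geodesichange} directly (whose graph function $r_0+\phi\rbp$ reduces to the $\tau=0$ case when $\phi=1$), but instead return to the general transformation formulas of Section \ref{Setup}: equation (\ref{metric}) of Proposition \ref{RelGraph}, equation (\ref{Relthl}) of Corollary \ref{RelExpansions}, and equation (\ref{oneformalpha}) of Corollary \ref{boostcor}, all with $\alpha=1$ and $Q_k=0$. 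The essential bookkeeping observation is that on each leaf $S_{r'}$ the background quantities must be evaluated at $\rb=\tau+\rbp$, and that $\grad F_{r'}=\grad\tau$, $\triangle F_{r'}=\triangle\tau$ because $r'$ is constant along the leaf.

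First I would establish $\q'=\q$: by (\ref{metric}), $\gamma'_{AB}=\gamma_{AB}|_{\rb=\tau+\rbp}$, and inserting (\ref{metriccovariant}) gives $\gamma'_{AB}=\q_{AB}\rbp^2+(2\tau\q_{AB}+h_{AB})\rbp+o(\rbp)$, so the rescaled limit metric is unchanged. For $\theta_k$, since $k'=k$ and $\theta_k$ is a property of $\Omega$, I reexpand (\ref{thkexp}) at $\rb=\tau+\rbp$; the correction from $\tfrac{1}{\tau+\rbp}=\tfrac{1}{\rbp}-\tfrac{\tau}{\rbp^2}+\cdots$ produces exactly the $+2\tau$ in the $\rbp^{-2}$ coefficient, so $\c'=\c+2\tau$. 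For $\theta_{\ell'}$ I apply (\ref{Relthl}) with $\alpha=1$, $Q_k=0$; since $\grad F_{r'}=\grad\tau$ and $\gamma^{AB}=O(\rb^{-2})$, each of the terms $|DF|^2\theta_k$, $\s_\ell(\grad F)$ and $K^k(\grad F,\grad F)$ carries an extra factor $\rb^{-1}$ and is thus $O(\rbp^{-3})$, negligible at order $\rbp^{-2}$. The surviving contributions are the reexpanded $\theta_\ell$ and $-2\triangle_\gamma\tau$, the latter evaluated via (\ref{laplaciandev}) as $-2\triangle_\q\tau\,\rbp^{-2}$ at leading order, giving $\a'=\a-2\tau-2\triangle_\q\tau$. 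For the normal connection I use (\ref{oneformalpha}) with $\alpha=1$, $Q_k=0$, namely $\s_{\ell'}{}_A=\s_\ell{}_A-K^k{}_A{}^B\tau_{,B}$; combining item (ii) of Definition \ref{AF} with (\ref{KkEnd}) yields the leading coefficient $(s'_A)_1=\sone_A+\tau_{,A}$.

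With these four expansions in hand, the covariance identity (\ref{bracketcovariance}) is a direct algebraic check. Because $\q'=\q$, the operators $\triangle_{\q'}$, $\nabla^{\q'A}$ and the volume form $\bm{\eta_{\q'}}$ coincide with their unprimed counterparts, so it suffices to substitute $\c'=\c+2\tau$, $\a'=\a-2\tau-2\triangle_\q\tau$ and $(s'_A)_1=\sone_A+\tau_{,A}$ into the right-hand bracket and use $\esf^A\tau_{,A}=\triangle_\q\tau$. The $\tau$-dependent pieces then cancel: $\triangle_\q\c'$ supplies $+2\triangle_\q\tau$, $-(\c'+\a')$ supplies $+2\triangle_\q\tau$, and $-4\esf^A(s'_A)_1$ supplies $-4\triangle_\q\tau$, while the explicit $\pm2\tau$ inside $\c'$ and $\a'$ cancel as well, leaving precisely $\triangle_\q\c-(\c+\a)-4\esf^A\sone_A$. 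This step is delicate only in that the shift corrections must be kept consistent across all three coefficients; once they are correct the cancellation is automatic.

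Finally, for the equality of the Hawking limits I would note that both $\{S_r\}$ and $\{S_{r'}\}$ are geodesic foliations approaching large spheres (the limit metric $\q'=\q$ has unit curvature), so Theorem \ref{TmaHawkinglimit} applies to each with $\Psi=1$, giving $\lim m_H=\tfrac{1}{16\pi}\int_{\esfdos}(\triangle_\q\c-(\c+\a)-4\,\mbox{div}_{\q}\sone)\volunitdos$ and the analogous primed expression. Integrating (\ref{bracketcovariance}) over $\esfdos$ and discarding the total-divergence and Laplacian terms (which vanish on the closed surface) yields $\int(\c+\a)\volunitdos=\int(\c'+\a')\bm{\eta_{\q'}}$, whence both limits equal $\tfrac{-1}{16\pi}\int_{\esfdos}(\c+\a)\volunitdos$. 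The main obstacle throughout is the careful asymptotic bookkeeping tied to evaluating the background data at the shifted radius $\rb=\tau+\rbp$; the covariance cancellation and the limit computation are then routine.
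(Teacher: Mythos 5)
Your proposal is correct and follows essentially the same route as the paper's own proof: re-expansion of the background data at $\rb=\tau+\rbp$, the graph formulas (\ref{Relthl}) and (\ref{oneformalpha}) with $\alpha=1$, $Q_k=0$, the Laplacian expansion (\ref{laplaciandev}), the resulting identifications $\c'=\c+2\tau$, $\a'=\a-2\tau-2\triangle_{\q}\tau$, $\sonep_A=\sone_A+\tau_{,A}$, and finally Theorem \ref{TmaHawkinglimit} with $\Psi=1$ applied to both foliations together with the integrated covariance identity. The order-counting for the negligible terms and the explicit $\tau$-cancellation in (\ref{bracketcovariance}) are both accurate, so nothing is missing.
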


\begin{proof}
We write, as before, $\r = r - r_0$ and $\rt = r'-r_0$, so that
$\r = \rt + \tau$. Changing parameters in
the first fundamental form
\begin{equation*}
\gamma_{AB}=\q_{AB}\rb^2+h_{AB}\rb+\Psi_{AB}=\q_{AB}(\tau+\rbp)^2+O(\rbp)=
\q_{AB}\rbp^2+O(\rbp)
\end{equation*}
it follows  $\q'=\q$. Similarly, (\ref{thkexp}) gives
\begin{equation*}
\theta_k=\frac{-2}{\rb}+\frac{\c}{\rb^2}+o(\rb^{-2})=\frac{-2}{\tau+\rbp}+\frac{\c}{(\tau+\rbp)^2}+o(\rbp^{-2})=\frac{-2}{\rbp}+\frac{\c+2\tau}{\rbp^2}+o(\rbp^{-2})
\end{equation*}
so that  $\c'=\c+2\tau$. For $\theta_{\ell'}$ we 
use (\ref{Relthl}) with $\alpha=1$, $Q_{k}=0$ and $F = F_{r'}$. 
First we change  parameter in the expansion of the first terms
in the right-hand side
\begin{equation*}
\theta_\ell=\frac{2}{\rb}+\frac{\a}{\rb^2}+o(\rb^{-2})=\frac{2}{\tau+\rbp}+\frac{\a}{(\tau+\rbp)^2}+o(\rbp^{-2})=\frac{2}{\rbp}+\frac{\a-2\tau}{\rbp^2}
+o(\rbp^{-2}).
\end{equation*}
Since 
$D F_{r'} = D \tau$ independent of $r$ and $\gamma^{AB} = O(\rt^{-2})$,
cf. (\ref{metriccontravariant}), we have 
$|DF_{r'}|^2_\gamma = O (\rt^{-2})$ and 
the term $\theta_k |DF_{r'}|^2_{\gamma} = O(\rt^{-3})$. The same
argument shows that 
$\s_{\ell} (\grad F) = O(\rt^{-3})$
and  $K^{k} ( \grad F, \grad F) = O(\rt^{-3})$.  
The Laplacian term can be computed, using (\ref{laplaciandev}), as
\begin{equation*}
\triangle_\gamma F_{r'}
=\frac{\triangle_\q\tau}{\rb^2}+o(\rb^{-2})
=\frac{\triangle_\q\tau}{\rbp^2}+o(\rbp^{-2}).
\end{equation*}
Inserting also these into expression 
(\ref{Relthl}) yields
\begin{equation*}
\theta_{\ell'}=\frac{2}{\rbp}+\frac{\a-2\tau}{\rbp^2}
-\frac{2\triangle_\q\tau}{\rbp^2}+o(\rbp^{-2})
=\frac{2}{\rbp}+\frac{\a-2\tau-2\triangle_\q\tau}{\rbp^2}+o(\rbp^{-2}),
\end{equation*}
as claimed. Note in particular that $\a'=\a-2\tau-2\triangle_\q\tau$.
The expansion of the connection one-form $\sonep$ is obtained 
from (\ref{oneformalpha}) with $\alpha=1$ and $Q_k=0$. The term
$-K^{k}(X_A,\mbox{grad} \, F_{r'})$ is
$-K^{k}(X_A,\mbox{grad}\, F_{r'}) = -K^k(X_A,\mbox{grad}\, \tau) =
 \frac{1}{\r} \tau_{A} + O(\r^{-2})$ after using (\ref{KkEnd}). Given 
the expansion of $s_{\ell}$ in item (ii) of 
Definition \ref{AF} we conclude
\begin{equation*}
{s_{\ell'}}_A=\frac{\sone_A}{\rb}+\frac{\tau_{,A}}{\rb}+o(\frac{1}{\rb})
=\frac{\sone_A+\tau_{,A}}{\rb}+o(\frac{1}{\rb})=
\frac{\sone_A+\tau_{,A}}{\rbp}+o(\frac{1}{\rbp}) 
\quad \Longrightarrow \quad
\sonep_A=\sone_A+\tau_{,A}. 
\end{equation*}
Inserting $\q'$, $\c'$, $\a'$ and $\sone'$ in 
$(\triangle_{\q'}\c'-(\c'+\a')-4\nabla^{\q' A}(s'_A)_1)\bm{\eta_{\q'}}$ 
all terms in $\tau$ cancel out and the invariance 
(\ref{bracketcovariance}) is established.
For the last statement we use the fact
that both foliations $\{S_r\}$ and $\{ S_{r'} \}$ are
geodesic foliations approaching large spheres. So, both can be taken 
as background foliations and in each case we can 
apply Theorem \ref{TmaHawkinglimit} 
with $\Psi =1$. Using the Gauss identity,
the equalities
\begin{align*}
\underset{r \to \infty}{\lim} m_H(S_{r})
=\frac{-1}{16\pi}
\int_{\mathbb{S}^2}(\c+\a)\volunitdos \quad \mbox{and}
\quad 
\underset{r' \to \infty}{\lim} m_H(S_{r'})
=\frac{-1}{16\pi}\int_{\mathbb{S}^2}(\c'+\a')\bm{\eta_{\q'}}
\end{align*}
hold. Invoking the invariance (\ref{bracketcovariance}) 
the remaining equality 
$\underset{r \to \infty}{\lim} m_H(S_{r}) =
\underset{r' \to \infty}{\lim} m_H(S_{r'})$ follows.
\end{proof}

\section{Limit of the Hawking energy for non-geodesic foliations}
\label{Sect5}

Our aim in this paper is to obtain the limit of the Hawking energy
along very general foliations $\{S_{r^{\star}} \}$. In the
previous section we dealt with the general case when the foliations are
geodesic. In order to go into more general settings we need to
consider vector fields $k'$ not affinely parametrized. We will
however assume that $k'$ is nowhere zero even at the limit at infinity. More
specifically, we assume that there exists a function
$r' \in \mathcal{F}(\Omega)$ satisfying $k'(r')=-1$ and a geodesic
background foliation (not necessarily approaching large spheres) 
defined as the level sets of a function $r \in \mathcal{F}(\Omega)$ 
such that $\fun:=r -r'$ decays at infinity in a appropriate way. In other words,
the foliation $\{S_{r'}\}$ is assumed to approach at infinity
a geodesic foliation $\{S_{r}\}$ at an appropriate rate. Conversely, given 
a geodesic background foliation and
a function $\fun \in \mathcal{F} (\Omega)$ satisfying
$\fun = o_1(1)$ we can define a function $r' := r -\fun \in \mathcal{F} 
(\Omega)$.
The level sets of this function are smooth surfaces at least for
points at large enough $r$. This is because $dr' (k) = k(r -\fun) = 
-1 - k(\fun) \neq 0$ because $k(\fun)$ decays at infinity. Thus, for $r$
bigger than some (possible large) value $R_1$, the level sets $r' = \mbox{const}$
define a foliation $\{S_{r'}\}$. Each surface on this foliation is transverse
to $k$ and hence spacelike. The null generator $k'$ satisfying $k'(r')= -1$
is given by $k' = \frac{1}{1 + k(\fun)} k$ because
\begin{align*}
k'(r') = \frac{1}{1 + k(\fun)} k(r') = \frac{1}{1 + k(\fun)} 
k(r - \fun) = -1.
\end{align*}
It is clear that the foliation $\{ S_{r'} \}$ is not geodesic in general.
Given a value $r'$ large enough, the 
surface $S_{r'}$ is a graph on the brackground foliation $\{ S_{r} \}$.
The graph function $r=F_{r'}$ is given by $F_{r'}(p) = r' + \fun(p)$ for all
$p \in S_{r'}$. As usual, we extend the graph function to $\Omega$ 
by Lie dragging along  $k$. Note that $F_{r'}$ extended this way is {\it not} $r' + \fun$, but both agree on $S_{r'}$. Thus we can safely abuse notation an
write the graph simply as $F_{r'} = r' + \fun$.
The following theorem gives the limit of the
Hawking energy for the foliation $\{S_{r'}\}$.

\begin{Tma}
\label{Tmaq}
Let $\Omega$ be a past asymptotically flat null hypersurface endowed
with an affinely parametrized background foliation
$\{S_r\}$ with generator $k$. Let
$\fun \in \mathcal{F}(\Omega)$ satisfy 
$\fun=\ordone\cap\ordtwo$, and $k(\fun)=\ordthree$
. Define the foliation
$\{S_{r'}\}$ by the graph functions 
\begin{equation}
\label{parameterchange}
r = F_{r'} :=r'+\fun.
\end{equation}
Then the null expansions and the connection one-form 
of $\{S_{r'}\}$ have, to leading orders, the same form as 
for the background foliation, i.e.
\begin{equation}
\theta_{k'}=\frac{-2}{\rb'}+\frac{\c}{\rb'^2}+o(\rb'^{-2}), \quad
\theta_{\ell'}=\frac{2\K_{\hat{q}}}{\rb'}+\frac{\a}{\rb'^2}+o(\rb'^{-2}) 
\quad {s_{\ell' A}}=\frac{(s^{(1)}_{\ell})_A}{\r'}+o(\r'^{-1}),
\label{case3} 
\end{equation}
where $\rb := \r - r_0$,
$\rb' := \rb'-r_0$ and all objects refer to the background foliation. The limit of the Hawking energy along $\{S_{r'}\}$
is the same as the limit along $\{S_r\}$ and reads
\begin{equation}
\underset{r \to \infty}{\lim}m_H(S_{r})=\underset{r' \to \infty}{\lim}m_H(S_{r'})=\frac{-1}{8\pi\sqrt{16\pi}}\sqrt{|\hat{S}|}\int_{\hat{S}}\left(\K_{\hat{q}}\c+\a \right)\bm{\eta_{\qh}}, \label{Limit2}
\end{equation}
where $|\hat{S}|$ is the area of $(S_0,\qh)$ 
and $\bm{\eta_{\hat{q}}}$ the corresponding volume form.   
\end{Tma}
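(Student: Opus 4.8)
The plan is to feed the graph data into the transformation formulas of Section~\ref{Setup} and then into the Hawking-energy expansion (\ref{hawkingmassexpansion}), keeping careful track of which $\fun$-terms survive to the relevant orders. Since $k'=\frac{1}{1+k(\fun)}\,k$, the boost factor in Corollaries~\ref{boostcor} and~\ref{RelExpansions} is $\alpha=\frac{1}{1+k(\fun)}$, and $k(\fun)=\ordthree$ gives $\alpha=1+o(r^{-1})$ together with $-\frac{1}{\alpha}X_A(\alpha)=o(r^{-1})$. The graph function is $F=F_{r'}=r'+\fun$, so $\rb=\rb'+\fun=\rb'+o(1)$, and re-expanding the background behaviour of Proposition~\ref{leadingthetal} and item~(ii) of Definition~\ref{AF} in powers of $\rb'$ costs only $o(r^{-2})$ and $o(r^{-1})$ respectively. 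Imposing that $X'_A:=X_A-X_A(F)k$ be tangent to $S_{r'}$ (i.e. annihilate $r'=r-\fun$) yields $X_A(F)=\frac{X_A(\fun)}{1+k(\fun)}=o(1)$, using $\fun=\ordtwo$; hence $|DF|^2_\gamma=o(r^{-2})$.

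With these estimates I would place every term on the right of (\ref{Relthl}) and (\ref{oneformalpha}). The cross terms $|DF|^2\theta_k$, $\s_\ell(\grad F)$, $K^k(\grad F,\grad F)$, $-K^k(X_A,\grad F)$ and $-\frac{1}{\alpha}X_A(\alpha)$ are all $o(r^{-2})$ (resp.\ $o(r^{-1})$), hence invisible at the leading $\rb'^{-1}$ and subleading $\rb'^{-2}$ coefficients. The one term demanding the full hypothesis is the Laplacian: applying (\ref{laplaciandev}) to the Lie-constant $F$ and using $\fun=\ordtwo$ to conclude $\triangle_{\qh}F=o(1)$ forces $\Delta F=\triangle_\gamma F=o(r^{-2})$, so it contributes nothing at order $\rb'^{-2}$ and leaves $\a$ intact. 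Multiplying through by $\frac{1}{\alpha}=1+o(r^{-1})$ and recalling $\theta_\ell=\frac{2\K_{\qh}}{\rb}+\frac{\a}{\rb^2}+o(r^{-2})$ from (\ref{thlo}), (\ref{thkexp}) and item~(ii) then gives (\ref{case3}): the expansions of $\theta_{k'},\theta_{\ell'}$ and $\s_{\ell'}$ share the leading and first subleading coefficients of the background foliation.

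For the limit I insert (\ref{case3}) into (\ref{hawkingmassexpansion}). By (\ref{metric}) the induced metric is $\gamma'_{AB}=\gamma_{AB}|_{\rb=\rb'+\fun}=\qh_{AB}\rb'^2+O(\rb')$, whence $\bm{\eta_{S_{r'}}}=\rb'^2\bm{\eta_{\qh}}+\c\,\rb'\bm{\eta_{\qh}}+o(\rb')$, with $\tr_{\qh}\hh=2\c$. The crucial point is that the apparent subleading contribution proportional to $\fun$ is in fact $o(1)\cdot\rb'$ and falls into the remainder, so only the background coefficient $\c$ survives; this is exactly why $\fun=o(1)$ (rather than $O(1)$) is required. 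Then $\theta_{k'}\theta_{\ell'}=-\frac{4\K_{\qh}}{\rb'^2}+\frac{2\K_{\qh}\c-2\a}{\rb'^3}+o(r^{-3})$, and combining $\sqrt{|S_{r'}|}=\rb'\sqrt{|\hat S|}+O(1)$ with the Gauss--Bonnet identity $\int_{S_0}\K_{\qh}\bm{\eta_{\qh}}=4\pi$ (which cancels the leading $1$ in (\ref{hawkingmassexpansion})) leaves the $\rb'^{-1}$ coefficient $\int(-2\K_{\qh}\c-2\a)\bm{\eta_{\qh}}$, producing (\ref{Limit2}) as $r'\to\infty$. The equality $\lim_{r\to\infty}m_H(S_r)=\lim_{r'\to\infty}m_H(S_{r'})$ follows by running the identical computation with $\fun\equiv0$, since $\{S_r\}$ is itself an affine geodesic foliation.

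The main obstacle is the bookkeeping of the $\fun$-contributions: one must separate the terms that genuinely enter from those that drop into the remainder because $\fun$, its tangential derivatives $X_A\fun,\,X_AX_B\fun$, and the transverse derivative $k(\fun)$ all decay. The conceptual content is that the hypotheses $\fun=\ordone\cap\ordtwo$ and $k(\fun)=\ordthree$ are calibrated precisely so that neither the $\rb'^{-2}$ coefficients of $\theta_{k'},\theta_{\ell'}$ nor the subleading coefficient of the area element is perturbed, leaving a limit that depends only on the background data $\c,\a,\K_{\qh},\qh$ and not on $\fun$ at all.
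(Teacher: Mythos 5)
Your proposal is correct and follows essentially the same route as the paper's proof: change parameter $\rb=\rb'+\fun$ in the background expansions, feed the boost $\alpha=\frac{1}{1+k(\fun)}$ and graph function into (\ref{Relthl}) and (\ref{oneformalpha}), check that all $\fun$-dependent cross terms and the Laplacian drop below the relevant orders, and then combine the resulting expansions with the volume form and Gauss--Bonnet to get (\ref{Limit2}). Your explicit identity $X_A(F)=\frac{X_A(\fun)}{1+k(\fun)}=o(1)$ is a welcome clarification of a step the paper leaves implicit, but it does not change the structure of the argument.
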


\begin{proof}
The asymptotic expansion for $\theta_{k}$ is obtained by simply
changing the parameter of the foliation $\rb = \rb' + \fun$
\begin{eqnarray*}
\theta_{k}=\frac{-2}{\rb}+\frac{\c}{\rb^2}+o(\rb^{-2})=\frac{-2}{\rb'+\fun}+\frac{\c}{(\rb'+\fun)^2}+o(\rb'^{-2})=\frac{-2}{\rb'}+\frac{\c}{\rb'^2}+o(\rb'^{-2})
\end{eqnarray*}
where $\fun=\ordone$  has been used in the last equality.  Given that 
$k'=\left( \frac{1}{1+k(\fun)}\right)k$ and
$k(\fun) = o(\rb^{-1})$, the null expansion along $k'$ is
\begin{eqnarray*}
\theta_{k'}=\left( \frac{1}{1+k(\fun)}\right)\theta_k=\frac{-2}{\rb'}+\frac{\c}{\rb'^2}+o(\rb'^{-2}),
\end{eqnarray*}
which is the first expression in (\ref{case3}).
We next compute $\theta_{\ell'}$ from (\ref{Relthl}) with 
$\alpha=\frac{1}{1+k(\fun)}$. Changing parameters in the first term
of the right hand side,
\begin{equation*}
\theta_{\ell}=\frac{2 \K_{\qh}}{\rb'+\fun}+\frac{\a}{(\rb'+\fun)^2}+o(\rb'^{-2})=\frac{2\K_{\qh}}{\rb'}+\frac{\a-2\fun}{\rb'^2}+o(\rb'^{-2})=\frac{2\K_{\qh}}{\rb'}+\frac{\a}{\rb'^2}+o(\rb'^{-2}).
\end{equation*}
For the terms involving the graph function, it is immediate to check that
$|DF_{r'}|^2_{\gamma}\theta_k= o(\rb'^{-3})$,
$\s_{\ell} (\grad F_{r'}) = o(\rb'^{-3})$, $K^{k} ( \grad F_{r'}, \grad F_{r'}) 
= o(\rb'^{-3})$.
For the Laplacian term (\ref{laplaciandev}) gives
\begin{equation*}
\triangle_{\gamma} F_{r'}=\frac{\triangle_{\qh}F_{r'}}{\rb^2}+o(\rb^{-2})=\frac{\triangle_{\qh}F_{r'}}{\rb'{^2}}+o(\rb'^{-2})=\frac{\triangle_{\qh}\fun}{\rb'{^2}}+o(\rb'^{-2}) = o(\rb'^{-2}), 
\end{equation*}
because $\fun= \ordtwo$. Finally, $k(\fun) = o(\rb'^{-1})$ implies
$\frac{1}{\alpha}=1+k(\fun)=1+ o(\rb'^{-1}) $ and (\ref{Relthl}) is simply
\begin{equation*}
\theta_{\ell'}=( 1+o(\rb'^{-1}) )\left( \frac{2\K_{\qh}}{\rb'}+\frac{\a}{\rb'^2}+o(\rb'^{-2}) \right)=\frac{2\K_{\qh}}{\rb'}+\frac{\a}{\rb'^2}+o(\rb'^{-2})
\end{equation*} 
as stated in the Theorem. The connection one-form ${s_{\ell'}}$
is obtained from (\ref{oneformalpha}) with $Q_k=0$ and
$\alpha= \frac{1}{1+k(\fun)}$. Given that
$\alpha_{,A} = o(\rb'^{-1})$ (because $k(\fun) = o_1^X(\rb'^{-1})$) and
$s_{\ell} (\grad F_{r'}) = o(\rb'^{-2})$, we conclude
\begin{equation*}
{s_{\ell' A}}=\frac{(s^{(1)}_{\ell})_A}{\r'}+o(\r'^{-1}).
\end{equation*}
We next compute the limit of the Hawking energy along $\{S_{r'}\}$. 
The metric in $S_{r'}$ is 
\begin{equation*}
\gamma(r')_{AB}=\hat{q}_{AB}\rb^2+o(\rb^2)=\hat{q}_{AB}(\rb'+\fun)^2+o(\rb'^{2})=\hat{q}_{AB}\rb'^2+o(\rb'^{2}),
\end{equation*}
so that in particular the rescaled limit metric
and corresponding volume forms
remain unchanged, $\qh'=\qh$ and $\bm{\eta_{\qh'}}=\bm{\eta_{\qh}}$.
This, together with the expansions (\ref{case3}), already implies that 
the limit of the Hawking energy along $\{ S_{r'} \}$ and along $\{ S_{r} \}$
are the same. To obtain expression (\ref{Limit2}) we need
the volume form of $S_{r'}$.  As with the metric $\gamma(r)$
or with the null expansion $\theta_{k}$ it suffices to change parameter
in the volume form $\bm{\eta_{S_r}}$ which is given by
(\ref{volprime}) with $\phi=1$
\begin{equation}
\label{volumetransf}
\bm{\eta_{S_{r'}}}= \left( (\rb'+\fun)^2+\c(\rb'+\fun)+o(\rb') \right)
\bm{\eta_{\qh}}=(\rb'^2+\c \rb'+o(\rb'))\bm{\eta_{\qh}}.
\end{equation}
It is immediate to check that the product 
$\theta_{k'}\theta_{\ell'}\bm{\eta_{S_{r'}}}$ is
\begin{eqnarray*}
\theta_{k'}\theta_{\ell'}\bm{\eta_{S_{r'}}}
=\left( -4\K_{\hat{q}}+(-2\K_{\hat{q}}\c-2\a)\frac{1}{\rb'}+o(\rb'^{-1})  \right)\bm{\eta_{\qh}}
\end{eqnarray*}
so that, using Gauss-Bonnet
$\int_{\hat{S}}\K_{\hat{q}}\bm{\eta_{\qh}}=4\pi$, 
\begin{align*}
1+\frac{1}{16\pi}\int_{S_{r'}}\theta_{k'}\theta_{\ell'}\bm{\eta_{S_{r'}}}
& =\frac{1}{16\pi}\int_{\hat{S}}(-2\K_{\hat{q}}\ch-2\ah)\bm{\eta_{\qh}}
\frac{1}{\rb'}+o(\rb'^{-1}).
\end{align*}
On the other hand $|S_{r'}|=\int_{\hat{S}}(\rb'^2+\ch\rb'+o(\rb'))
\bm{\eta_{\qh}}=|\hat{S}|\rb'^2+o(\rb'^2)$$\Rightarrow$
 $\sqrt{|S_{r'}|}=\sqrt{|\hat{S}|}\rb'+o(\rb')$ and
\begin{align*}
m_H(S_{r'}) & =\frac{1}{\sqrt{16\pi}}\left(\sqrt{|\hat{S}|}\rb'+o(\rb') \right)\left(\frac{1}{16\pi}\int_{\hat{S}}(-2\K_{\hat{\s}}\ch-2\ah)\bm{\eta_{\qh}}\frac{1}{\rb'}+o(\rb'^{-1})\right) \nonumber \\
&=\frac{-1}{8\pi\sqrt{16\pi}}\sqrt{|\hat{S}|}\int_{\hat{S}}(\K_{\hat{\s}}\ch+\ah)\bm{\eta_{\qh}}+o(1).
\end{align*}

\end{proof}

We are ready to obtain our main Theorem \ref{main} by simply combining
the previous results.  In fact, we state and prove a slightly 
more complete theorem
that provides two different expressions for the limit.
\begin{Tma}[\bf General Hawking energy limit]
\label{tmaarbitrary}
Let $\Omega$ be a past asymptotically flat null hypersurface endowed
with an affinely parametrized background foliation
$\{S_r\}$ with generator $k$ that tends to large spheres.
Define the foliation $\{S_{r^*}\}$ by the graph functions
  \begin{equation}
\label{generalchangeparameter}
r=F_{r'}:=r_0+ \frac{1}{\Psi} (r^{\star}-r_0)+\tau+\fun,
\end{equation}
with $\Psi>0$, $\tau$ Lie constant functions on $\Omega$ and 
$\fun= \ordone\cap \ordtwo$ and $k(\fun)= \ordthree$.  The limit
of the Hawking energy along $\{ S_{r^{\star}} \}$ is
\begin{align}
\underset{r^{\star} \to \infty}{\lim}m_H(S_{r^{\star}})&
=\frac{-1}{8\pi\sqrt{16\pi}}
\left(\sqrt{\int_{\mathbb{S}^2} \bm{\eta_{\hat{q}}}
} \right)
\int_{\mathbb{S}^2}\left(\K_{\hat{q}} \c{}^{\star}+\a{}^{\star} \right)
\bm{\eta_{\hat{q}}}
\nonumber \\
\label{hawkingmasslimittwo}
& =\frac{1}{8\pi\sqrt{16\pi}}\left(\sqrt{\int_{\mathbb{S}^2} \frac{1}{\Psi^2}
\volunitdos}\right)\int_{\mathbb{S}^2}\left(\triangle_\q\c-(\c+\a)-
4 \mbox{div}_{\q} (\sone) 
\right) \Psi \volunitdos,
\end{align}
where $\qh$, $\c{}^{\star}$ and $\a{}^{\star}$ refer either
to the foliation  $\{S_{r'} \}$ or to the geodesic foliation $\{ S_{r''} \}$
defined by the graph functions $r = F_{r''} := r_0 
+ \frac{1}{\Psi} (r -r_0) + \tau$, and $\q$, $\c$, $\a$ and $\sone$
refer to the background foliation $\{ S_{r} \}$.
\end{Tma}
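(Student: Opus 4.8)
The plan is to read the general graph function (\ref{generalchangeparameter}) as the composition of the three elementary changes already treated separately, and to invoke Theorems \ref{TmaHawkinglimit}, \ref{Tmatau} and \ref{Tmaq} one after the other. The essential content of those results is that each elementary change leaves the limit of the Hawking energy invariant while transforming the integrand covariantly, so that no genuinely new computation should be needed, only careful bookkeeping of parameters and generators.

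First I would strip off the perturbation $\fun$. Introduce the auxiliary geodesic foliation $\{S_{r''}\}$ defined by $r = F_{r''} := r_0 + \frac{1}{\Psi}(r'' - r_0) + \tau$; its null generator $k'' = \Psi^{-1} k$ is affinely parametrized and its rescaled limit metric is $\qh = \Psi^{-2}\q$. Since $F_{r^{\star}} = F_{r''}|_{r''=r^{\star}} + \fun$, the general foliation $\{S_{r^{\star}}\}$ is precisely an $\fun$-graph over $\{S_{r''}\}$. Taking $\{S_{r''}\}$ as the (geodesic, not necessarily large-sphere) background, Theorem \ref{Tmaq} then applies and supplies both the leading expansions (\ref{case3})---which is exactly the assertion that $\qh$, $\c{}^{\star}$, $\a{}^{\star}$ may be computed indifferently on $\{S_{r^{\star}}\}$ or on $\{S_{r''}\}$---and the first expression in (\ref{hawkingmasslimittwo}), in the form (\ref{Limit2}) with $\int_{\mathbb{S}^2}\bm{\eta_{\qh}} = |\hat{S}|$.

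It then remains to re-express the data of $\{S_{r''}\}$ through the background $\{S_r\}$, which I would achieve by factoring the passage into a $\tau$-shift followed by a $\Psi$-scaling. Performing the $\tau$-shift first keeps us on a large-sphere background: by Theorem \ref{Tmatau} the foliation $r = \hat{r} + \tau$ again approaches large spheres with limit metric $\q$, and its integrand is \emph{invariant}, i.e. (\ref{bracketcovariance}) holds with $\q' = \q$. I would then apply the $\Psi$-scaling of Theorem \ref{TmaHawkinglimit} (with $\phi = \Psi^{-1}$) to this $\tau$-shifted foliation as background; up to the inessential shift of origin already discussed before Theorem \ref{Tmatau}, this reproduces exactly $\{S_{r''}\}$ and delivers the limit in the form (\ref{hawkingmasslimit}), its integrand being built from the $\tau$-shifted data against the unit metric $\q$. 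The invariance (\ref{bracketcovariance}) finally rewrites that integrand in terms of the original background quantities $\c$, $\a$, $\sone$ and $\q$, yielding the second expression in (\ref{hawkingmasslimittwo}). Because both expressions have been produced as values of one and the same limit, their equality is automatic.

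The one point demanding genuine care, and the main obstacle, is to verify that the decay hypotheses on $\fun$ survive once $\{S_{r''}\}$ is promoted to background in Theorem \ref{Tmaq}. Here $\rb'' := r'' - r_0 = \Psi(\rb - \tau)$ with $\Psi$ Lie constant, positive and, being continuous on the compact sphere, bounded away from $0$ and $\infty$; thus $\rb''$ and $\rb$ are comparable as $r \to \infty$ and the orders $\ordone$, $\ordtwo$ transfer unchanged, while $k''(\fun) = \Psi^{-1} k(\fun)$ turns $k(\fun) = \ordthree$ into $k''(\fun) = o_1^X(\rb''^{-1})$, exactly as Theorem \ref{Tmaq} requires. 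Beyond this, the remaining work is purely to keep straight which generator---$k$, the affine $k'' = \Psi^{-1} k$, or the non-geodesic generator $k'$ of $\{S_{r^{\star}}\}$---each expansion refers to; once that bookkeeping is in place, the three theorems chain together with no further analysis.
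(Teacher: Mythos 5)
Your proposal is correct and takes essentially the same route as the paper: both proofs chain Theorems \ref{Tmatau}, \ref{TmaHawkinglimit} and \ref{Tmaq}, differing only in the order of presentation (you strip off $\fun$ first, the paper does it last). The one imprecision is that, measured in the $r''$ parameter, $\{S_{r^{\star}}\}$ is a $\Psi\fun$-graph over $\{S_{r''}\}$ rather than an $\fun$-graph, since $r'' = r^{\star} + \Psi\fun$; but your own observation that $\Psi$ is Lie constant, smooth on the sphere and bounded away from $0$ and $\infty$ shows that $\Psi\fun$ inherits exactly the decay hypotheses required by Theorem \ref{Tmaq}, so the argument goes through unchanged.
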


\begin{proof}
The strategy is to pass from the background foliation to
$\{ S_{r^{\star}} \}$ in three steps. The geometric elements
of each foliation use the same symbol as the foliation, so the meaning
of each quantity should be clear.
Consider first a foliation defined by the
levet sets of $r' := r - \tau$.  Theorem \ref{Tmatau} gives
\begin{equation*}
\label{bracketeq}
\left(\triangle_\q\c-(\c+\a)-4
\mbox{div}_{\q} (\sone) 
\right)\volunitdos= \left (\triangle_{\q'}\c'-(\c'+\a')-4
\mbox{div}_{\q'} (\sone')  \right )\bm{\eta_{\q'}}.
\end{equation*} 
Consider next the foliation defined by the level sets
of $r''$, where $r'=r_0+ \Psi^{-1} (r''-r_0)$. Since $\{ S_{r'} \}$ is a geodesic
foliation approaching large spheres, Theorem 
\ref{TmaHawkinglimit} implies the limit of the Hawking energy
is 
\begin{equation*}
\underset{r'' \to \infty}{\lim}m_H(S_{r''})=\frac{1}{8\pi\sqrt{16\pi}}\left(\sqrt{\int_{\mathbb{S}^2}\frac{1}{\Psi^2} \bm{\eta_{\q'}}    }\right)\int_{\mathbb{S}^2}\left(\triangle_{\q'}\c'-(\c'+\a')-
4 \mbox{div}_{\q'} (\sone')  \right)\Psi\bm{\eta_{\q'}},
\end{equation*}
which, upon using (\ref{bracketeq}) and $\volunitdos=\bm{\eta_{\q'}}$, implies
\begin{equation}
\label{hawkingbefore}
\underset{r'' \to \infty}{\lim}m_H(S_{r''})=\frac{1}{8\pi\sqrt{16\pi}}\left(\sqrt{\int_{\mathbb{S}^2}\frac{1}{\Psi^2}\volunitdos}\right)\int_{\mathbb{S}^2}\left(\triangle_\q\c-(\c+\a)-
4 \mbox{div}_{\q} (\sone) 
\right)\Psi\volunitdos.
\end{equation}
Now, $\{S_{r''}\}$ is geodesic but does not necessarily tend
to large spheres.  The final foliation $\{ S_{r^{\star}} \}$ 
is related to $\{ S_{r''} \}$ by $r'' = r^{\star} + \Psi \fun$. Since
$\Psi \fun$ satisfies the hypotheses of Theorem   
\ref{Tmaq} we conclude that the Hawking energy has the
same limit
along $\{ S_{r''} \}$ and along $\{ S_{r^{\star}} \}$. In combination with
(\ref{hawkingbefore}) this proves the second equality in
(\ref{hawkingmasslimittwo}).  For the first equality we simply note that 
the rescaled limit metric of $\{ S_{r''} \}$ is $\hat{q} = \Psi^{-2} \q$
and apply again Theorem \ref{Tmaq}.
\end{proof}

\vspace{3mm}

\begin{remark}
\label{futurenull}
In this paper we have considered null hypersurfaces extending
to past null infinity. Obviously similar results apply for asymptotically
flat null hypersurfaces extending to future null infinity.  By repeating the
arguments before, the following result is obtained:
consider
a future directed geodesic null vector
$\bar{k}$ tangent to $\Omega$ and define the
function $r \in \mathcal{F} (\Omega)$ by $\bar{k}(r)=1$ with $r=r_0$
on some initial cross section. The level sets $\{S_{r}\}$ define a foliation
which allows to construct a transversal future directed null normal 
$\bar{\ell}$ satisfying $\la \bar{k}, \bar{\ell} \ra = -2$.
If the rescalled asymptotic metric of the foliation
is spherical $\q$, the expansions of the null second fundamental forms
and connection one-form take the form (note the change of signs with
respect to the past null case)
\begin{equation*}
\theta_{\bar{k}}=\frac{2}{\rb}+\frac{\theta_{\bar{k}}^{(1)}}{\rb^2}+o(\rb^{-2}), \quad \quad \theta_{\bar{\ell}}=\frac{-2}{\rb}+\frac{\theta_{\bar{\ell}}^{(1)}}{\rb^2}+o(\rb^{-2})  \quad \quad {s_{\bar{\ell}}}_A=\frac{ {s_{\bar{\ell}}}_A^{(1)}    }{\rb}+o(\rb^{-1})  
\end{equation*}
and the limit of the Hawking energy along $\{S_{r^*}\}$ (with the same
definition as in Theorem \ref{tmaarbitrary}) is
\begin{equation*}
\underset{r^* \to \infty}{\lim}
m_H(S_{r^*})=\frac{1}{8\pi\sqrt{16\pi}}
\left(\sqrt{\int_{\mathbb{S}^2} \frac{1}{\Psi^2} \volunitdos}\right)
\int_{\mathbb{S}^2}\left(-\triangle_\q\theta_{\bar{k}}^{(1)}+(\theta_{\bar{k}}^{(1)}+\theta_{\bar{\ell}}^{(1)})-
4 \mbox{div}_{\q} (s_{\bar{\ell}}^{(1)}) \right)\Psi\volunitdos.
\end{equation*}

\end{remark}

\vspace{3mm}

\section{The large sphere equation and the Bondi energy-momentum}
\label{Sect6}

As mentioned in the introduction, the limit of the Hawking energy
when the foliation approaches large spheres is the Bondi energy. In this
section we want to recover this fact from our general expressions. Recall
first that the conformal group of the two-sphere is defined
as the set of diffemorphisms $\Phi : (\mathbb{S}^2,\q)
\mapsto (\mathbb{S}^2,\q)$ satisfying $\Phi^{\star}(\q) = \Theta^2 \q$,
$\Theta  \in \F(\mathbb{S}^2,\mathbb{R}^+)$ (i.e.
the set of conformal diffeomorphisms).
We restrict ourselves to the connected component of the identity of this
group. It is well-known (see e.g. \cite{PenroseRindler})
that this group is isomorphic to
the connected component of the identity of Lorentz
group of Minkowski
space $\mathbb{M}^{1,3}$, and also isomorphic to the M\"obius group of the Riemann sphere
\begin{align}
F : \mathbb{S}^2 & \mapsto \mathbb{S}^2 \\
          z & \mapsto F(z) = \frac{\alpha z + \beta }{\gamma z + \delta}
          \quad \quad \left ( \begin{array}{cc}
                                \alpha & \beta \\
                                \gamma & \delta 
                               \end{array}
\right ) \in SL(2,\mathbb{C}) \label{Moebius}
\end{align}
where $z \in \mathbb{C} \, \cup \{ \infty \} \simeq \mathbb{S}^2$. In these coordinates,
the standard metric on the sphere is $\q = \frac{4}{(1 + z \zet)^2} dz
d \zet$ and the $l=1$ spherical harmonics read
\begin{equation}
\label{sphericalharmonics}
Y^1_{1}=\frac{z+\zet}{1+z\zet},\quad \quad 
Y^1_{2}=\frac{z-\zet}{i(1+z\zet)},\quad \quad Y^1_{3}=\frac{z\zet-1}{1+z\zet}.
\end{equation}
For a vector $a \in \mathbb{R}^3$ we write 
$a \cdot Y^1 \defi \sum_{i=1}^3 a^i Y^1_i$.
These properties allow us to obtain easily the general solution to the large
sphere equation (\ref{largesphereeq}). 

\begin{Prop}[\bf Solution of the large sphere equation]
A smooth  function $\phi: \mathbb{S}^2 \mapsto  \mathbb{R}^+$
solves equation (\ref{largesphereeq})
if and only if there exists $a=(a^{1},a^2,a^3) \in \mathbb{R}^3$ such that
\begin{equation}
\label{solutionlargespheres}
\Psi := \frac{1}{\phi} = \sqrt{1+|a|^2}+a\cdot Y^1.
\end{equation} 
\end{Prop}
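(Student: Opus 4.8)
The plan is to reinterpret (\ref{largesphereeq}) geometrically and then use the explicit conformal geometry of the round sphere recalled above. First I would observe, via the curvature formula of Remark \ref{RemMetric}, that $\phi$ solves (\ref{largesphereeq}) if and only if the rescaled metric $\qh := \phi^2 \q$ has constant Gauss curvature $\K_{\qh}=1$. So the proposition is equivalent to the assertion that the positive functions $\phi$ for which $\phi^2\q$ has curvature one are precisely the reciprocals of the functions $\sqrt{1+|a|^2}+a\cdot Y^1$, $a\in\mathbb{R}^3$.

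The core computation is the conformal factor of a M\"obius map. For $F$ as in (\ref{Moebius}) with $\alpha\delta-\beta\gamma=1$ one has $F'=(\gamma z+\delta)^{-2}$, and substituting this together with $1+|F(z)|^2=(|\alpha z+\beta|^2+|\gamma z+\delta|^2)/|\gamma z+\delta|^2$ into $\q=\frac{4}{(1+z\zet)^2}\,dz\,d\zet$ yields
\begin{equation*}
F^{\star}\q=\Theta^2\q,\qquad \Theta=\frac{1+z\zet}{|\alpha z+\beta|^2+|\gamma z+\delta|^2}.
\end{equation*}
Expanding the denominator in powers of $z,\zet$ and comparing with (\ref{sphericalharmonics}) I would read off
\begin{equation*}
\frac{1}{\Theta}=\frac{|\alpha z+\beta|^2+|\gamma z+\delta|^2}{1+z\zet}=c_0+a\cdot Y^1,
\end{equation*}
with $c_0=\tfrac12(|\alpha|^2+|\beta|^2+|\gamma|^2+|\delta|^2)$, $a^3=\tfrac12(|\alpha|^2+|\gamma|^2-|\beta|^2-|\delta|^2)$ and $a^1-ia^2=\alpha\overline{\beta}+\gamma\overline{\delta}$. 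The decisive algebraic fact is the Lagrange-type identity
\begin{equation*}
(|\alpha|^2+|\gamma|^2)(|\beta|^2+|\delta|^2)-|\alpha\overline{\beta}+\gamma\overline{\delta}|^2=|\alpha\delta-\beta\gamma|^2=1,
\end{equation*}
which is exactly $c_0^2-|a|^2=1$, i.e. $c_0=\sqrt{1+|a|^2}$. Thus every M\"obius conformal factor has $1/\Theta$ of the stated form; conversely these three relations can be solved for an $SL(2,\mathbb{C})$ matrix for any prescribed $a\in\mathbb{R}^3$, the determinant condition then holding automatically, so every $a$ is realized.

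This settles the ``if'' direction: choosing $\phi=\Theta$ gives $\phi^2\q=F^{\star}\q$, which shares the constant curvature $1$ of $\q$ because $F$ is a diffeomorphism, so $\phi$ solves (\ref{largesphereeq}). For the converse I would invoke the classification of curvature-one metrics on $\mathbb{S}^2$ (uniformization): a solution $\phi$ yields the curvature-one metric $\phi^2\q$, which must be a pull-back $\Phi^{\star}\q$ of the round metric under a conformal diffeomorphism $\Phi$; then $\phi$ is the conformal factor of $\Phi$ and hence, by the computation above (valid for either orientation of $\Phi$, so that the restriction to the identity component is harmless), has the asserted form. I expect the genuine obstacle to lie exactly in this converse step: the coefficient matching only produces the M\"obius family, and ruling out solutions of the nonlinear equation (\ref{largesphereeq}) not arising from the conformal group requires the uniqueness half of uniformization (equivalently, the classification of solutions of this Liouville-type equation on $\mathbb{S}^2$).
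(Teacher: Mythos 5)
Your proof is correct, and its harder half is the paper's own argument. For the converse you do exactly what the paper does: reinterpret (\ref{largesphereeq}) as $\K_{\phi^2\q}=1$ via (\ref{Gausscurv}), invoke uniformization to identify $\phi^2\q$ with the pull-back of $\q$ under a conformal diffeomorphism, reduce to the M\"obius group (\ref{Moebius}) after disposing of the orientation ambiguity, and expand the reciprocal conformal factor in $l=1$ harmonics; your Lagrange identity $(|\alpha|^2+|\gamma|^2)(|\beta|^2+|\delta|^2)-|\alpha\overline{\beta}+\gamma\overline{\delta}|^2=|\alpha\delta-\beta\gamma|^2$ is precisely the detail hiding behind the paper's closing assertion that the expansion is ``straightforwardly'' of the form $\sqrt{1+|a|^2}+a\cdot Y^1$, so making it explicit is an improvement. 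Where you genuinely deviate is the ``if'' direction: the paper rotates $a$ to the third axis and verifies by direct computation that $\Psi=\sqrt{1+c^2}+cY^1_3$ satisfies the reformulated equation $\Psi^2+(\triangle_\q\Psi)\Psi-|\esf\Psi|^2_\q=1$, which is short, self-contained and needs no M\"obius theory, whereas you deduce this direction from diffeomorphism invariance of the Gauss curvature, which obliges you to prove that \emph{every} $a\in\mathbb{R}^3$ is realized by some $SL(2,\mathbb{C})$ matrix. That surjectivity claim is the one step you pass over too quickly (``the determinant condition then holding automatically''); it is true but deserves a line, e.g.\ with $c_0:=\sqrt{1+|a|^2}$ take $\gamma=0$, $\alpha=\delta^{-1}=\sqrt{c_0+a^3}$ (real and positive since $c_0>|a^3|$) and $\beta=(a^1+ia^2)/\alpha$; then $\alpha\overline{\beta}=a^1-ia^2$, the relation $|\alpha|^2-|\beta|^2-|\delta|^2=2a^3$ follows from $c_0^2-|a|^2=1$, and $\alpha\delta-\beta\gamma=1$ holds by construction. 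With that line added your argument is complete: it trades the paper's explicit PDE check for a uniform, purely geometric treatment of both implications, at the price of this small existence argument.
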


\begin{proof}
In terms of $\Psi:=\frac{1}{\phi}$, equation (\ref{largesphereeq}) becomes
\begin{equation}
\label{largesphereeqtwo}
\Psi^2+(\triangle_\q\Psi)\Psi-|\esf\Psi|^2_\q=1.
\end{equation}
We first show that (\ref{solutionlargespheres}) solves
this equation. Applying a rotation to
$\mathbb{S}^2$ we can assume without loss of generality that 
$a = (0,0,c)$ and hence $\Psi=\sqrt{1+c^2}+c Y^1_3$. Thus
$\triangle_\q\Psi=c \triangle_\q Y^1_3=- 2 c Y^1_3 = -2 \Psi +
2 \sqrt{1+c^2}$, and $|\esf\Psi|^2_\q
=(1+z\zet)^2\partial_z\Psi\partial_{\zet}\Psi
= \frac{4c^2 z\zet}{(1+z\zet)^2} = - ( \Psi^2 + 1 - 2 \sqrt{1 +c^2} \, \Psi )$, and (\ref{largesphereeqtwo}) holds after immediate
cancellations.

To show the converse we recall that equation (\ref{largesphereeq}) is
the statement that the Gauss curvature
$\K_{\phi^2 \q} = 1$. This means that there exist coordinates
$z' \in \mathbb{C} \cup \{ \infty \}$ where $\phi^2 \q = \frac{4 dz'
d\zet^{\prime}}{(1 + z' \zet^{\prime})^2}$. We can assume without loss of
generality that the map $F(z) = z'$ is orientation preserving. Since it is also
an element of the conformal group, it must be an element
of the M\"obius group (\ref{Moebius}). Performing the pull-back of $\q$
\begin{align*}
\phi^2 \q = \frac{4 |\frac{\partial F}{\partial z}|^2}{\left ( 1 + |F|^2
\right )^2} dz d \zet.
\end{align*}
Thus $\phi = \frac{ 1 + |z|^2 }{1 + |F|^2} \left
| \frac{\partial F}{\partial z} \right |$. Since $\frac{\partial F}{\partial z}
= \frac{1}{(\gamma z+ \delta)^2}$ if follows
\begin{equation*}
\Psi = \frac{1}{\phi}=\frac{|\alpha z+\beta|^2+|\gamma z+\delta|^2}{(1+|z|^2)}.
\end{equation*}
Expanding in terms of $l=1$ spherical harmonics yields
\begin{equation*}
\Psi = \frac{|\alpha|^2 + |\beta|^2 + |\gamma|^2 + |\delta|^2}{2} 
+ \mbox{Re} \left ( \overline{\alpha} \beta + \overline{\gamma} \delta \right )
Y^1_{1} 
+ \mbox{Im} \left ( \overline{\alpha} \beta + \overline{\gamma} \delta \right )
Y^1_{2} + \frac{|\alpha|^2 - |\beta|^2 + |\gamma|^2 - |\delta|^2}{2} Y^1_{3}
\end{equation*}
It is straightforward to check that this expression is of the form
$\Psi = \sqrt{1 + |a|^2} + a \cdot Y^1$ with $a \in \mathbb{R}^3$.
\end{proof}

\begin{remark}
The Bondi energy-momentum is a vector in an abstract Minkowski space.
Let us recall the construction for the sake of completeness and because
of a subtlety that arises in the case of past null hypersurfaces.
The Lorentz transformation $x^{\prime}{}^{\mu} = \Lambda(F)^{\mu}_{\phantom{\mu}\nu} x^{\nu}$
associated to the
M\"obius transformation $F$ has a time direction $\partial_{t'}$ given by
\begin{align*}
u := \partial_{t'} = \Lambda(F)^{0}_{\phantom{0}0} \partial_t - \sum_{i=1}^3
\Lambda(F)^0_{\phantom{0}i} \partial_{x^{i}}.
\end{align*}
The explicit map $\Lambda(F)$ can be found e.g. in page 17 of
\cite{PenroseRindler}. Comparing $\Lambda^0_{\phantom{0}i}(F)$
with the expresion for
$a^i$ above it follows
\begin{align*}
u = \sqrt{1 + |a|^2} \partial_t - a^i  \partial_{x^i}.
\end{align*}
The construction of $\Lambda(F)$ in \cite{PenroseRindler} is performed
with the unit sphere lying at the intersection
of the hyperplane $t=1$ and the future null cone of the origin.
It is hence adapted to future directed
null hypersurfaces extending to future null infinity. In this paper we have
considered  null hypersurfaces extending to past null infinity. This case
is obtained from the previous one by a time inversion, which has the efect that
the observer $u$ has the form
\begin{align*}
u = \sqrt{1 + |a|^2} \partial_t + a^i  \partial_{x^i}
\end{align*}
in terms of the coefficients $a^i$ in the conformal factor $\phi$.
Summarizing, to a background foliation $\{S_r\}$ of 
$\Omega$  approaching large spheres
with asymptotic rescalled metric $\q$  one can asign 
an asymptotic inertial reference frame $\{ t, x^i \}$ in an (abstract)
Minkowski spacetime. Given another such foliation
$\{ S_{r'} \}$ with asymptotic rescalled metric $\phi^2 \q$, one associates
an asymptotic inertial observer with time direction
$u^{\mu} = (\sqrt{1 + |a|^2}, a^i)$ in the basis $\partial_{x^{\alpha}}$ above.
\end{remark}

\vspace{3mm}

We can now recover the result that the Hawking energy approaches the Bondi
energy for spherical foliations.

\begin{Cor}
\label{fourmomentum}
Let $\Omega$ be a past asymptotically flat null hypersurface endowed
with an affinely parametrized background foliation
$\{S_r\}$ with generator $k$ that tends to large spheres.
Consider another foliation associated to the parameter $r^*$ so that $r=r_0+\phi(r^*-r_0)+\tau+\fun$,  
as in Theorem \ref{tmaarbitrary}, where $\phi >0$
satisfies the large sphere equation (\ref{largesphereeq}).  Let
$u^{\mu} \in \mathbb{M}^{1,3}$ be the asymptotic inertial observer
associated to this foliation. Then
\begin{equation*}
\underset{r^* \to \infty}{\lim}m_H(S_{r^*})=-P_B^{\mu}u^{\nu}\eta_{\mu\nu} := E^u_B, 
\end{equation*}
where $\eta_{\mu\nu}$ is the Minkowski metric
and the Bondi four-momentum vector $P_B$ reads
\begin{align}
\label{p0}
E_B & :=P_B^{0}:=\frac{-1}{16\pi}\int_{\mathbb{S}^2}(\c+\a)\volunitdos \\
\label{pi}
P_B^{i} & :=\frac{1}{16\pi}\int_{\mathbb{S}^2}\left(- \triangle_\q\c + (\c+\a)+
4 \mbox{div}_{\q} \, \sone \right)Y^1_{i}\volunitdos, \quad i\in\{1,2,3\}.
\end{align}
If, in adition, the energy flux decay condition of Proposition
\ref{transversalProp} is satisfied, then the Bondi three-momentum simplifies to
\begin{equation}
P_B^{i}=\frac{1}{16\pi}\int_{\mathbb{S}^2}\left(\triangle_\q\c+(\c+\a)
\right)Y^1_{i}\volunitdos, \quad i\in\{1,2,3\}.
\label{Pbi}
\end{equation}
\end{Cor}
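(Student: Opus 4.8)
The plan is to read off the limit from the second expression in Theorem \ref{tmaarbitrary}, which involves only the background quantities $\Psi=1/\phi$, $\c$, $\a$, $\sone$, so that the auxiliary functions $\tau$ and $\fun$ drop out automatically. The first step is to collapse the area prefactor. Since $\phi$ solves the large sphere equation $\triangle_\q\log\phi+\phi^2=1$, integrating over $\mathbb{S}^2$ and using $\int_{\mathbb{S}^2}\triangle_\q\log\phi\,\volunitdos=0$ on the closed sphere gives $\int_{\mathbb{S}^2}\Psi^{-2}\volunitdos=\int_{\mathbb{S}^2}\phi^2\volunitdos=4\pi$, hence $\sqrt{\int_{\mathbb{S}^2}\Psi^{-2}\volunitdos}=2\sqrt{\pi}$ and the constant $\frac{1}{8\pi\sqrt{16\pi}}\cdot 2\sqrt{\pi}$ collapses to $\frac{1}{16\pi}$. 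This leaves $\lim m_H(S_{r^*})=\frac{1}{16\pi}\int_{\mathbb{S}^2}\bigl(\triangle_\q\c-(\c+\a)-4\,\mbox{div}_{\q}\sone\bigr)\Psi\,\volunitdos$.

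Next I would insert the explicit solution $\Psi=\sqrt{1+|a|^2}+a\cdot Y^1$ provided by the Proposition on the solution of the large sphere equation, and split the integral into its $l=0$ and $l=1$ parts. In the constant part the two total-derivative terms integrate to zero ($\int_{\mathbb{S}^2}\triangle_\q\c\,\volunitdos=\int_{\mathbb{S}^2}\mbox{div}_{\q}\sone\,\volunitdos=0$), so it reduces to $\sqrt{1+|a|^2}\cdot\frac{-1}{16\pi}\int_{\mathbb{S}^2}(\c+\a)\,\volunitdos=\sqrt{1+|a|^2}\,P_B^0$. In the part weighted by $a^iY^1_i$, the integrand is exactly $-16\pi$ times the defining integrand of $P_B^i$ in \eqref{pi}, so it contributes $-a^iP_B^i$. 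Collecting gives $\lim m_H(S_{r^*})=\sqrt{1+|a|^2}\,P_B^0-a^iP_B^i$, which is precisely $-P_B^\mu u^\nu\eta_{\mu\nu}$ once the past-null observer $u^\mu=(\sqrt{1+|a|^2},a^i)$ from the preceding Remark and $\eta=\mbox{diag}(-1,1,1,1)$ are substituted. Here the only point requiring care is the sign of the spatial components of $u$, which is the time-inversion subtlety flagged in the Remark distinguishing past from future null infinity.

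Finally, for the simplified form \eqref{Pbi} of $P_B^i$ under the energy flux decay condition I would invoke Proposition \ref{transversalProp}; since here $\qh=\q$ (so $\Dh=\esf$) it reads $\sone_A=\esf_A\c-\frac{1}{2}\esf^Ch_{CA}$, giving $\mbox{div}_{\q}\sone=\triangle_\q\c-\frac{1}{2}\esf^A\esf^Bh_{AB}$. Comparing \eqref{pi} with the target, it suffices to show $\int_{\mathbb{S}^2}(2\,\mbox{div}_{\q}\sone-\triangle_\q\c)Y^1_i\,\volunitdos=0$, which reduces to $\int_{\mathbb{S}^2}(\triangle_\q\c-\esf^A\esf^Bh_{AB})Y^1_i\,\volunitdos=0$. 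Integrating by parts and using the two standard $l=1$ identities $\triangle_\q Y^1_i=-2Y^1_i$ and the Hessian relation $\esf_A\esf_B Y^1_i=-Y^1_i\q_{AB}$, together with $\tr_\q h=2\c$, both terms reduce to $-2\int_{\mathbb{S}^2}\c\,Y^1_i\,\volunitdos$ and cancel. The main obstacle is bookkeeping the integrations by parts without sign errors; the Hessian identity for $l=1$ harmonics is the key geometric input that makes the shear-type contribution collapse onto the same projection as $\triangle_\q\c$.
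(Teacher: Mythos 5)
Your proposal is correct and follows essentially the same route as the paper: insert $\Psi=\sqrt{1+|a|^2}+a\cdot Y^1$ into the second expression of Theorem \ref{tmaarbitrary}, split into $l=0$ and $l=1$ parts to read off $P_B^0$ and $P_B^i$, and then use Proposition \ref{transversalProp} together with double integration by parts, $\tr_\q \h = 2\c$ and the $l=1$ Hessian identity to obtain (\ref{Pbi}); your reduction to $\int_{\mathbb{S}^2}(2\,\mbox{div}_{\q}\sone-\triangle_\q\c)Y^1_i\,\volunitdos=0$ is just a repackaging of the paper's manipulation of the term $2\esf_A\esf_B\h^{AB}$ against $\Psi_a$. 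A minor merit of your write-up is that you make explicit the collapse of the area prefactor to $\frac{1}{16\pi}$ via the integrated large sphere equation $\int_{\mathbb{S}^2}\phi^2\volunitdos=4\pi$, a step the paper leaves implicit.
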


\begin{proof}
We can use expression (\ref{hawkingmasslimittwo})
with $\Psi$ as in
(\ref{solutionlargespheres}) so that
\begin{align}
\underset{r^* \to \infty}{\lim} & m_H(S_{r^*})  
=  \frac{1}{16\pi}\int_{\mathbb{S}^2}
\left(\triangle_\q\c-(\c+\a)-4 \mbox{div}_{q} \, \sone 
\right) \left (\sqrt{1+|a|^2}+\sum_{i=1}^3 a^i Y^1_i \right )\volunitdos  \label{IntMH} \\
= & \left ( \frac{-1}{16\pi}\int_{\mathbb{S}^2}(\c+\a)\volunitdos
\right)\sqrt{1+|a|^2}+  \nonumber \\
\label{hawkingmasslim}
&+\sum_{i=1}^3\left( \frac{1}{16\pi}\int_{\mathbb{S}^2}\left(\triangle_\q\c-(\c+\a)-4 \mbox{div}_{\q} \, \sone \right)Y^1_i\volunitdos \right)a^i 
=-\eta(u,P_B),
 \end{align}
with $u=(\sqrt{1+|a|^2},a^1,a^2,a^3)$ and $P^i_B$ as given in the
statement of the Corollary.

When the  energy flux decay condition holds, we have from Proposition \ref{transversalProp}
$\sone_A=\esf_A \c-\frac{1}{2}\esf_B\h^B_{\phantom{B}A}$, and
the integral (\ref{IntMH}) becomes
\begin{align*}
\int_{\mathbb{S}^2}\left(-3\triangle_\q\c-(\c+\a)+2\esf_A\esf_B \h^{AB}\right)\Psi_{a} \volunitdos
\end{align*}
where $\Psi_a := \sqrt{1 + |a|^2} + a \cdot Y^1$. Integrating by parts
the last term and using that $\mbox{Hess}_{\q} \Psi_a= - (a\cdot Y^1) \q
= \frac{1}{2} (\triangle_{\q} \Psi_{a}) \q$ yields
\begin{align*}
\int_{\mathbb{S}^2} 2 \esf_A\esf_B \h^{AB}\Psi\volunitdos
=  \int_{\mathbb{S}^2} 2 \c \triangle_{\q} \Psi_a \volunitdos
= \int_{\mathbb{S}^2} 2 (\triangle_{\q} \c)  \Psi_a \volunitdos
\end{align*}
where in the first equality we used $\tr_{\q} \h = 2 \c$ and in
the second we performed another integration by parts. Arguing as before, the 
expresion (\ref{Pbi}) for $P^i_B$ follows.
\end{proof}

\begin{remark}
An analogous result can be obtained for the case of asymptotically flat null
hypersurfaces $\Omega$ approaching future null infinity. Using the
general expression in Remark \ref{futurenull} for the limit of the Hawking energy 
in this case and using the fact that $\Psi_{a} = 
\sqrt{1 + |a|^2} + a \cdot Y$ corresponds now to the asymptotic
observer with four velocity $u^{\alpha} = ( \sqrt{1 + |a|^2}, - a^1, -a^2, -a^3)$,
the Bondi energy-momentum vector $P_B$ satisfying $\lim_{r^{\star}
\rightarrow \infty} M_H (S_{r^{\star}}) = - \eta(u,P_B) := E^{u}_B$ is
\begin{align*}
E_B & :=P_B^{0}:=
\frac{1}{16\pi}\int_{\mathbb{S}^2}(\c+\a)\volunitdos \\
P_B^{i} & :=
\frac{1}{16\pi}\int_{\mathbb{S}^2}\left(- \triangle_\q\c + (\c+\a)
- 4 \mbox{div}_{\q} \, \sone \right)Y^1_{i}\volunitdos.
\end{align*}
The energy flux decay condition in this case implies (i.e. 
the analogous on Proposition \ref{transversalProp})
\begin{align*}
\sone_A = - \Dh_A \c -\frac{1}{2}\Dh_B\hh^B_{\phantom{B}A}
\end{align*}
and the Bondi momentum simplifies to
\begin{align*}
P_B^{i}=\frac{1}{16\pi}\int_{\mathbb{S}^2}\left(\triangle_\q\c+(\c+\a)
\right)Y^1_{i}\volunitdos, \quad i\in\{1,2,3\}.
\end{align*}
Note that in this case
\begin{align}
E^u_B := - \eta_{\alpha\beta} u^{\alpha} P_B^{\beta} 
= \frac{1}{16\pi}\int_{\mathbb{S}^2}\left(\triangle_\q\c+(\c+\a)
\right) (u^0 - u^i Y^1_{i} )\volunitdos. \label{EuB}
\end{align}
\end{remark}

\vspace{3mm}

\begin{remark}
The relationship between the limit of the
Hawking energy and the Bondi four-momen\-tum for foliations approaching
large spheres has been investigated in \cite{PenroseRindler}
and \cite{Bartnik2004} 
(see also Definition 4.2 in \cite{Sauter2008}). As a useful
check, it is convenient to see how the results in this paper
fit with the results in \cite{Bartnik2004}. The setup there involves
so-called null quasi-spherical coordinates which are adapted
to a foliation by future outgoing null hypersurfaces $\{ {\mathcal N}_z \}$,
each of them
foliated by codimension-two spacelike surfaces $S_{z,r_{B}}$ (we change
Bartnik's notation $r$ to $r_{B}$ to avoid conflict with our notation
above). Each $S_{z,r_{B}}$  has induced metric
isometric to the standard sphere of radius $r_{B}$. In fact, the null
quasi-spherical coordinates $\{z,r_{B},\theta,\phi\}$ are such that
the surface $\{ z=\mbox{const}, r_{B} = \mbox{const}\}$ has induced metric
$r_{B}^2 (d \theta^2 + \sin^2 \theta d \phi) := r_B^2 \q$, which selects the 
diffeomorphism of $S_{r_{B},z}$ with the standard unit sphere
$(\mathbb{S}^2,\q)$. Under asymptotic
conditions along the null hypersurface involving the shear and its
angular derivative, Bartnik shows  among various other things
that the Bondi energy-momentum is well-defined and agrees with
the limit of the Hawking energy along the quasi-spherical foliation
$S_{z,r_{B}}$. More precisely, defining the mass aspect function 
$m = \frac{1}{2} r_{B} \left ( 1 - \frac{1}{4} \vec{H}^2 r_{B}^2 \right )$
of the sphere $S_{z,r_{B}}$  so that $m_H(z,r_{B})=\frac{1}{4\pi}\int_{S^2_{z,r_{B}}}m 
\volunitdos$  (recall that $\vec{H}$ is the mean curvature vector of the 
surface), Bartnik shows that
$ \lim_{r_{B} \rightarrow \infty} m=m_0$ with $m_0\in C^\infty(\mathbb{S}^2)$ and that,
under sufficient decay of suitable components of
the Einstein tensor which include the energy flux decay
condition of this paper,
\begin{align*}
E_B & = \underset{r_{B} \to
\infty}{\lim}\frac{1}{4\pi}\int_{\mathbb{S}^2}m
\volunitdos
=\frac{1}{4\pi}\int_{\mathbb{S}^2}m_0 \volunitdos,  \\
P_B^i & =\underset{r \to
\infty}{\lim}\frac{1}{4\pi}\int_{\mathbb{S}^2}m Y^1_i
\volunitdos =\frac{1}{4\pi}\int_{\mathbb{S}^2}m_0 Y^1_i \volunitdos.
\end{align*}
The null quasi-spherical gauge is such that 
$\theta_k - \frac{2}{r_{B}}$ is automatically a
divergence. Thus, for our results to fit with his it is necessary to select
$r_0$ in the geodesic background foliation $\{ S_{r}\}$ (which is of the form
$r = r_{B} + \xi$) so that 
$\c+2r_0$ is a divergence. An explicit computation shows that, in terms
of our notation, $m_0=\frac{1}{4}(\a-\c-4r_0)$. Bartnik's result 
is recovered from
(\ref{EuB})  because, with the shorthand
$\Psi_{u} := u_0 - u^i Y^1_{i}$,
\begin{align*}
E^u_B & =  
\frac{1}{16\pi}\int_{\mathbb{S}^2}\left(\triangle_\q\c+(\c+\a)
\right) \Psi_{u} \volunitdos \\
& = 
\frac{1}{16\pi}\int_{\mathbb{S}^2}\left(\triangle_\q\c+(\c+\a)
- (\triangle_\q+2)(\c+2r_0) \right) \Psi_{u} \volunitdos \\
& 
=\frac{1}{16\pi}\int_{\mathbb{S}^2}(\a-\c-4r_0) \Psi_{u} \volunitdos 
 =\frac{1}{4\pi}\int_{\mathbb{S}^2}m_0 \Psi_{u} \volunitdos
\end{align*}
where in the second expression we added zero 
in the form $0=  \int_{\mathbb{S}^2}\left(-(\triangle_\q+2)(\c+2r_0)\right) 
\Psi_{u}  \volunitdos$.
\end{remark}

\vspace{3mm}

\section*{Acknowledgments}
Financial support under the project  FIS2012-30926 (MICINN)
is acknowledged. A.S. acknowledges the Ph.D. grant AP2009-0063 (MEC).

\end{document}